\newcommand{\packageGraphicx}{\usepackage{graphicx}}
\newcommand{\packageHyperref}{\usepackage{hyperref}}
\newcommand{\renewrmdefault}{\renewcommand{\rmdefault}{ptm}}
\newcommand{\packageRelsize}{\usepackage{relsize}}
\newcommand{\packageMathabx}{\usepackage{mathabx}}
\newcommand{\packageWasysym}{
  \let\leftmoon\relax \let\rightmoon\relax \let\fullmoon\relax \let\newmoon\relax \let\diameter\relax
  \usepackage{wasysym}}
\newcommand{\packageTextcomp}{\usepackage{textcomp}}
\newcommand{\packageFramed}{\usepackage{framed}}
\newcommand{\packageHyphenat}{\usepackage[htt]{hyphenat}}
\newcommand{\packageColor}{\usepackage[usenames,dvipsnames]{color}}
\newcommand{\doHypersetup}{\hypersetup{bookmarks=true,bookmarksopen=true,bookmarksnumbered=true}}
\newcommand{\packageTocstyle}{\IfFileExists{tocstyle.sty}{\usepackage{tocstyle}\usetocstyle{standard}}{}}
\newcommand{\packageCJK}{\IfFileExists{CJK.sty}{\usepackage{CJK}}{}}
\renewcommand\packageColor\relax
\renewcommand\packageTocstyle\relax
\renewcommand\packageMathabx{\ifx\bigtimes\undefined \usepackage{mathabx} \else \relax \fi}
\renewcommand{\renewrmdefault}{}
\newcommand{\sectionNewpage}{}
\newcommand{\preDoc}{}
\newcommand{\postDoc}{}
\newcommand{\ChapRef}[2]{\SecRef{#1}{#2}}
\newcommand{\SecRef}[2]{section~#1}
\newcommand{\ChapRefUC}[2]{\SecRefUC{#1}{#2}}
\newcommand{\SecRefUC}[2]{Section~#1}
\newcommand{\BookRefLocal}[3]{\hyperref[#1]{\BookRef{#2}{#3}}}
\newcommand{\ChapRefLocal}[3]{\hyperref[#1]{\ChapRef{#2}{#3}}}
\newcommand{\SecRefLocal}[3]{\hyperref[#1]{\SecRef{#2}{#3}}}
\newcommand{\PartRefLocal}[3]{\hyperref[#1]{\PartRef{#2}{#3}}}
\newcommand{\BookRefLocalUC}[3]{\hyperref[#1]{\BookRefUC{#2}{#3}}}
\newcommand{\ChapRefLocalUC}[3]{\hyperref[#1]{\ChapRefUC{#2}{#3}}}
\newcommand{\SecRefLocalUC}[3]{\hyperref[#1]{\SecRefUC{#2}{#3}}}
\newcommand{\PartRefLocalUC}[3]{\hyperref[#1]{\PartRefUC{#2}{#3}}}
\newcommand{\BookRefLocalUN}[2]{\hyperref[#1]{\BookRefUN{#2}}}
\newcommand{\SecRefLocalUN}[2]{\hyperref[#1]{\SecRefUN{#2}}}
\newcommand{\SectionNumberLink}[2]{\hyperref[#1]{#2}}
\newcommand{\Scribtexttt}[1]{{\texttt{#1}}}
\newcommand{\planetName}[1]{PLane\hspace{-0.1ex}T}
\def\empty@finalstrut#1{%
  \unskip\ifhmode\nobreak\fi\vrule\@width\z@\@height\z@\@depth\z@}
\def\no@strut{\global\setbox\@arstrutbox\hbox{%
    \vrule \@height\z@
           \@depth\z@
           \@width\z@}%
    \gdef\@endpbox{\empty@finalstrut\@arstrutbox\par\egroup\hfil}%
}%
\def\yes@strut{\global\setbox\@arstrutbox\hbox{%
    \vrule \@height\arraystretch \ht\strutbox
           \@depth\arraystretch \dp\strutbox
           \@width\z@}%
    \gdef\@endpbox{\@finalstrut\@arstrutbox\par\egroup\hfil}%
}%
\def\@mkpream#1{\@firstamptrue\@lastchclass6
  \let\@preamble\@empty\def\empty@preamble{\add@ins}%
  \let\protect\@unexpandable@protect
  \let\@sharp\relax\let\add@ins\relax
  \let\@startpbox\relax\let\@endpbox\relax
  \@expast{#1}%
  \expandafter\@tfor \expandafter
    \@nextchar \expandafter:\expandafter=\reserved@a\do
       {\@testpach\@nextchar
    \ifcase \@chclass \@classz \or \@classi \or \@classii \or \@classiii
      \or \@classiv \or\@classv \fi\@lastchclass\@chclass}%
  \ifcase \@lastchclass \@acol
      \or \or \@preamerr \@ne\or \@preamerr \tw@\or \or \@acol \fi}
\def\@addamp{%
  \if@firstamp
    \@firstampfalse
    \edef\empty@preamble{\add@ins}%
  \else
    \edef\@preamble{\@preamble &}%
    \edef\empty@preamble{\expandafter\noexpand\empty@preamble &\add@ins}%
  \fi}
\newif\iftw@hlines \tw@hlinesfalse
\def\@xhline{\ifx\reserved@a\hline
               \tw@hlinestrue
             \else\ifx\reserved@a\Hline
               \tw@hlinestrue
             \else
               \tw@hlinesfalse
             \fi\fi
      \iftw@hlines
        \aftergroup\do@after
      \fi
      \ifnum0=`{\fi}%
}
\def\do@after{\emptyrow[\the\doublerulesep]}
\def\emptyrow{\noalign\bgroup\@ifnextchar[\@emptyrow{\@emptyrow[\z@]}}
\def\@emptyrow[#1]{\no@strut\gdef\add@ins{\vrule \@height\z@ \@depth#1 \@width\z@}\egroup%
\empty@preamble\\
\noalign{\yes@strut\gdef\add@ins{\vrule \@height\z@ \@depth\z@ \@width\z@}}%
}
\def\tabrow#1{\noalign\bgroup\@ifnextchar[{\@tabrow{#1}}{\@tabrow{#1}[]}}
\def\@tabrow#1[#2]{\no@strut\egroup#1\ifx.#2.\\\else\\[#2]\fi\noalign{\yes@strut}}
\def\endpltstabular{\crcr\egroup\egroup \egroup}
\let \csname endpltstabular*\endcsname = \endpltstabular
\def\pltstabular{\let\@halignto\@empty\@pltstabular}
\def\@pltstabular{\leavevmode \bgroup \let\@acol\@tabacol
   \let\@classz\@tabclassz
   \let\@classiv\@tabclassiv \let\\\@tabularcr\@stabarray}
\def\@stabarray{\m@th\@ifnextchar[\@sarray{\@sarray[c]}}
\def\@sarray[#1]#2{%
  \bgroup
  \setbox\@arstrutbox\hbox{%
    \vrule \@height\arraystretch\ht\strutbox
           \@depth\arraystretch \dp\strutbox
           \@width\z@}%
  \@mkpream{#2}%
  \edef\@preamble{%
    \ialign \noexpand\@halignto
      \bgroup \@arstrut \@preamble \tabskip\z@skip \cr}%
  \let\@startpbox\@@startpbox \let\@endpbox\@@endpbox
  \let\tabularnewline\\%
    \let\@sharp##%
    \set@typeset@protect
    \lineskip\z@skip\baselineskip\z@skip
    \@preamble}
\newenvironment{bigtabular}{\begin{pltstabular}}{\end{pltstabular}}
\newlength{\stabLeft}
\newcommand{\bigtableleftpad}{\hspace{\stabLeft}}
\newcommand{\atItemizeStart}[0]{\addtolength{\stabLeft}{\labelsep}
                                \addtolength{\stabLeft}{\labelwidth}}
\newenvironment{SingleColumn}{\begin{list}{}{\topsep=0pt\partopsep=0pt%
\listparindent=0pt\itemindent=0pt\labelwidth=0pt\leftmargin=0pt\rightmargin=0pt%
\itemsep=0pt\parsep=0pt}\item}{\end{list}}
\newcommand{\SCodePreSkip}{\vskip\abovedisplayskip}
\newcommand{\SCodePostSkip}{\vskip\belowdisplayskip}
\newcommand{\SVInsetPreSkip}{\vskip\abovedisplayskip}
\newcommand{\SVInsetPostSkip}{\vskip\belowdisplayskip}
\newenvironment{SCentered}{\begin{trivlist}\item \centering}{\end{trivlist}}
\newcommand{\titleAndVersionAndAuthors}[3]{\title{#1\\{\normalsize \SVersionBefore{}#2}}\author{#3}\maketitle}
\newcommand{\titleAndEmptyVersionAndAuthors}[3]{\title{#1}\author{#3}\maketitle}
\newcommand{\SAuthor}[1]{#1}
\newcommand{\SAuthorSep}[1]{\qquad}
\newcommand{\SVersionBefore}[1]{Version }
\newcommand{\SNumberOfAuthors}[1]{}
\let\SOriginalthesubsection\thesubsection
\let\SOriginalthesubsubsection\thesubsubsection
\newcommand{\Ssection}[2]{\section[#1]{#2}\let\thesubsection\SOriginalthesubsection}
\newcommand{\Ssubsection}[2]{\subsection[#1]{#2}\let\thesubsubsection\SOriginalthesubsubsection}
\newcommand{\Ssectionstar}[1]{\section*{#1}\renewcommand*\thesubsection{\arabic{subsection}}\setcounter{subsection}{0}}
\newcommand{\Ssectionstarx}[2]{\Ssectionstar{#2}\phantomsection\addcontentsline{toc}{section}{#1}}
\newcounter{GrouperTemp}
\newcommand{\Snolinkurl}[1]{\nolinkurl{#1}}
\newcommand{\SAuthorinfo}[4]{#1}
\newcommand{\SAuthorPlace}[1]{#1}
\newcommand{\SAuthorEmail}[1]{#1}
\newcommand{\SConferenceInfo}[2]{}
\newcommand{\SCopyrightYear}[1]{}
\newcommand{\SCopyrightData}[1]{}
\newcommand{\Sdoi}[1]{}
\newcommand{\SCategory}[3]{}
\newcommand{\SCategoryPlus}[4]{}
\newcommand{\STerms}[1]{}
\newcommand{\SKeywords}[1]{}
\newcommand{\NoteBox}[1]{\footnote{#1}}
\newcommand{\NoteContent}[1]{#1}
\newcommand{\FootnoteRef}[1]{}
\newcommand{\FootnoteTarget}[1]{}
\newcommand{\FootnoteBlockContent}[1]{}
\newcommand{\Legend}[1]{~

                        \hrule width \hsize height .33pt
                        \vspace{4pt}
                        \legend{#1}}
\newcommand{\FigureTarget}[2]{#1}
\newlength{\FigOrigskip}
\newcommand{\FigureSetRef}{\refstepcounter{figure}}
\newenvironment{Figure}{\begin{figure}\FigureSetRef}{\end{figure}}
\newenvironment{FigureMulti}{\begin{figure*}[t!p]\FigureSetRef}{\end{figure*}}
\newenvironment{Centerfigure}{\begin{Xfigure}\centering\item}{\end{Xfigure}}
\newenvironment{Xfigure}{\begin{list}{}{\leftmargin=0pt\topsep=0pt\parsep=\FigOrigskip\partopsep=0pt}}{\end{list}}
\newenvironment{FigureInside}{}{}
\newcommand{\Centertext}[1]{\begin{center}#1\end{center}}
\newcommand{\SColorize}[2]{\color{#1}{#2}}
\newcommand{\SHyphen}[1]{#1}
\newcommand{\inColor}[2]{{\SHyphen{\Scribtexttt{\SColorize{#1}{#2}}}}}
\definecolor{PaleBlue}{rgb}{0.90,0.90,1.0}
\definecolor{LightGray}{rgb}{0.90,0.90,0.90}
\definecolor{CommentColor}{rgb}{0.76,0.45,0.12}
\definecolor{ParenColor}{rgb}{0.52,0.24,0.14}
\definecolor{IdentifierColor}{rgb}{0.15,0.15,0.50}
\definecolor{ResultColor}{rgb}{0.0,0.0,0.69}
\definecolor{ValueColor}{rgb}{0.13,0.55,0.13}
\definecolor{OutputColor}{rgb}{0.59,0.00,0.59}
\newcommand{\RktSym}[1]{\inColor{IdentifierColor}{#1}}
\newcommand{\RBackgroundLabel}[1]{}
\newenvironment{AutoBibliography}{\begin{small}}{\end{small}}
\newcommand{\Autobibentry}[1]{\hspace{0.05\linewidth}\parbox[t]{0.95\linewidth}{\parindent=-0.05\linewidth#1\vspace{1.0ex}}}
\newlength{\ABcollength}
\newcommand{\Autobibref}[1]{#1}
\renewcommand{\titleAndVersionAndAuthors}[3]{\title{#1}#3\maketitle}
\renewcommand{\titleAndEmptyVersionAndAuthors}[3]{\titleAndVersionAndAuthors{#1}{#2}{#3}}
\def\SAuthor#1{\SAutoAuthor#1\SAutoAuthorDone{#1}}
\def\SAutoAuthorDone#1{}
\def\SAutoAuthor{\futurelet\next\SAutoAuthorX}
\def\SAutoAuthorX{\ifx\next\SAuthorinfo \let\Snext\relax \else \let\Snext\SToAuthorDone \fi \Snext}
\def\SToAuthorDone{\futurelet\next\SToAuthorDoneX}
\def\SToAuthorDoneX#1{\ifx\next\SAutoAuthorDone \let\Snext\SAddAuthorInfo \else \let\Snext\SToAuthorDone \fi \Snext}
\newcommand{\SAddAuthorInfo}[1]{\SAuthorinfo{#1}{}{}}
\renewcommand{\SAuthorinfo}[4]{\author{#1}{#2}{#3}{#4}}
\renewcommand{\SAuthorSep}[1]{}
\renewcommand{\SAuthorPlace}[1]{\affiliation{#1}}
\renewcommand{\SAuthorEmail}[1]{\email{#1}}
\renewcommand{\SConferenceInfo}[2]{\conferenceinfo{#1}{#2}}
\renewcommand{\SCopyrightYear}[1]{\copyrightyear{#1}}
\renewcommand{\SCopyrightData}[1]{\copyrightdata{#1}}
\renewcommand{\SCategory}[3]{\category{#1}{#2}{#3}}
\renewcommand{\SCategoryPlus}[4]{\category{#1}{#2}{#3}[#4]}
\renewcommand{\STerms}[1]{\terms{#1}}
\renewcommand{\SKeywords}[1]{\keywords{#1}}
\begin{document}
\preDoc

\begin{abstract}The effectiveness of concolic testing deteriorates as the size of
 programs increases. A promising way out is to concolic{-}test programs in
 a component{-}by{-}component fashion, e.g., one function or class at a time.
 Alas, this idea hits an important roadblock in modern languages such as
 JavaScript, Python, and Racket. In these languages, components expect
 functions, objects, and even classes as inputs.  The crux of the problem
 is that existing  concolic testing techniques cannot faithfully capture
 the complex interactions between a higher{-}order program and its inputs
 in order to distill it in a first{-}order formula that an SMT solver can work
 with.

In this paper, we take the first step towards solving the problem; we
offer a design, semantics, and prototype for concolic testing of higher{-}order functions.
Inspired by work on higher{-}order symbolic execution, our model
constructs inputs for higher{-}order functions with a
canonical shape. This enables the concolic tester to keep track of which
pieces of the control{-}flow path of the higher{-}order function depend on
the shape of its input and which do not. The concolic tester encodes the
pieces that do not depend on the shape of the input as a first{-}order
formula. Subsequently, similar to a first{-}order concolic tester, it
leverages an SMT solver to produce another input with the same shape  but
that explores a different control{-}flow path of the higher{-}order function.
As a separate dimension, the concolic tester iteratively explores the
canonical shapes for the input and, investigating all the ways a
higher{-}order function can interact with its input, searching for bugs.

To validate our design, we prove that if a higher{-}order function has a bug,
our concolic tester will eventually construct an input that triggers the
bug. Using our design as a blueprint, we
implement a prototype concolic tester and confirm that it
discovers bugs in a variety of higher{-}order programs from the literature.\end{abstract}\titleAndEmptyVersionAndAuthors{Dynamic Symbolic Execution of Higher{-}Order Functions}{}{\SNumberOfAuthors{3}\SAuthor{\SAuthorinfo{Shu{-}Hung You}{}{\SAuthorPlace{\institution{PLT@Northwestern}}}{\SAuthorEmail{shu-hung.you@eecs.northwestern.edu}}}\SAuthorSep{}\SAuthor{\SAuthorinfo{Robert Bruce Findler}{}{\SAuthorPlace{\institution{PLT@Northwestern}}}{\SAuthorEmail{robby@cs.northwestern.edu}}}\SAuthorSep{}\SAuthor{\SAuthorinfo{Christos Dimoulas}{}{\SAuthorPlace{\institution{PLT@Northwestern}}}{\SAuthorEmail{chrdimo@eecs.northwestern.edu}}}}
\label{t:x28part_x22Dynamicx5fSymbolicx5fExecutionx5fofx5fHigherx2dOrderx5fFunctionsx22x29}

\noindent 

\noindent

\sectionNewpage

\Ssection{Introduction}{Introduction}\label{t:x28part_x22Introductionx22x29}

Concolic testing\Autobibref{~(\hyperref[t:x28autobib_x22Cristian_Cadar_and_Dawson_EnglerExecution_Generated_Test_Casesx3a_How_to_Make_Systems_Code_Crash_ItselfIn_Procx2e_International_SPINConference_on_Model_Cheching_Softwarex2c_ppx2e_2x2dx2d232005x22x29]{\AutobibLink{Cadar and Engler}} \hyperref[t:x28autobib_x22Cristian_Cadar_and_Dawson_EnglerExecution_Generated_Test_Casesx3a_How_to_Make_Systems_Code_Crash_ItselfIn_Procx2e_International_SPINConference_on_Model_Cheching_Softwarex2c_ppx2e_2x2dx2d232005x22x29]{\AutobibLink{2005}}; \hyperref[t:x28autobib_x22Patrice_Godefroidx2c_Nils_Klarlundx2c_and_Koushik_SenDARTx3a_Directed_Automated_Random_TestingIn_Procx2e_ACM_Conference_on_Programming_Language_Design_and_Implementationx2c_ppx2e_213x2dx2d2232005x22x29]{\AutobibLink{Godefroid et al\Sendabbrev{.}}} \hyperref[t:x28autobib_x22Patrice_Godefroidx2c_Nils_Klarlundx2c_and_Koushik_SenDARTx3a_Directed_Automated_Random_TestingIn_Procx2e_ACM_Conference_on_Programming_Language_Design_and_Implementationx2c_ppx2e_213x2dx2d2232005x22x29]{\AutobibLink{2005}})} explores a
program{'}s behavior in a gradual fashion to discover bugs. First, the
concolic tester supplies a random input to the program under test
(hereafter the \emph{user program}\NoteBox{\NoteContent{We underline the first occurrence of each new technical term.}}) and
monitors how the random input forces the user program evaluation down a
specific control{-}flow path.  The concolic tester records this path as a
first{-}order formula and uses an SMT solver to induce a new input that is
designed to force the user program to take a different control{-}flow path. The
process repeats until the concolic tester discovers a bug or times out. In
effect, concolic testing enhances random testing with symbolic execution
to guide input generation to hard{-}to{-}reach corners of the control flow of
a program.

Testament to the success of the technique is the number and diversity of
its adaptations to (i) different languages and platforms:
CUTE\Autobibref{~(\hyperref[t:x28autobib_x22Koushik_Senx2c_Darko_Marinovx2c_and_Gul_AghaCUTEx3a_A_Concolic_Unit_Testing_Engine_for_CIn_Procx2e_International_Symposium_on_on_the_Foundations_of_Software_Engineeringx2c_ppx2e_263x2dx2d2722005x22x29]{\AutobibLink{Sen et al\Sendabbrev{.}}} \hyperref[t:x28autobib_x22Koushik_Senx2c_Darko_Marinovx2c_and_Gul_AghaCUTEx3a_A_Concolic_Unit_Testing_Engine_for_CIn_Procx2e_International_Symposium_on_on_the_Foundations_of_Software_Engineeringx2c_ppx2e_263x2dx2d2722005x22x29]{\AutobibLink{2005}})} and CREST\Autobibref{~(\hyperref[t:x28autobib_x22Jacob_Burnim_and_Koushik_SenHeuristics_for_Scalable_Dynamic_Test_GenerationIn_Procx2e_ACMx2fIEEE_International_Conference_on_Automated_Software_Engineeringx2c_ppx2e_443x2dx2d4462008x22x29]{\AutobibLink{Burnim and Sen}} \hyperref[t:x28autobib_x22Jacob_Burnim_and_Koushik_SenHeuristics_for_Scalable_Dynamic_Test_GenerationIn_Procx2e_ACMx2fIEEE_International_Conference_on_Automated_Software_Engineeringx2c_ppx2e_443x2dx2d4462008x22x29]{\AutobibLink{2008}})} for C,
KLOVER\Autobibref{~(\hyperref[t:x28autobib_x22Guodong_Lix2c_Indradeep_Ghoshx2c_and_Sreeranga_Px2e_RajanKLOVERx3a_A_Symbolic_Execution_and_Automatic_Test_Generation_Tool_for_Cx2bx2b_ProgramsIn_Procx2e_International_Conference_on_Computer_Aided_Verificationx2c_ppx2e_609x2dx2d6152011x22x29]{\AutobibLink{Li et al\Sendabbrev{.}}} \hyperref[t:x28autobib_x22Guodong_Lix2c_Indradeep_Ghoshx2c_and_Sreeranga_Px2e_RajanKLOVERx3a_A_Symbolic_Execution_and_Automatic_Test_Generation_Tool_for_Cx2bx2b_ProgramsIn_Procx2e_International_Conference_on_Computer_Aided_Verificationx2c_ppx2e_609x2dx2d6152011x22x29]{\AutobibLink{2011}})} for C++, jCUTE\Autobibref{~(\hyperref[t:x28autobib_x22Koushik_Sen_and_Gul_AghaCUTE_and_jCUTEx3a_Concolic_Unit_Testing_and_Explicit_Path_Modelx2dchecking_ToolsIn_Procx2e_International_Conference_on_Computer_Aided_Verificationx2c_ppx2e_419x2dx2d4232006x22x29]{\AutobibLink{Sen and Agha}} \hyperref[t:x28autobib_x22Koushik_Sen_and_Gul_AghaCUTE_and_jCUTEx3a_Concolic_Unit_Testing_and_Explicit_Path_Modelx2dchecking_ToolsIn_Procx2e_International_Conference_on_Computer_Aided_Verificationx2c_ppx2e_419x2dx2d4232006x22x29]{\AutobibLink{2006}})} and
JDart\Autobibref{~(\hyperref[t:x28autobib_x22Marko_Dimjax161evix107x2c_Dimitra_Giannakopouloux2c_Falk_Howarx2c_Falk_Howarx2c_Falk_Howarx2c_and_Falk_HowarThe_Dartx2c_the_Psycox2c_and_the_Doopx3a_Concolic_Execution_in_JavaACM_SIGSOFT_Software_Engineering_Notes_40x281x29x2c_ppx2e_1x2dx2d52015x22x29]{\AutobibLink{Dimja\v{s}evi\'{c} et al\Sendabbrev{.}}} \hyperref[t:x28autobib_x22Marko_Dimjax161evix107x2c_Dimitra_Giannakopouloux2c_Falk_Howarx2c_Falk_Howarx2c_Falk_Howarx2c_and_Falk_HowarThe_Dartx2c_the_Psycox2c_and_the_Doopx3a_Concolic_Execution_in_JavaACM_SIGSOFT_Software_Engineering_Notes_40x281x29x2c_ppx2e_1x2dx2d52015x22x29]{\AutobibLink{2015}})} for Java, Jalangi\Autobibref{~(\hyperref[t:x28autobib_x22Koushik_Senx2c_Swaroop_Kalasapurx2c_Brutch_Tasneemx2c_and_Simon_GibbsJalangix3a_A_Selective_Recordx2dreplay_and_Dynamic_Analysis_Framework_for_JavaScriptIn_Procx2e_International_Symposium_on_on_the_Foundations_of_Software_Engineeringx2c_ppx2e_488x2dx2d4982013x22x29]{\AutobibLink{Sen et al\Sendabbrev{.}}} \hyperref[t:x28autobib_x22Koushik_Senx2c_Swaroop_Kalasapurx2c_Brutch_Tasneemx2c_and_Simon_GibbsJalangix3a_A_Selective_Recordx2dreplay_and_Dynamic_Analysis_Framework_for_JavaScriptIn_Procx2e_International_Symposium_on_on_the_Foundations_of_Software_Engineeringx2c_ppx2e_488x2dx2d4982013x22x29]{\AutobibLink{2013}})} and
SymJS\Autobibref{~(\hyperref[t:x28autobib_x22Li_Guodongx2c_Esben_Andreasenx2c_and_Indradeep_GhoshSymJSx3a_Automatic_Symbolic_Testing_of_JavaScript_Web_ApplicationsIn_Procx2e_International_Symposium_on_on_the_Foundations_of_Software_Engineeringx2c_ppx2e_449x2dx2d4592014x22x29]{\AutobibLink{Guodong et al\Sendabbrev{.}}} \hyperref[t:x28autobib_x22Li_Guodongx2c_Esben_Andreasenx2c_and_Indradeep_GhoshSymJSx3a_Automatic_Symbolic_Testing_of_JavaScript_Web_ApplicationsIn_Procx2e_International_Symposium_on_on_the_Foundations_of_Software_Engineeringx2c_ppx2e_449x2dx2d4592014x22x29]{\AutobibLink{2014}})} for JavaScript and CutEr\Autobibref{~(\hyperref[t:x28autobib_x22Aggelos_Giantsiosx2c_Nikolaos_Papaspyroux2c_and_Konstantinos_SagonasConcolic_Testing_for_Functional_LanguagesIn_Procx2e_ACM_International_Conference_on_Principles_and_Practice_of_Declarative_Programmingx2c_ppx2e_137x2dx2d1482015x22x29]{\AutobibLink{Giantsios et al\Sendabbrev{.}}} \hyperref[t:x28autobib_x22Aggelos_Giantsiosx2c_Nikolaos_Papaspyroux2c_and_Konstantinos_SagonasConcolic_Testing_for_Functional_LanguagesIn_Procx2e_ACM_International_Conference_on_Principles_and_Practice_of_Declarative_Programmingx2c_ppx2e_137x2dx2d1482015x22x29]{\AutobibLink{2015}})} for
Erlang, Pex\Autobibref{~(\hyperref[t:x28autobib_x22Nikolai_Tillmann_and_Jonathan_de_HalleuxPexx3a_White_Box_Test_Generation_for_x2eNETIn_Procx2e_International_Conference_on_Tests_and_Proofsx2c_ppx2e_134x2dx2d1532008x22x29]{\AutobibLink{Tillmann and Halleux}} \hyperref[t:x28autobib_x22Nikolai_Tillmann_and_Jonathan_de_HalleuxPexx3a_White_Box_Test_Generation_for_x2eNETIn_Procx2e_International_Conference_on_Tests_and_Proofsx2c_ppx2e_134x2dx2d1532008x22x29]{\AutobibLink{2008}})} for .NET, KLEE for LLVM\Autobibref{~(\hyperref[t:x28autobib_x22Cristian_Cadarx2c_Daniel_Dunbarx2c_and_Dawson_EnglerKLEEx3a_Unassisted_and_Automatic_Generation_of_Highx2dcoverage_Tests_for_Complex_Systems_ProgramsIn_Procx2e_USENIX_Symposium_on_Operating_Systems_Design_and_Implementationx2c_ppx2e_209x2dx2d2242008x22x29]{\AutobibLink{Cadar et al\Sendabbrev{.}}} \hyperref[t:x28autobib_x22Cristian_Cadarx2c_Daniel_Dunbarx2c_and_Dawson_EnglerKLEEx3a_Unassisted_and_Automatic_Generation_of_Highx2dcoverage_Tests_for_Complex_Systems_ProgramsIn_Procx2e_USENIX_Symposium_on_Operating_Systems_Design_and_Implementationx2c_ppx2e_209x2dx2d2242008x22x29]{\AutobibLink{2008}})};
and (ii) different application domains: security\Autobibref{~(\hyperref[t:x28autobib_x22Patrice_Godefroidx2c_Michael_Yx2e_Levinx2c_and_David_MolnarAutomated_Whitebox_Fuzz_TestingIn_Procx2e_Network_and_Distributed_System_Security_Symposium2008x22x29]{\AutobibLink{Godefroid et al\Sendabbrev{.}}} \hyperref[t:x28autobib_x22Patrice_Godefroidx2c_Michael_Yx2e_Levinx2c_and_David_MolnarAutomated_Whitebox_Fuzz_TestingIn_Procx2e_Network_and_Distributed_System_Security_Symposium2008x22x29]{\AutobibLink{2008}}, \hyperref[t:x28autobib_x22Patrice_Godefroidx2c_Michael_Yx2e_Levinx2c_and_David_MolnarSAGEx3a_Whitebox_Fuzzing_for_Security_TestingACM_Queue_10x281x29x2c_ppx2e_20x3a20x2dx2d20x3a272012x22x29]{\AutobibLink{2012}})}, mobile apps\Autobibref{~(\hyperref[t:x28autobib_x22Saswat_Anandx2c_Mayur_Naikx2c_Mary_Jean_Harroldx2c_and_Hongseok_YangAutomated_Concolic_Testing_of_Smartphone_AppsIn_Procx2e_International_Symposium_on_on_the_Foundations_of_Software_Engineeringx2c_ppx2e_59x3a1x2dx2d59x3a112012x22x29]{\AutobibLink{Anand et al\Sendabbrev{.}}} \hyperref[t:x28autobib_x22Saswat_Anandx2c_Mayur_Naikx2c_Mary_Jean_Harroldx2c_and_Hongseok_YangAutomated_Concolic_Testing_of_Smartphone_AppsIn_Procx2e_International_Symposium_on_on_the_Foundations_of_Software_Engineeringx2c_ppx2e_59x3a1x2dx2d59x3a112012x22x29]{\AutobibLink{2012}})}, database
applications\Autobibref{~(\hyperref[t:x28autobib_x22Michael_Emmix2c_Rupak_Majumdarx2c_and_Koushik_SenDynamic_Test_Input_Generation_for_Database_ApplicationsIn_Procx2e_International_Symposium_on_Software_Testing_and_Analysisx2c_ppx2e_151x2dx2d1622007x22x29]{\AutobibLink{Emmi et al\Sendabbrev{.}}} \hyperref[t:x28autobib_x22Michael_Emmix2c_Rupak_Majumdarx2c_and_Koushik_SenDynamic_Test_Input_Generation_for_Database_ApplicationsIn_Procx2e_International_Symposium_on_Software_Testing_and_Analysisx2c_ppx2e_151x2dx2d1622007x22x29]{\AutobibLink{2007}})}, concurrent
programs\Autobibref{~(\hyperref[t:x28autobib_x22Azadeh_Farzanx2c_Andreas_Holzerx2c_Niloofar_Razavix2c_and_Helmut_VeithCon2Colic_TestingIn_Procx2e_International_Symposium_on_on_the_Foundations_of_Software_Engineeringx2c_ppx2e_37x2dx2d472013x22x29]{\AutobibLink{Farzan et al\Sendabbrev{.}}} \hyperref[t:x28autobib_x22Azadeh_Farzanx2c_Andreas_Holzerx2c_Niloofar_Razavix2c_and_Helmut_VeithCon2Colic_TestingIn_Procx2e_International_Symposium_on_on_the_Foundations_of_Software_Engineeringx2c_ppx2e_37x2dx2d472013x22x29]{\AutobibLink{2013}}; \hyperref[t:x28autobib_x22Niloofar_Razavix2c_Franjo_Ivanx10dix107x2c_Vineet_Kahlonx2c_and_Aarti_GuptaConcurrent_Test_Generation_Using_Concolic_Multix2dtrace_AnalysisIn_Procx2e_Asian_Symposium_on_Programming_Languages_and_Systemsx2c_ppx2e_239x2dx2d2552012x22x29]{\AutobibLink{Razavi et al\Sendabbrev{.}}} \hyperref[t:x28autobib_x22Niloofar_Razavix2c_Franjo_Ivanx10dix107x2c_Vineet_Kahlonx2c_and_Aarti_GuptaConcurrent_Test_Generation_Using_Concolic_Multix2dtrace_AnalysisIn_Procx2e_Asian_Symposium_on_Programming_Languages_and_Systemsx2c_ppx2e_239x2dx2d2552012x22x29]{\AutobibLink{2012}})}, embedded
systems\Autobibref{~(\hyperref[t:x28autobib_x22Yunho_Kim_and_Moonzoo_KimSCOREx3a_A_Scalable_Concolic_Testing_Tool_for_Reliable_Embedded_SoftwareIn_Procx2e_International_Symposium_on_on_the_Foundations_of_Software_Engineeringx2c_ppx2e_420x2dx2d4232011x22x29]{\AutobibLink{Kim and Kim}} \hyperref[t:x28autobib_x22Yunho_Kim_and_Moonzoo_KimSCOREx3a_A_Scalable_Concolic_Testing_Tool_for_Reliable_Embedded_SoftwareIn_Procx2e_International_Symposium_on_on_the_Foundations_of_Software_Engineeringx2c_ppx2e_420x2dx2d4232011x22x29]{\AutobibLink{2011}})}, GPU programming\Autobibref{~(\hyperref[t:x28autobib_x22Guodong_Lix2c_Peng_Lix2c_Geof_Sawayax2c_Ganesh_Gopalakrishnanx2c_Indradeep_Ghoshx2c_and_Sreeranga_Px2e_RajanGKLEEx3a_Concolic_Verification_and_Test_Generation_for_GPUsIn_Procx2e_Symposium_on_Principles_and_Practice_of_Parallel_Programmingx2c_ppx2e_215x2dx2d2242012x22x29]{\AutobibLink{Li et al\Sendabbrev{.}}} \hyperref[t:x28autobib_x22Guodong_Lix2c_Peng_Lix2c_Geof_Sawayax2c_Ganesh_Gopalakrishnanx2c_Indradeep_Ghoshx2c_and_Sreeranga_Px2e_RajanGKLEEx3a_Concolic_Verification_and_Test_Generation_for_GPUsIn_Procx2e_Symposium_on_Principles_and_Practice_of_Parallel_Programmingx2c_ppx2e_215x2dx2d2242012x22x29]{\AutobibLink{2012}})} and deep
learning\Autobibref{~(\hyperref[t:x28autobib_x22Youcheng_Sunx2c_Min_Wux2c_Wenjie_Ruanx2c_Xiaowei_Huangx2c_Marta_Kwiatkowskax2c_and_Daniel_KroeningConcolic_Testing_for_Deep_Neural_NetworksIn_Procx2e_ACMx2fIEEE_International_Conference_on_Automated_Software_Engineeringx2c_ppx2e_109x2dx2d1192018x22x29]{\AutobibLink{Sun et al\Sendabbrev{.}}} \hyperref[t:x28autobib_x22Youcheng_Sunx2c_Min_Wux2c_Wenjie_Ruanx2c_Xiaowei_Huangx2c_Marta_Kwiatkowskax2c_and_Daniel_KroeningConcolic_Testing_for_Deep_Neural_NetworksIn_Procx2e_ACMx2fIEEE_International_Conference_on_Automated_Software_Engineeringx2c_ppx2e_109x2dx2d1192018x22x29]{\AutobibLink{2018}})}.  However, not all is coming up roses. When the size of
a program increases, so does the number of control{-}flow paths that the concolic
tester needs to explore.  As a result concolic testing of large programs
becomes ineffective and misses bugs.

An alternative to applying concolic testing to a whole program is to test
program components. Unfortunately, existing concolic testers are not
prepared to effectively test components of programs written in
modern higher{-}order languages like JavaScript, Python and Racket.  When concolic
testers deal with programs that consume functions or objects, they fall back
to heuristics\Autobibref{~(\hyperref[t:x28autobib_x22Pieter_Koopman_and_Rinus_PlasmeijerAutomatic_Testing_of_Higher_Order_FunctionsIn_Procx2e_Asian_Symposium_on_Programming_Languages_and_Systemsx2c_ppx2e_148x2dx2d1642006x22x29]{\AutobibLink{Koopman and Plasmeijer}} \hyperref[t:x28autobib_x22Pieter_Koopman_and_Rinus_PlasmeijerAutomatic_Testing_of_Higher_Order_FunctionsIn_Procx2e_Asian_Symposium_on_Programming_Languages_and_Systemsx2c_ppx2e_148x2dx2d1642006x22x29]{\AutobibLink{2006}}; \hyperref[t:x28autobib_x22Lian_Lix2c_Yi_Lux2c_and_Jingling_XueDynamic_Symbolic_Execution_for_PolymorphismIn_Procx2e_International_Conference_on_Compiler_Constructionx2c_ppx2e_120x2dx2d1302017x22x29]{\AutobibLink{Li et al\Sendabbrev{.}}} \hyperref[t:x28autobib_x22Lian_Lix2c_Yi_Lux2c_and_Jingling_XueDynamic_Symbolic_Execution_for_PolymorphismIn_Procx2e_International_Conference_on_Compiler_Constructionx2c_ppx2e_120x2dx2d1302017x22x29]{\AutobibLink{2017}}; \hyperref[t:x28autobib_x22Marija_Selakovicx2c_Michael_Pradelx2c_Rezwana_Karimx2c_and_Frank_TipTest_Generation_for_Higherx2dorder_Functions_in_Dynamic_LanguagesProceedings_of_the_ACM_on_Programming_Languages_x28OOPSLAx29_2x2c_ppx2e_161x3a1x2dx2d161x3a272018x22x29]{\AutobibLink{Selakovic et al\Sendabbrev{.}}} \hyperref[t:x28autobib_x22Marija_Selakovicx2c_Michael_Pradelx2c_Rezwana_Karimx2c_and_Frank_TipTest_Generation_for_Higherx2dorder_Functions_in_Dynamic_LanguagesProceedings_of_the_ACM_on_Programming_Languages_x28OOPSLAx29_2x2c_ppx2e_161x3a1x2dx2d161x3a272018x22x29]{\AutobibLink{2018}})} or are unable to call back into higher{-}order
values\Autobibref{~(\hyperref[t:x28autobib_x22Aggelos_Giantsiosx2c_Nikolaos_Papaspyroux2c_and_Konstantinos_SagonasConcolic_Testing_for_Functional_LanguagesIn_Procx2e_ACM_International_Conference_on_Principles_and_Practice_of_Declarative_Programmingx2c_ppx2e_137x2dx2d1482015x22x29]{\AutobibLink{Giantsios et al\Sendabbrev{.}}} \hyperref[t:x28autobib_x22Aggelos_Giantsiosx2c_Nikolaos_Papaspyroux2c_and_Konstantinos_SagonasConcolic_Testing_for_Functional_LanguagesIn_Procx2e_ACM_International_Conference_on_Principles_and_Practice_of_Declarative_Programmingx2c_ppx2e_137x2dx2d1482015x22x29]{\AutobibLink{2015}})}. In general, there are inputs that existing concolic testers cannot generate
and these inputs are necessary to explore all of the behavior of a
higher{-}order program.

The reason for this limitation is fundamental;
concolic testers rely on SMT solvers that can only deal with first{-}order formulas.
Specifically, in a higher{-}order setting, inputs affect not only the control{-}flow
of the user program but the
user program itself may affect the control{-}flow of its inputs. This
inter{-}dependency implies that a control{-}flow path of a higher{-}order program
cannot be described as a symbolic formula of constraints based solely on the
first{-}order properties of the program{'}s inputs. The control flow of the program also
depends on the behavior of the inputs.

Starting from this observation, our \textbf{contribution}
is the design of a concolic tester for
higher{-}order programs in a functional, dynamically{-}typed setting.
Our key insight is that it is possible to split the search for
a bug in two levels: at one level we describe entire classes of behavior
of the input, and at the second level we exploit the solver to search within
a specific pattern of behavior.

In
essence, the formulas capture the first{-}order properties of the values that flow between
the user program and its input, for a fixed control structure of the input.
Thus, just like for first{-}order code, an SMT solver can produce
 inputs that explore
different control{-}flow paths in the user program.  Intuitively, the
concolic tester and the user program are in a
conversation and we use the SMT solver to search the space of the
first{-}order properties of the values they exchange after deciding on the
{``}strategy{''} the concolic tester follows to produce its answers.

To explore all of the behavior of the user program, however, the concolic tester must also explore different patterns of behavior
of the input. That is, it must explore all the different strategies the concolic tester
can employ in the conversation with the user program.
Specifically, this search level varies the higher{-}order control structure of values
that the concolic tester sends to the user program.
Building on work on higher{-}order symbolic execution \Autobibref{~(\hyperref[t:x28autobib_x22Phxfac_Nguyx1ec5nx2c_Sam_Tobinx2dHochstadtx2c_and_David_Van_HornRelatively_complete_counterexamples_for_higherx2dorder_programsx2eIn_Procx2e_ACM_Conference_on_Programming_Language_Design_and_Implementationx2c_ppx2e_446x2dx2d4562015x22x29]{\AutobibLink{Nguy\~{\^{e}}n et al\Sendabbrev{.}}} \hyperref[t:x28autobib_x22Phxfac_Nguyx1ec5nx2c_Sam_Tobinx2dHochstadtx2c_and_David_Van_HornSoft_Contract_VerificationIn_Procx2e_ACM_International_Conference_on_Functional_Programmingx2c_ppx2e_139x2dx2d1522014x22x29]{\AutobibLink{2014}}, \hyperref[t:x28autobib_x22Phxfac_Nguyx1ec5nx2c_Sam_Tobinx2dHochstadtx2c_and_David_Van_HornRelatively_complete_counterexamples_for_higherx2dorder_programsx2eIn_Procx2e_ACM_Conference_on_Programming_Language_Design_and_Implementationx2c_ppx2e_446x2dx2d4562015x22x29]{\AutobibLink{2015}}, \hyperref[t:x28autobib_x22Phxfac_Nguyx1ec5nx2c_Sam_Tobinx2dHochstadtx2c_and_David_Van_HornHigher_order_symbolic_execution_for_contract_verification_and_refutationx2eJournal_of_Functional_Programmingx2827x29x2c_ppx2e_e3x3a1x2dx2de3x3a542017x22x29]{\AutobibLink{2017}})}, we design a canonical form for functions that captures all
possible patterns of control structure
as different syntactic shapes of the body of a function.
Starting from the simplest shape, the constant function, we
gradually evolve the structural complexity of the inputs. In sum,
the SMT{-}driven exploration of the first{-}order aspects of the
inputs and the orthogonal evolution of their shape
work hand{-}in{-}hand to explore the space of the
higher{-}order function inputs to the user program.

Our design comes with a formal
model that justifies it. The model specifies a concolic tester for a functional
language and describes how the tester can evolve an initial, random
input to explore different parts of the behavior of the user program.
We validate our design idea in two ways:

\begin{itemize}\atItemizeStart

\item We prove that if a higher{-}order program in our model has a bug, then our model
 eventually constructs an input that
triggers the bug;

\item We provide a prototype concolic tester based on the model and
use it to uncover bugs in examples from the literature on concolic
testing and symbolic execution.\end{itemize}

The remainder of the paper is organized as follows.
\ChapRefUC{\SectionNumberLink{t:x28part_x22secx3abackgroundx22x29}{2}}{A Refresher on First{-}Order Concolic Testing} and \ChapRef{\SectionNumberLink{t:x28part_x22secx3aapproximatex22x29}{3}}{Canonical Functions Are All We Need} revisit
the two foundations of our work:
first{-}order concolic testing and function forms with
canonical control structure.
\ChapRefUC{\SectionNumberLink{t:x28part_x22secx3ahowx2dgeneratex22x29}{4}}{Directed Evolution of Canonical Functions} builds on these two elements
to demonstrate how our higher{-}order concolic tester gradually evolves
inputs to discover a bug in a concrete higher{-}order example.
  \ChapRefUC{\SectionNumberLink{t:x28part_x22secx3amodelx22x29}{5}}{Formalizing Higher{-}Order Concolic Testing}  introduces the formal model for our concolic
  tester and  \ChapRef{\SectionNumberLink{t:x28part_x22secx3atheoryx22x29}{6}}{Correctness of Higher{-}Order Concolic Testing} establishes its formal correctness
  properties. \ChapRefUC{\SectionNumberLink{t:x28part_x22secx3aprototypex22x29}{7}}{Prototype Implementation} describes our prototype
  implementation and how we use it to discover bugs in
  a corpus of examples from the literature.
  Finally, \ChapRefUC{\SectionNumberLink{t:x28part_x22secx3arelatedx22x29}{8}}{Related Work} places our results in the context of related
   work.

\sectionNewpage

\Ssection{A Refresher on First{-}Order Concolic Testing}{A Refresher on First{-}Order Concolic Testing}\label{t:x28part_x22secx3abackgroundx22x29}

The central idea behind first{-}order concolic testing is beautifully simple.
The concolic tester generates a first set of random inputs for the user
program and uses them to run the program in two modes at the same time. The
concrete mode is the same as ordinary evaluation. The symbolic one constructs a
symbolic formula that represents the path of the control{-}flow of the
program that the generated inputs exercise. After the concolic tester obtains
the formula, it negates one of its pieces and asks an SMT solver to
produce, if possible, a model for the tweaked formula. The model
corresponds to a new set of inputs that cause the user program
to follow a different control{-}flow path than the one due to the original
inputs {---} exactly the control{-}flow path that corresponds to the tweaked
formula. The concolic tester repeats this process and systematically
examines the control{-}flow graph of the user program until it eventually
discovers a bug or crosses a pre{-}defined time or memory threshold.

To make the discussion concrete, consider the following example:

\noindent \begin{SCentered}\raisebox{-2.8429687499999954bp}{\makebox[129.19062499999998bp][l]{\includegraphics[trim=2.4000000000000004 2.4000000000000004 2.4000000000000004 2.4000000000000004]{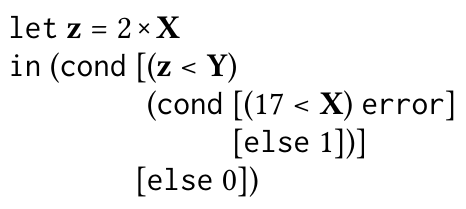}}}\end{SCentered}

\noindent Here, the inputs are represented
using the variables \raisebox{-3.1874999999999982bp}{\makebox[6.8921875bp][l]{\includegraphics[trim=2.4000000000000004 2.4000000000000004 2.4000000000000004 2.4000000000000004]{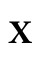}}} and \raisebox{-3.1874999999999982bp}{\makebox[5.98984375bp][l]{\includegraphics[trim=2.4000000000000004 2.4000000000000004 2.4000000000000004 2.4000000000000004]{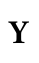}}}. The goal of the concolic tester is to generate \raisebox{-3.1874999999999982bp}{\makebox[6.8921875bp][l]{\includegraphics[trim=2.4000000000000004 2.4000000000000004 2.4000000000000004 2.4000000000000004]{pict_2.pdf}}} and \raisebox{-3.1874999999999982bp}{\makebox[5.98984375bp][l]{\includegraphics[trim=2.4000000000000004 2.4000000000000004 2.4000000000000004 2.4000000000000004]{pict_3.pdf}}} that
trigger the \raisebox{-3.1874999999999982bp}{\makebox[24.0bp][l]{\includegraphics[trim=2.4000000000000004 2.4000000000000004 2.4000000000000004 2.4000000000000004]{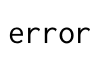}}}. In other words
the concolic tester has to cause both
\raisebox{-3.1874999999999982bp}{\makebox[34.30468750000001bp][l]{\includegraphics[trim=2.4000000000000004 2.4000000000000004 2.4000000000000004 2.4000000000000004]{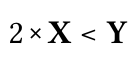}}} and \raisebox{-3.1874999999999982bp}{\makebox[25.8984375bp][l]{\includegraphics[trim=2.4000000000000004 2.4000000000000004 2.4000000000000004 2.4000000000000004]{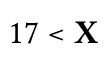}}} to be true.

Let{'}s assume that the concolic tester
generates an initial set of inputs where
\raisebox{-3.1874999999999982bp}{\makebox[6.8921875bp][l]{\includegraphics[trim=2.4000000000000004 2.4000000000000004 2.4000000000000004 2.4000000000000004]{pict_2.pdf}}} is \raisebox{-3.1874999999999982bp}{\makebox[4.46328125bp][l]{\includegraphics[trim=2.4000000000000004 2.4000000000000004 2.4000000000000004 2.4000000000000004]{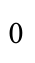}}} and \raisebox{-3.1874999999999982bp}{\makebox[5.98984375bp][l]{\includegraphics[trim=2.4000000000000004 2.4000000000000004 2.4000000000000004 2.4000000000000004]{pict_3.pdf}}} is \raisebox{-3.1874999999999982bp}{\makebox[4.46328125bp][l]{\includegraphics[trim=2.4000000000000004 2.4000000000000004 2.4000000000000004 2.4000000000000004]{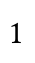}}}, which do not meet the above
conditions and fail to trigger the error.
In the concrete mode of the concolic tester, the initial inputs imply that
\raisebox{-3.1874999999999982bp}{\makebox[4.339062499999999bp][l]{\includegraphics[trim=2.4000000000000004 2.4000000000000004 2.4000000000000004 2.4000000000000004]{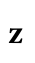}}} becomes $0$. At the same time, with its symbolic mode, the
concolic tester also tracks that
$0$ is the result of the expression \raisebox{-3.1874999999999982bp}{\makebox[18.235156250000003bp][l]{\includegraphics[trim=2.4000000000000004 2.4000000000000004 2.4000000000000004 2.4000000000000004]{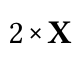}}}.
In general, any concrete value computed by primitive
operators comes with
an \emph{expression trace} indicating how the concrete value
relates to the inputs of the example.
For example, the expression \raisebox{-3.1874999999999982bp}{\makebox[20.40859375bp][l]{\includegraphics[trim=2.4000000000000004 2.4000000000000004 2.4000000000000004 2.4000000000000004]{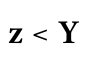}}} from
the outer \raisebox{-3.1874999999999982bp}{\makebox[19.200000000000003bp][l]{\includegraphics[trim=2.4000000000000004 2.4000000000000004 2.4000000000000004 2.4000000000000004]{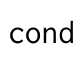}}} expression
evaluates to value $1$  with trace \raisebox{-3.1874999999999982bp}{\makebox[34.30468750000001bp][l]{\includegraphics[trim=2.4000000000000004 2.4000000000000004 2.4000000000000004 2.4000000000000004]{pict_5.pdf}}}.\NoteBox{\NoteContent{In
the example and the remainder of the paper, we use $1$
and $0$ to for true and false respectively.}}

Using the expression traces, the concolic
tester produces an ordered list of \emph{path constraints}
containing the concrete values and the associated expression traces
that determine whether the test of a branch in a conditional expression is
true or not.
For the above specific inputs and our example, the concolic tester records:

\begin{SCentered}\raisebox{-3.0164062499999975bp}{\makebox[65.50703125bp][l]{\includegraphics[trim=2.4000000000000004 2.4000000000000004 2.4000000000000004 2.4000000000000004]{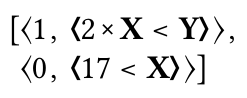}}}\end{SCentered}

In this list the first path constraint corresponds to
the outer \raisebox{-3.1874999999999982bp}{\makebox[19.200000000000003bp][l]{\includegraphics[trim=2.4000000000000004 2.4000000000000004 2.4000000000000004 2.4000000000000004]{pict_12.pdf}}} expression and captures that
the test of the  \raisebox{-3.1874999999999982bp}{\makebox[19.200000000000003bp][l]{\includegraphics[trim=2.4000000000000004 2.4000000000000004 2.4000000000000004 2.4000000000000004]{pict_12.pdf}}} evaluates to concrete value $1$
with expression trace
\raisebox{-3.1874999999999982bp}{\makebox[34.30468750000001bp][l]{\includegraphics[trim=2.4000000000000004 2.4000000000000004 2.4000000000000004 2.4000000000000004]{pict_5.pdf}}}, while the test of the inner \raisebox{-3.1874999999999982bp}{\makebox[19.200000000000003bp][l]{\includegraphics[trim=2.4000000000000004 2.4000000000000004 2.4000000000000004 2.4000000000000004]{pict_12.pdf}}} expression
produces  $0$ with trace \raisebox{-3.1874999999999982bp}{\makebox[25.8984375bp][l]{\includegraphics[trim=2.4000000000000004 2.4000000000000004 2.4000000000000004 2.4000000000000004]{pict_6.pdf}}}.
The order of the path constraints matches the evaluation of the example
and they induce the symbolic formula that represents the control{-}flow path
of the example for inputs where \raisebox{-3.1874999999999982bp}{\makebox[6.8921875bp][l]{\includegraphics[trim=2.4000000000000004 2.4000000000000004 2.4000000000000004 2.4000000000000004]{pict_2.pdf}}} is \raisebox{-3.1874999999999982bp}{\makebox[4.46328125bp][l]{\includegraphics[trim=2.4000000000000004 2.4000000000000004 2.4000000000000004 2.4000000000000004]{pict_7.pdf}}} and \raisebox{-3.1874999999999982bp}{\makebox[5.98984375bp][l]{\includegraphics[trim=2.4000000000000004 2.4000000000000004 2.4000000000000004 2.4000000000000004]{pict_3.pdf}}} is \raisebox{-3.1874999999999982bp}{\makebox[4.46328125bp][l]{\includegraphics[trim=2.4000000000000004 2.4000000000000004 2.4000000000000004 2.4000000000000004]{pict_8.pdf}}}:

\noindent \begin{SCentered}\raisebox{-3.1874999999999982bp}{\makebox[130.871875bp][l]{\includegraphics[trim=2.4000000000000004 2.4000000000000004 2.4000000000000004 2.4000000000000004]{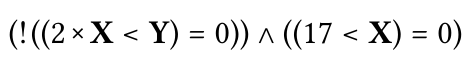}}}\end{SCentered}

\begin{Figure}\begin{Centerfigure}\begin{FigureInside}\raisebox{-0.14414062499999147bp}{\makebox[156.41875bp][l]{\includegraphics[trim=2.4000000000000004 2.4000000000000004 2.4000000000000004 2.4000000000000004]{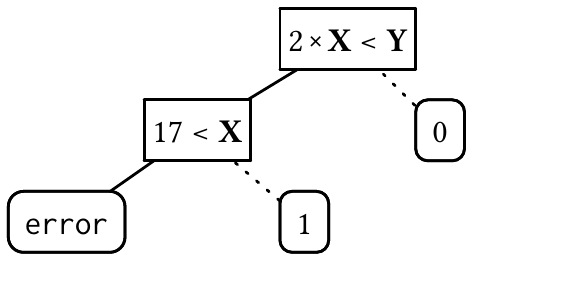}}}\end{FigureInside}\end{Centerfigure}

\Centertext{\Legend{\FigureTarget{\label{t:x28counter_x28x22figurex22_x22figx3apathx2dtreex22x29x29}\textsf{Fig.}~\textsf{1}. }{t:x28counter_x28x22figurex22_x22figx3apathx2dtreex22x29x29}\textsf{The Tree Representation of Control Paths}}}\end{Figure}

Put differently, the two inputs we examine reveal
two alternative control{-}flow paths
for the concolic tester
to explore {---} each corresponds to negating one of the two clauses of the
above formula.
Figure~\hyperref[t:x28counter_x28x22figurex22_x22figx3apathx2dtreex22x29x29]{\FigureRef{1}{t:x28counter_x28x22figurex22_x22figx3apathx2dtreex22x29x29}} depicts the current control{-}flow path
 and the two alternatives
as a tree with leaves containing the outcome of the example in each case.

As the concolic tester does not know which control{-}flow path
results in \raisebox{-3.1874999999999982bp}{\makebox[24.0bp][l]{\includegraphics[trim=2.4000000000000004 2.4000000000000004 2.4000000000000004 2.4000000000000004]{pict_4.pdf}}}, it explores all of them.
In order for the example to follow the right{-}most control{-}flow path,
the test of the outer \raisebox{-3.1874999999999982bp}{\makebox[19.200000000000003bp][l]{\includegraphics[trim=2.4000000000000004 2.4000000000000004 2.4000000000000004 2.4000000000000004]{pict_12.pdf}}} expression needs to
become \raisebox{-3.1874999999999982bp}{\makebox[4.46328125bp][l]{\includegraphics[trim=2.4000000000000004 2.4000000000000004 2.4000000000000004 2.4000000000000004]{pict_7.pdf}}}. Consequently, the concolic has to generate inputs
such that \raisebox{-3.1874999999999982bp}{\makebox[54.567968750000006bp][l]{\includegraphics[trim=2.4000000000000004 2.4000000000000004 2.4000000000000004 2.4000000000000004]{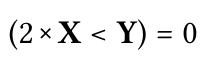}}}, i.e., the negation of the clause
from the first path constraint. The SMT solver
responds with the model where \raisebox{-3.1874999999999982bp}{\makebox[6.8921875bp][l]{\includegraphics[trim=2.4000000000000004 2.4000000000000004 2.4000000000000004 2.4000000000000004]{pict_2.pdf}}} is \raisebox{-3.1874999999999982bp}{\makebox[4.46328125bp][l]{\includegraphics[trim=2.4000000000000004 2.4000000000000004 2.4000000000000004 2.4000000000000004]{pict_8.pdf}}} and \raisebox{-3.1874999999999982bp}{\makebox[5.98984375bp][l]{\includegraphics[trim=2.4000000000000004 2.4000000000000004 2.4000000000000004 2.4000000000000004]{pict_3.pdf}}} is \raisebox{-3.1874999999999982bp}{\makebox[4.46328125bp][l]{\includegraphics[trim=2.4000000000000004 2.4000000000000004 2.4000000000000004 2.4000000000000004]{pict_7.pdf}}}, giving us our next
set of inputs.

Since these inputs do not result in an \raisebox{-3.1874999999999982bp}{\makebox[24.0bp][l]{\includegraphics[trim=2.4000000000000004 2.4000000000000004 2.4000000000000004 2.4000000000000004]{pict_4.pdf}}}, the search continues
and the concolic tester negates the second clause and issues a second query

\noindent \begin{SCentered}\raisebox{-3.1874999999999982bp}{\makebox[140.15625bp][l]{\includegraphics[trim=2.4000000000000004 2.4000000000000004 2.4000000000000004 2.4000000000000004]{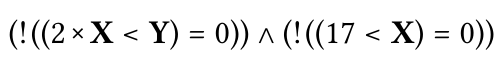}}}\end{SCentered}

\noindent The SMT solver responds by assigning \raisebox{-3.1874999999999982bp}{\makebox[6.8921875bp][l]{\includegraphics[trim=2.4000000000000004 2.4000000000000004 2.4000000000000004 2.4000000000000004]{pict_2.pdf}}} to \raisebox{-3.1874999999999982bp}{\makebox[8.9265625bp][l]{\includegraphics[trim=2.4000000000000004 2.4000000000000004 2.4000000000000004 2.4000000000000004]{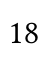}}} and \raisebox{-3.1874999999999982bp}{\makebox[5.98984375bp][l]{\includegraphics[trim=2.4000000000000004 2.4000000000000004 2.4000000000000004 2.4000000000000004]{pict_3.pdf}}} to \raisebox{-3.1874999999999982bp}{\makebox[8.9265625bp][l]{\includegraphics[trim=2.4000000000000004 2.4000000000000004 2.4000000000000004 2.4000000000000004]{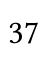}}},
causing the program to reach \raisebox{-3.1874999999999982bp}{\makebox[24.0bp][l]{\includegraphics[trim=2.4000000000000004 2.4000000000000004 2.4000000000000004 2.4000000000000004]{pict_4.pdf}}}.

In general terms, given a user program and a set of inputs,
the concolic tester runs the program and summarizes, in the form of path
constraints, the direction of the
control{-}flow branches the evaluation follows.
To explore a different control{-}flow path,
the  concolic tester
selects a prefix of the given list of path constraints,
negates its last element and consults the SMT solver
to generate new inputs for the next round of testing.
Consequently, with each iteration of the \emph{concolic loop}, the tester
explores an increasing portion of the control{-}flow graph of the user
program until it
discovers an error or hits a time or memory limit.

\sectionNewpage

\Ssection{Canonical Functions Are All We Need}{Canonical Functions Are All We Need}\label{t:x28part_x22secx3aapproximatex22x29}

In the setting of a higher{-}order functional language, a
concolic tester needs to generate not only numbers, but also
functions. Fortunately, it is possible to exhaustively
exercising all control paths of a higher{-}order user program,
with only a subset of the function terms of the language.

We can describe this subset with a small grammar of \emph{canonical
functions} that restrict the shape of function bodies. In essence,
each canonical shape of a functions translates to canonical pattern of
interaction between the input and the user program.  Intuitively, any
generated input first is invoked by the user program,
then inspects the results of calls it makes and finally
produces a reply.  We capture this pattern with a \raisebox{-3.1874999999999982bp}{\makebox[14.400000000000004bp][l]{\includegraphics[trim=2.4000000000000004 2.4000000000000004 2.4000000000000004 2.4000000000000004]{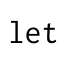}}} form followed by
a conditional. Of course, the body of the conditional can itself interact
(using the above pattern) or it might return a new canonical function.
This construction
generalizes to arbitrarily higher{-}order inputs\Autobibref{~(\hyperref[t:x28autobib_x22Phxfac_Nguyx1ec5nx2c_Sam_Tobinx2dHochstadtx2c_and_David_Van_HornRelatively_complete_counterexamples_for_higherx2dorder_programsx2eIn_Procx2e_ACM_Conference_on_Programming_Language_Design_and_Implementationx2c_ppx2e_446x2dx2d4562015x22x29]{\AutobibLink{Nguy\~{\^{e}}n et al\Sendabbrev{.}}} \hyperref[t:x28autobib_x22Phxfac_Nguyx1ec5nx2c_Sam_Tobinx2dHochstadtx2c_and_David_Van_HornRelatively_complete_counterexamples_for_higherx2dorder_programsx2eIn_Procx2e_ACM_Conference_on_Programming_Language_Design_and_Implementationx2c_ppx2e_446x2dx2d4562015x22x29]{\AutobibLink{2015}}, \hyperref[t:x28autobib_x22Phxfac_Nguyx1ec5nx2c_Sam_Tobinx2dHochstadtx2c_and_David_Van_HornHigher_order_symbolic_execution_for_contract_verification_and_refutationx2eJournal_of_Functional_Programmingx2827x29x2c_ppx2e_e3x3a1x2dx2de3x3a542017x22x29]{\AutobibLink{2017}})}.

In the remainder of this section we introduce our canonical
functions with a series of examples. The examples demonstrate
the key property of the canonical functions: if there is a function
that can trigger a bug in the user program, then there exists a
canonical function that also causes the user program to fail.
To get started, consider the following user program:

\noindent \begin{SCentered}\raisebox{-2.6742187499999996bp}{\makebox[97.68359375bp][l]{\includegraphics[trim=2.4000000000000004 2.4000000000000004 2.4000000000000004 2.4000000000000004]{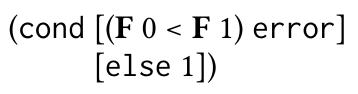}}}\end{SCentered}

\noindent It is apparent that \raisebox{-3.1874999999999982bp}{\makebox[5.231249999999999bp][l]{\includegraphics[trim=2.4000000000000004 2.4000000000000004 2.4000000000000004 2.4000000000000004]{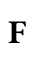}}} has to be
a first{-}order function from numbers to
numbers, but not a constant function, e.g.,
\raisebox{-3.1874999999999982bp}{\makebox[31.721875000000004bp][l]{\includegraphics[trim=2.4000000000000004 2.4000000000000004 2.4000000000000004 2.4000000000000004]{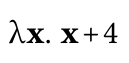}}}. However, to reproduce the behavior of this function
in this example, the concolic tester does not need to come up with the
arithmetic operation \raisebox{-3.1874999999999982bp}{\makebox[16.728125000000006bp][l]{\includegraphics[trim=2.4000000000000004 2.4000000000000004 2.4000000000000004 2.4000000000000004]{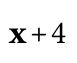}}}.
Rather, since the function \raisebox{-3.1874999999999982bp}{\makebox[5.231249999999999bp][l]{\includegraphics[trim=2.4000000000000004 2.4000000000000004 2.4000000000000004 2.4000000000000004]{pict_22.pdf}}} is only applied to 0 and 1, the concolic
tester can instead generate the function:

\noindent \begin{SCentered}\raisebox{-2.6742187499999996bp}{\makebox[81.66015625bp][l]{\includegraphics[trim=2.4000000000000004 2.4000000000000004 2.4000000000000004 2.4000000000000004]{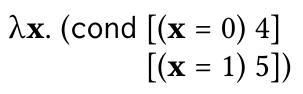}}}\end{SCentered}

More generally, if, like in our example, the evaluation of the user program terminates
for a particular input, there are only finitely many calls to that input.  Thus in the case where the
input is a first{-}order function, it suffices for the concolic
tester to produce a \raisebox{-3.1874999999999982bp}{\makebox[19.200000000000003bp][l]{\includegraphics[trim=2.4000000000000004 2.4000000000000004 2.4000000000000004 2.4000000000000004]{pict_12.pdf}}} expression that merely maps arguments from the code
under test to the corresponding results of the input without reconstructing
the actual computation performed by the input.

The situation is more involved when the input is higher{-}order itself.
Since the input can invoke its argument and then decide how to proceed
based on the result(s) of the call(s), the values that the input
provides to its argument act, in effect, as new inputs to the user
program.
The following variant of the previous example displays the issue:

\noindent \begin{SCentered}\raisebox{-2.6742187499999996bp}{\makebox[151.31875bp][l]{\includegraphics[trim=2.4000000000000004 2.4000000000000004 2.4000000000000004 2.4000000000000004]{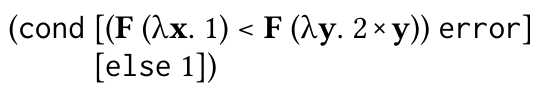}}}\end{SCentered}

\noindent A function that can cause this example to fail is  \raisebox{-3.1874999999999982bp}{\makebox[5.231249999999999bp][l]{\includegraphics[trim=2.4000000000000004 2.4000000000000004 2.4000000000000004 2.4000000000000004]{pict_22.pdf}}} is \raisebox{-3.1874999999999982bp}{\makebox[51.20156250000001bp][l]{\includegraphics[trim=2.4000000000000004 2.4000000000000004 2.4000000000000004 2.4000000000000004]{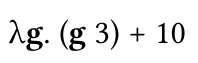}}}.
However, similar to the previous example, the concolic tester can achieve the same
outcome by generating a different function.
The necessary details of the shape of the function become evident
if we take a closer look at the interaction between the
input  \raisebox{-3.1874999999999982bp}{\makebox[51.20156250000001bp][l]{\includegraphics[trim=2.4000000000000004 2.4000000000000004 2.4000000000000004 2.4000000000000004]{pict_27.pdf}}} and the code under test:

\begin{SCentered}\begin{tikzpicture}
\node[rectangle, draw, minimum height=3.2cm](funF) at (-3.5,0) {\;\;\;\;
$\raisebox{-3.1874999999999982bp}{\makebox[5.231249999999999bp][l]{\includegraphics[trim=2.4000000000000004 2.4000000000000004 2.4000000000000004 2.4000000000000004]{pict_22.pdf}}} \equiv \raisebox{-3.1874999999999982bp}{\makebox[51.20156250000001bp][l]{\includegraphics[trim=2.4000000000000004 2.4000000000000004 2.4000000000000004 2.4000000000000004]{pict_27.pdf}}}$
\;\;\;\;};
\node[rectangle, draw, minimum height=3.2cm](prog) at (3.5,0) {\;\;\;\;
\raisebox{-3.1874999999999982bp}{\makebox[87.90390624999999bp][l]{\includegraphics[trim=2.4000000000000004 2.4000000000000004 2.4000000000000004 2.4000000000000004]{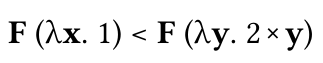}}}
\;\;\;\;};
\draw[->,thick] (1.23,1.4) -- node[midway,above]{\raisebox{-3.1874999999999982bp}{\makebox[19.45703125bp][l]{\includegraphics[trim=2.4000000000000004 2.4000000000000004 2.4000000000000004 2.4000000000000004]{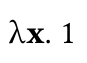}}}} (-1.67,1.4);
\node at (-2,0.9) {\raisebox{-3.1874999999999982bp}{\makebox[17.585156250000004bp][l]{\includegraphics[trim=2.4000000000000004 2.4000000000000004 2.4000000000000004 2.4000000000000004]{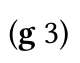}}}};
\draw[->] (-1.67,0.81) -- node[midway,above]{3} (1.23,0.81);
\draw[->] (1.23,0.39) -- node[midway,above]{1} (-1.67,0.39);
\node at (2,0.6) {\raisebox{-3.1874999999999982bp}{\makebox[5.0015625bp][l]{\includegraphics[trim=2.4000000000000004 2.4000000000000004 2.4000000000000004 2.4000000000000004]{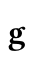}}} $\equiv$ \raisebox{-3.1874999999999982bp}{\makebox[19.45703125bp][l]{\includegraphics[trim=2.4000000000000004 2.4000000000000004 2.4000000000000004 2.4000000000000004]{pict_29.pdf}}}};
\draw[->,thick] (-1.67,-0.2) -- node[midway,above]{$1+10=2$} (1.23,-0.2);
\draw[->,thick] (1.23,-1.3) -- node[midway,above]{\raisebox{-3.1874999999999982bp}{\makebox[31.664062500000007bp][l]{\includegraphics[trim=2.4000000000000004 2.4000000000000004 2.4000000000000004 2.4000000000000004]{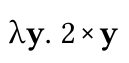}}}} (-1.67,-1.3);
\node at (-0.3,-1.5) {$\vdots$};
\end{tikzpicture}\end{SCentered}

\noindent To read the diagram, start from the top, considering each line to
be an interaction between the user program and it{'}s input. The first
interaction is sending \raisebox{-3.1874999999999982bp}{\makebox[19.45703125bp][l]{\includegraphics[trim=2.4000000000000004 2.4000000000000004 2.4000000000000004 2.4000000000000004]{pict_29.pdf}}} to \raisebox{-3.1874999999999982bp}{\makebox[5.231249999999999bp][l]{\includegraphics[trim=2.4000000000000004 2.4000000000000004 2.4000000000000004 2.4000000000000004]{pict_22.pdf}}}.
\raisebox{-3.1874999999999982bp}{\makebox[5.231249999999999bp][l]{\includegraphics[trim=2.4000000000000004 2.4000000000000004 2.4000000000000004 2.4000000000000004]{pict_22.pdf}}} reacts with  the application \raisebox{-3.1874999999999982bp}{\makebox[11.864843749999999bp][l]{\includegraphics[trim=2.4000000000000004 2.4000000000000004 2.4000000000000004 2.4000000000000004]{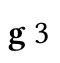}}}
that transmits \raisebox{-3.1874999999999982bp}{\makebox[4.46328125bp][l]{\includegraphics[trim=2.4000000000000004 2.4000000000000004 2.4000000000000004 2.4000000000000004]{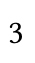}}} back to the user program which
replies with \raisebox{-3.1874999999999982bp}{\makebox[4.46328125bp][l]{\includegraphics[trim=2.4000000000000004 2.4000000000000004 2.4000000000000004 2.4000000000000004]{pict_8.pdf}}}. Subsequently \raisebox{-3.1874999999999982bp}{\makebox[5.231249999999999bp][l]{\includegraphics[trim=2.4000000000000004 2.4000000000000004 2.4000000000000004 2.4000000000000004]{pict_22.pdf}}} adds \raisebox{-3.1874999999999982bp}{\makebox[8.9265625bp][l]{\includegraphics[trim=2.4000000000000004 2.4000000000000004 2.4000000000000004 2.4000000000000004]{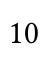}}} to \raisebox{-3.1874999999999982bp}{\makebox[4.46328125bp][l]{\includegraphics[trim=2.4000000000000004 2.4000000000000004 2.4000000000000004 2.4000000000000004]{pict_8.pdf}}} to obtain the
result \raisebox{-3.1874999999999982bp}{\makebox[8.9265625bp][l]{\includegraphics[trim=2.4000000000000004 2.4000000000000004 2.4000000000000004 2.4000000000000004]{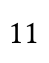}}} and ends its first interaction with the user program.
After this first round of interaction, the user program
initiates a second round by sending \raisebox{-3.1874999999999982bp}{\makebox[31.664062500000007bp][l]{\includegraphics[trim=2.4000000000000004 2.4000000000000004 2.4000000000000004 2.4000000000000004]{pict_32.pdf}}} to \raisebox{-3.1874999999999982bp}{\makebox[5.231249999999999bp][l]{\includegraphics[trim=2.4000000000000004 2.4000000000000004 2.4000000000000004 2.4000000000000004]{pict_22.pdf}}}.
While \raisebox{-3.1874999999999982bp}{\makebox[5.231249999999999bp][l]{\includegraphics[trim=2.4000000000000004 2.4000000000000004 2.4000000000000004 2.4000000000000004]{pict_22.pdf}}} sends again \raisebox{-3.1874999999999982bp}{\makebox[4.46328125bp][l]{\includegraphics[trim=2.4000000000000004 2.4000000000000004 2.4000000000000004 2.4000000000000004]{pict_34.pdf}}}
back to the user program, this time the user program multiplies it by \raisebox{-3.1874999999999982bp}{\makebox[4.46328125bp][l]{\includegraphics[trim=2.4000000000000004 2.4000000000000004 2.4000000000000004 2.4000000000000004]{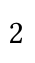}}}
and responds with \raisebox{-3.1874999999999982bp}{\makebox[4.46328125bp][l]{\includegraphics[trim=2.4000000000000004 2.4000000000000004 2.4000000000000004 2.4000000000000004]{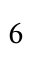}}}. Finally, \raisebox{-3.1874999999999982bp}{\makebox[5.231249999999999bp][l]{\includegraphics[trim=2.4000000000000004 2.4000000000000004 2.4000000000000004 2.4000000000000004]{pict_22.pdf}}} adds \raisebox{-3.1874999999999982bp}{\makebox[8.9265625bp][l]{\includegraphics[trim=2.4000000000000004 2.4000000000000004 2.4000000000000004 2.4000000000000004]{pict_35.pdf}}}
to \raisebox{-3.1874999999999982bp}{\makebox[4.46328125bp][l]{\includegraphics[trim=2.4000000000000004 2.4000000000000004 2.4000000000000004 2.4000000000000004]{pict_38.pdf}}} to produce \raisebox{-3.1874999999999982bp}{\makebox[8.9265625bp][l]{\includegraphics[trim=2.4000000000000004 2.4000000000000004 2.4000000000000004 2.4000000000000004]{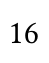}}}, its reply for the second and final interaction.

In this chain of interaction, \raisebox{-3.1874999999999982bp}{\makebox[5.231249999999999bp][l]{\includegraphics[trim=2.4000000000000004 2.4000000000000004 2.4000000000000004 2.4000000000000004]{pict_22.pdf}}} performs
two actions: first
it sends a number back to the user program, requesting
a new number, and then, similar to a first{-}order function, it maps that number to
another number. Thus we can rewrite
\raisebox{-3.1874999999999982bp}{\makebox[5.231249999999999bp][l]{\includegraphics[trim=2.4000000000000004 2.4000000000000004 2.4000000000000004 2.4000000000000004]{pict_22.pdf}}} to separate the two actions syntactically:

\begin{SCentered}\raisebox{-2.6742187499999996bp}{\makebox[57.69375000000001bp][l]{\includegraphics[trim=2.4000000000000004 2.4000000000000004 2.4000000000000004 2.4000000000000004]{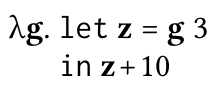}}}\end{SCentered}

This revamped shape suggests how we can replace the input
with a canonical function that behaves just like the input,
inducing the same interactions with the user program
as the input does. Building on our discussion about first{-}order
function inputs above, we construct a \raisebox{-3.1874999999999982bp}{\makebox[19.200000000000003bp][l]{\includegraphics[trim=2.4000000000000004 2.4000000000000004 2.4000000000000004 2.4000000000000004]{pict_12.pdf}}} expression that dispatches on \raisebox{-3.1874999999999982bp}{\makebox[4.339062499999999bp][l]{\includegraphics[trim=2.4000000000000004 2.4000000000000004 2.4000000000000004 2.4000000000000004]{pict_9.pdf}}}
and avoid the use of the arithmetic operator:

\begin{SCentered}\raisebox{-2.7304687499999982bp}{\makebox[96.69375bp][l]{\includegraphics[trim=2.4000000000000004 2.4000000000000004 2.4000000000000004 2.4000000000000004]{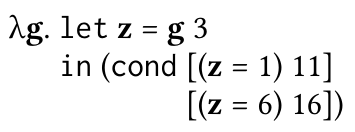}}}\end{SCentered}

In sum, canonical functions can capture all patterns of interaction
between the user program  and the input despite their strict structure.  We revisit
canonical functions and their properties formally in \ChapRef{\SectionNumberLink{t:x28part_x22secx3amodelx22x29}{5}}{Formalizing Higher{-}Order Concolic Testing} and
\ChapRef{\SectionNumberLink{t:x28part_x22secx3atheoryx22x29}{6}}{Correctness of Higher{-}Order Concolic Testing}. In the following section, we discuss how their
 structure helps generate inputs that uncover bugs in user programs.

\sectionNewpage

\Ssection{Directed Evolution of Canonical Functions}{Directed Evolution of Canonical Functions}\label{t:x28part_x22secx3ahowx2dgeneratex22x29}

Canonical functions allow the concolic tester to limit
the space of inputs it needs to search to trigger an error.  Even better,
the structure of our canonical functions makes it possible for the
concolic tester to search the space of canonical inputs in a targeted
fashion. The concolic loop starts with a constant
function. Then, based on the interaction between the input and the user program,
the input evolves into a more complex one,
at each evolution step hoping to explore a different path of the control{-}flow
of the user program. This process of evolution of the term is guaranteed to find
the fittest function: no matter where a bug might be in the program, a canonical input can trigger it.
This section explains the evolution of canonical functions by example.

To start, consider the following program where \raisebox{-3.1874999999999982bp}{\makebox[5.231249999999999bp][l]{\includegraphics[trim=2.4000000000000004 2.4000000000000004 2.4000000000000004 2.4000000000000004]{pict_22.pdf}}} is its higher{-}order input:

\noindent \begin{SCentered}\raisebox{-2.786718749999997bp}{\makebox[143.7125bp][l]{\includegraphics[trim=2.4000000000000004 2.4000000000000004 2.4000000000000004 2.4000000000000004]{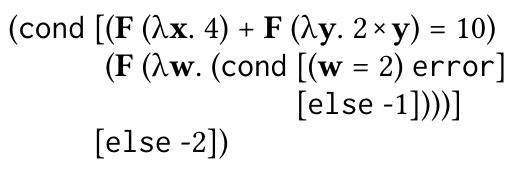}}}\end{SCentered}

\noindent The example applies \raisebox{-3.1874999999999982bp}{\makebox[5.231249999999999bp][l]{\includegraphics[trim=2.4000000000000004 2.4000000000000004 2.4000000000000004 2.4000000000000004]{pict_22.pdf}}} to two functions and
if the sum of the two applications is equal to \raisebox{-3.1874999999999982bp}{\makebox[8.9265625bp][l]{\includegraphics[trim=2.4000000000000004 2.4000000000000004 2.4000000000000004 2.4000000000000004]{pict_35.pdf}}},
it calls \raisebox{-3.1874999999999982bp}{\makebox[5.231249999999999bp][l]{\includegraphics[trim=2.4000000000000004 2.4000000000000004 2.4000000000000004 2.4000000000000004]{pict_22.pdf}}} with a third function which signals an
\raisebox{-3.1874999999999982bp}{\makebox[24.0bp][l]{\includegraphics[trim=2.4000000000000004 2.4000000000000004 2.4000000000000004 2.4000000000000004]{pict_4.pdf}}}.

The concolic tester starts with the simplest canonical function \raisebox{-3.2562499999999996bp}{\makebox[79.03828125bp][l]{\includegraphics[trim=2.4000000000000004 2.4000000000000004 2.4000000000000004 2.4000000000000004]{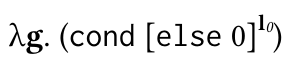}}}
as the first input for the example.  The peculiar shape of
the function is due to the shape of canonical functions that
requires that their bodies
consists of \raisebox{-3.1874999999999982bp}{\makebox[14.400000000000004bp][l]{\includegraphics[trim=2.4000000000000004 2.4000000000000004 2.4000000000000004 2.4000000000000004]{pict_20.pdf}}} and/or \raisebox{-3.1874999999999982bp}{\makebox[19.200000000000003bp][l]{\includegraphics[trim=2.4000000000000004 2.4000000000000004 2.4000000000000004 2.4000000000000004]{pict_12.pdf}}} expressions. The label
\raisebox{-2.3617187499999996bp}{\makebox[5.60546875bp][l]{\includegraphics[trim=2.4000000000000004 2.4000000000000004 2.4000000000000004 2.4000000000000004]{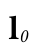}}} is a unique identifier for the conditional branch; we expand on the
role of branch labels further on.  With that input, both calls to
\raisebox{-3.1874999999999982bp}{\makebox[5.231249999999999bp][l]{\includegraphics[trim=2.4000000000000004 2.4000000000000004 2.4000000000000004 2.4000000000000004]{pict_22.pdf}}} return to \raisebox{-3.1874999999999982bp}{\makebox[4.46328125bp][l]{\includegraphics[trim=2.4000000000000004 2.4000000000000004 2.4000000000000004 2.4000000000000004]{pict_7.pdf}}} and the example returns \raisebox{-3.1874999999999982bp}{\makebox[7.707812499999999bp][l]{\includegraphics[trim=2.4000000000000004 2.4000000000000004 2.4000000000000004 2.4000000000000004]{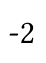}}}.

Following our discussion of first{-}order concolic testing in
\ChapRef{\SectionNumberLink{t:x28part_x22secx3abackgroundx22x29}{2}}{A Refresher on First{-}Order Concolic Testing}, the tester records
\emph{first{-}order} path constraints that encode the control{-}flow path of
the example (but not its inputs). The single constraint here is  \raisebox{-3.2890624999999982bp}{\makebox[59.68593750000001bp][l]{\includegraphics[trim=2.4000000000000004 2.4000000000000004 2.4000000000000004 2.4000000000000004]{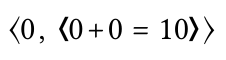}}}
and indicates that the test of the conditional of the
example evaluates to \raisebox{-3.1874999999999982bp}{\makebox[4.46328125bp][l]{\includegraphics[trim=2.4000000000000004 2.4000000000000004 2.4000000000000004 2.4000000000000004]{pict_7.pdf}}}  because the sum of the two constants \raisebox{-3.1874999999999982bp}{\makebox[4.46328125bp][l]{\includegraphics[trim=2.4000000000000004 2.4000000000000004 2.4000000000000004 2.4000000000000004]{pict_7.pdf}}} is
not equal to \raisebox{-3.1874999999999982bp}{\makebox[8.9265625bp][l]{\includegraphics[trim=2.4000000000000004 2.4000000000000004 2.4000000000000004 2.4000000000000004]{pict_35.pdf}}}.  On its own this constraint is not sufficient for the
concolic tester to generate a new input; it lacks any connection with
\raisebox{-3.1874999999999982bp}{\makebox[5.231249999999999bp][l]{\includegraphics[trim=2.4000000000000004 2.4000000000000004 2.4000000000000004 2.4000000000000004]{pict_22.pdf}}}.

To remedy this shortcoming, our higher{-}order concolic tester
records constraints about the evaluation of the input that
it generates. As each \raisebox{-3.1874999999999982bp}{\makebox[19.200000000000003bp][l]{\includegraphics[trim=2.4000000000000004 2.4000000000000004 2.4000000000000004 2.4000000000000004]{pict_12.pdf}}} expression inspects the properties of
one specific value (checking to see which integer it is, or
if it is a procedure), we record a constraint that captures
that value followed by a series of constraints that capture the
queries of the value. The former are called \emph{test}
constraints, and they record both a unique label for the
\raisebox{-3.1874999999999982bp}{\makebox[19.200000000000003bp][l]{\includegraphics[trim=2.4000000000000004 2.4000000000000004 2.4000000000000004 2.4000000000000004]{pict_12.pdf}}} expression and the value being inspected. The latter are
called \emph{branch} constraints, and they each record three
pieces of information: a unique label for the branch, the
outcome of its test (if the branch is taken or not), and the
expression trace of the outcome. Our running example results
into four input{-}related constraints in addition to the
first{-}order constraint from above:

\begin{SCentered}\raisebox{-2.743749999999997bp}{\makebox[172.571875bp][l]{\includegraphics[trim=2.4000000000000004 2.4000000000000004 2.4000000000000004 2.4000000000000004]{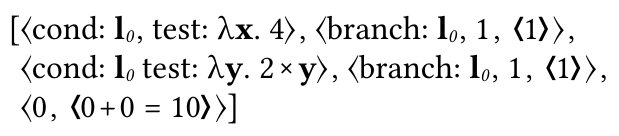}}}\end{SCentered}

The input{-}related constraints form two pairs (that show up on the same
line in the list above).
The first pair corresponds to the application,
\raisebox{-3.1874999999999982bp}{\makebox[32.80859375000001bp][l]{\includegraphics[trim=2.4000000000000004 2.4000000000000004 2.4000000000000004 2.4000000000000004]{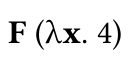}}}, and the resulting evaluation of \raisebox{-3.1874999999999982bp}{\makebox[5.231249999999999bp][l]{\includegraphics[trim=2.4000000000000004 2.4000000000000004 2.4000000000000004 2.4000000000000004]{pict_22.pdf}}}{'}s \raisebox{-3.1874999999999982bp}{\makebox[19.200000000000003bp][l]{\includegraphics[trim=2.4000000000000004 2.4000000000000004 2.4000000000000004 2.4000000000000004]{pict_12.pdf}}} expression,
while the second corresponds to the application \raisebox{-3.1874999999999982bp}{\makebox[45.01562500000001bp][l]{\includegraphics[trim=2.4000000000000004 2.4000000000000004 2.4000000000000004 2.4000000000000004]{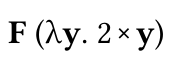}}}.
This grouping of input{-}related constraints is a general pattern; every
execution of a \raisebox{-3.1874999999999982bp}{\makebox[19.200000000000003bp][l]{\includegraphics[trim=2.4000000000000004 2.4000000000000004 2.4000000000000004 2.4000000000000004]{pict_12.pdf}}} expression in the input adds
a consecutive block of constraints that starts with a test constraint and
ends with a branch constraint whose test evaluates to \raisebox{-3.1874999999999982bp}{\makebox[4.46328125bp][l]{\includegraphics[trim=2.4000000000000004 2.4000000000000004 2.4000000000000004 2.4000000000000004]{pict_8.pdf}}}. If this last
constraint has the same label as the test constraint of the \raisebox{-3.1874999999999982bp}{\makebox[19.200000000000003bp][l]{\includegraphics[trim=2.4000000000000004 2.4000000000000004 2.4000000000000004 2.4000000000000004]{pict_12.pdf}}} expression,
like these two do, evaluation follows the \raisebox{-3.1874999999999982bp}{\makebox[19.200000000000003bp][l]{\includegraphics[trim=2.4000000000000004 2.4000000000000004 2.4000000000000004 2.4000000000000004]{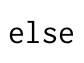}}} branch of the \raisebox{-3.1874999999999982bp}{\makebox[19.200000000000003bp][l]{\includegraphics[trim=2.4000000000000004 2.4000000000000004 2.4000000000000004 2.4000000000000004]{pict_12.pdf}}} expression.

Because of the pattern of the input{-}related constraints, the concolic tester can infer
that the default input ignores its argument \raisebox{-3.1874999999999982bp}{\makebox[5.0015625bp][l]{\includegraphics[trim=2.4000000000000004 2.4000000000000004 2.4000000000000004 2.4000000000000004]{pict_31.pdf}}} for both applications. This opens up a possibility for how the concolic
  tester can generate a new input that influences the example in a
  different way. That is, the input can evolve to one that interacts with \raisebox{-3.1874999999999982bp}{\makebox[5.0015625bp][l]{\includegraphics[trim=2.4000000000000004 2.4000000000000004 2.4000000000000004 2.4000000000000004]{pict_31.pdf}}}.
 The simplest one simply checks if \raisebox{-3.1874999999999982bp}{\makebox[5.0015625bp][l]{\includegraphics[trim=2.4000000000000004 2.4000000000000004 2.4000000000000004 2.4000000000000004]{pict_31.pdf}}} is
  a function:

\begin{SCentered}\raisebox{-2.950781249999997bp}{\makebox[118.33046875bp][l]{\includegraphics[trim=2.4000000000000004 2.4000000000000004 2.4000000000000004 2.4000000000000004]{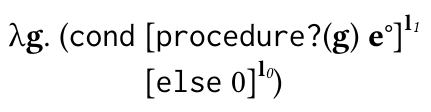}}}\end{SCentered}

\noindent where \raisebox{-3.1874999999999982bp}{\makebox[7.25625bp][l]{\includegraphics[trim=2.4000000000000004 2.4000000000000004 2.4000000000000004 2.4000000000000004]{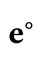}}} stands for the various ways the input proceeds
after establishing that \raisebox{-3.1874999999999982bp}{\makebox[5.0015625bp][l]{\includegraphics[trim=2.4000000000000004 2.4000000000000004 2.4000000000000004 2.4000000000000004]{pict_31.pdf}}} is indeed a function.

There are two general options for \raisebox{-3.1874999999999982bp}{\makebox[7.25625bp][l]{\includegraphics[trim=2.4000000000000004 2.4000000000000004 2.4000000000000004 2.4000000000000004]{pict_52.pdf}}} based on the shape of canonical
functions. The first is for the input to
ignore what it has learned about \raisebox{-3.1874999999999982bp}{\makebox[5.0015625bp][l]{\includegraphics[trim=2.4000000000000004 2.4000000000000004 2.4000000000000004 2.4000000000000004]{pict_31.pdf}}}, and
simply return a constant number, a
constant function, or a variable in scope
(in this case, the only such variable is \raisebox{-3.1874999999999982bp}{\makebox[5.0015625bp][l]{\includegraphics[trim=2.4000000000000004 2.4000000000000004 2.4000000000000004 2.4000000000000004]{pict_31.pdf}}} itself). Alternatively, the input
can call \raisebox{-3.1874999999999982bp}{\makebox[5.0015625bp][l]{\includegraphics[trim=2.4000000000000004 2.4000000000000004 2.4000000000000004 2.4000000000000004]{pict_31.pdf}}} (or in fact any other function in scope) which translates to an
\raisebox{-3.1874999999999982bp}{\makebox[7.25625bp][l]{\includegraphics[trim=2.4000000000000004 2.4000000000000004 2.4000000000000004 2.4000000000000004]{pict_52.pdf}}} of the form:

\noindent \begin{SCentered}\raisebox{-3.2562499999999996bp}{\makebox[124.92656249999999bp][l]{\includegraphics[trim=2.4000000000000004 2.4000000000000004 2.4000000000000004 2.4000000000000004]{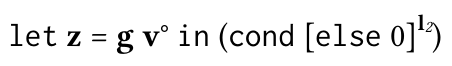}}}\end{SCentered}

Any of the above options could trigger the error in the example. Thus the
concolic tester may have to ultimately explore all of them. However, for the sake
of the discussion, we assume that the concolic tester picks a constant function
and stashes away the rest for future consideration. Consequently the concrete input for the next
iteration of the concolic loop is

\noindent \begin{SCentered}\raisebox{-3.0070312499999954bp}{\makebox[180.17578125bp][l]{\includegraphics[trim=2.4000000000000004 2.4000000000000004 2.4000000000000004 2.4000000000000004]{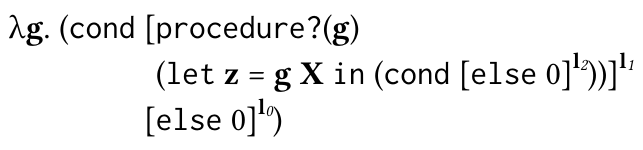}}}\end{SCentered}

\noindent Here, instead of some concrete value, \raisebox{-3.1874999999999982bp}{\makebox[5.0015625bp][l]{\includegraphics[trim=2.4000000000000004 2.4000000000000004 2.4000000000000004 2.4000000000000004]{pict_31.pdf}}} is applied to
 \raisebox{-3.1874999999999982bp}{\makebox[6.8921875bp][l]{\includegraphics[trim=2.4000000000000004 2.4000000000000004 2.4000000000000004 2.4000000000000004]{pict_2.pdf}}}. This is an important piece of the design of our higher{-}order concolic tester.
 In essence, \raisebox{-3.1874999999999982bp}{\makebox[6.8921875bp][l]{\includegraphics[trim=2.4000000000000004 2.4000000000000004 2.4000000000000004 2.4000000000000004]{pict_2.pdf}}} is an additional input that the concolic
 tester controls and can use to further affect the evaluation of the example.
 In other words, the evolution of the shape of the input aims to offer to the
 concolic tester new channels through which it can influence the example.

If we assume that the concolic tester randomly picks the initial value of  \raisebox{-3.1874999999999982bp}{\makebox[6.8921875bp][l]{\includegraphics[trim=2.4000000000000004 2.4000000000000004 2.4000000000000004 2.4000000000000004]{pict_2.pdf}}} to be \raisebox{-3.1874999999999982bp}{\makebox[4.46328125bp][l]{\includegraphics[trim=2.4000000000000004 2.4000000000000004 2.4000000000000004 2.4000000000000004]{pict_34.pdf}}},
 the application \raisebox{-3.1874999999999982bp}{\makebox[32.80859375000001bp][l]{\includegraphics[trim=2.4000000000000004 2.4000000000000004 2.4000000000000004 2.4000000000000004]{pict_48.pdf}}} reduces to the \raisebox{-2.3617187499999996bp}{\makebox[5.60546875bp][l]{\includegraphics[trim=2.4000000000000004 2.4000000000000004 2.4000000000000004 2.4000000000000004]{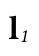}}} branch of
 \raisebox{-3.1874999999999982bp}{\makebox[5.231249999999999bp][l]{\includegraphics[trim=2.4000000000000004 2.4000000000000004 2.4000000000000004 2.4000000000000004]{pict_22.pdf}}}, producing the input{-}related constraints
\raisebox{-2.3617187499999996bp}{\makebox[79.55156249999999bp][l]{\includegraphics[trim=2.4000000000000004 2.4000000000000004 2.4000000000000004 2.4000000000000004]{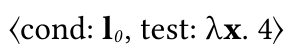}}}
and \raisebox{-3.2890624999999982bp}{\makebox[72.084375bp][l]{\includegraphics[trim=2.4000000000000004 2.4000000000000004 2.4000000000000004 2.4000000000000004]{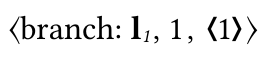}}}. The latter
indicates that the \raisebox{-3.1874999999999982bp}{\makebox[48.00000000000001bp][l]{\includegraphics[trim=2.4000000000000004 2.4000000000000004 2.4000000000000004 2.4000000000000004]{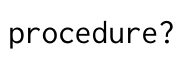}}}\raisebox{-3.1874999999999982bp}{\makebox[10.721875bp][l]{\includegraphics[trim=2.4000000000000004 2.4000000000000004 2.4000000000000004 2.4000000000000004]{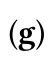}}} test succeeds, i.e., its result is \raisebox{-3.1874999999999982bp}{\makebox[4.46328125bp][l]{\includegraphics[trim=2.4000000000000004 2.4000000000000004 2.4000000000000004 2.4000000000000004]{pict_8.pdf}}}.
 Next, the application \raisebox{-3.1874999999999982bp}{\makebox[14.293750000000001bp][l]{\includegraphics[trim=2.4000000000000004 2.4000000000000004 2.4000000000000004 2.4000000000000004]{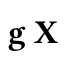}}} inside branch \raisebox{-2.3617187499999996bp}{\makebox[5.60546875bp][l]{\includegraphics[trim=2.4000000000000004 2.4000000000000004 2.4000000000000004 2.4000000000000004]{pict_55.pdf}}} returns \raisebox{-3.1874999999999982bp}{\makebox[4.46328125bp][l]{\includegraphics[trim=2.4000000000000004 2.4000000000000004 2.4000000000000004 2.4000000000000004]{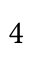}}} and
 the nested \raisebox{-3.1874999999999982bp}{\makebox[19.200000000000003bp][l]{\includegraphics[trim=2.4000000000000004 2.4000000000000004 2.4000000000000004 2.4000000000000004]{pict_12.pdf}}} expression follows the \raisebox{-3.1874999999999982bp}{\makebox[19.200000000000003bp][l]{\includegraphics[trim=2.4000000000000004 2.4000000000000004 2.4000000000000004 2.4000000000000004]{pict_50.pdf}}} branch with label
 \raisebox{-2.3617187499999996bp}{\makebox[5.60546875bp][l]{\includegraphics[trim=2.4000000000000004 2.4000000000000004 2.4000000000000004 2.4000000000000004]{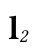}}} which induces
 the input{-}related constraints
\raisebox{-3.2890624999999982bp}{\makebox[72.9171875bp][l]{\includegraphics[trim=2.4000000000000004 2.4000000000000004 2.4000000000000004 2.4000000000000004]{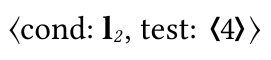}}}
and \raisebox{-3.2890624999999982bp}{\makebox[72.084375bp][l]{\includegraphics[trim=2.4000000000000004 2.4000000000000004 2.4000000000000004 2.4000000000000004]{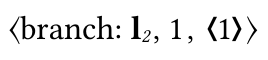}}}.
 The second application of \raisebox{-3.1874999999999982bp}{\makebox[5.231249999999999bp][l]{\includegraphics[trim=2.4000000000000004 2.4000000000000004 2.4000000000000004 2.4000000000000004]{pict_22.pdf}}}, \raisebox{-3.1874999999999982bp}{\makebox[45.01562500000001bp][l]{\includegraphics[trim=2.4000000000000004 2.4000000000000004 2.4000000000000004 2.4000000000000004]{pict_49.pdf}}}, proceeds in
 a similar manner.  In summary, we obtain the following new list of path
 constraints:

\begin{SCentered}\raisebox{-3.1984374999999954bp}{\makebox[181.20390625bp][l]{\includegraphics[trim=2.4000000000000004 2.4000000000000004 2.4000000000000004 2.4000000000000004]{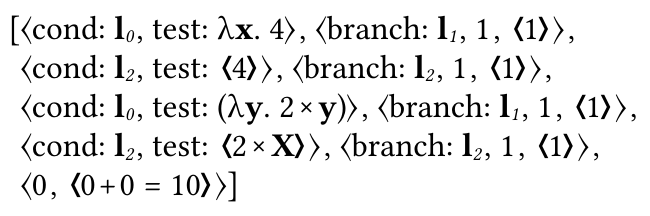}}}\end{SCentered}

It is worth noting that, in contrast to all the other test
constraints so far,  the second test constraints for the nested
\raisebox{-3.1874999999999982bp}{\makebox[19.200000000000003bp][l]{\includegraphics[trim=2.4000000000000004 2.4000000000000004 2.4000000000000004 2.4000000000000004]{pict_12.pdf}}} expression (with label \raisebox{-2.3617187499999996bp}{\makebox[5.60546875bp][l]{\includegraphics[trim=2.4000000000000004 2.4000000000000004 2.4000000000000004 2.4000000000000004]{pict_62.pdf}}}) includes an expression
trace (\raisebox{-3.2890624999999982bp}{\makebox[26.594531250000003bp][l]{\includegraphics[trim=2.4000000000000004 2.4000000000000004 2.4000000000000004 2.4000000000000004]{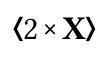}}}) instead of the actual value \raisebox{-3.1874999999999982bp}{\makebox[4.46328125bp][l]{\includegraphics[trim=2.4000000000000004 2.4000000000000004 2.4000000000000004 2.4000000000000004]{pict_38.pdf}}}.
This is because the tests in this \raisebox{-3.1874999999999982bp}{\makebox[19.200000000000003bp][l]{\includegraphics[trim=2.4000000000000004 2.4000000000000004 2.4000000000000004 2.4000000000000004]{pict_12.pdf}}} expression inspect a first{-}order value.
Hence the concolic tester keeps track of how the value depends on
first{-}order inputs so that it can
adjust those inputs to affect what branch of
the \raisebox{-3.1874999999999982bp}{\makebox[19.200000000000003bp][l]{\includegraphics[trim=2.4000000000000004 2.4000000000000004 2.4000000000000004 2.4000000000000004]{pict_12.pdf}}} expression that the evaluation follows.

With this second list of constraints, the input can evolve
further. For example, the input constraints for the nested
\raisebox{-3.1874999999999982bp}{\makebox[19.200000000000003bp][l]{\includegraphics[trim=2.4000000000000004 2.4000000000000004 2.4000000000000004 2.4000000000000004]{pict_12.pdf}}} expression of \raisebox{-3.1874999999999982bp}{\makebox[5.231249999999999bp][l]{\includegraphics[trim=2.4000000000000004 2.4000000000000004 2.4000000000000004 2.4000000000000004]{pict_22.pdf}}} indicate that \raisebox{-3.1874999999999982bp}{\makebox[5.231249999999999bp][l]{\includegraphics[trim=2.4000000000000004 2.4000000000000004 2.4000000000000004 2.4000000000000004]{pict_22.pdf}}} ignores the result of
the application \raisebox{-3.1874999999999982bp}{\makebox[14.293750000000001bp][l]{\includegraphics[trim=2.4000000000000004 2.4000000000000004 2.4000000000000004 2.4000000000000004]{pict_60.pdf}}} and goes straight to the \raisebox{-3.1874999999999982bp}{\makebox[19.200000000000003bp][l]{\includegraphics[trim=2.4000000000000004 2.4000000000000004 2.4000000000000004 2.4000000000000004]{pict_50.pdf}}} branch
with label  \raisebox{-2.3617187499999996bp}{\makebox[5.60546875bp][l]{\includegraphics[trim=2.4000000000000004 2.4000000000000004 2.4000000000000004 2.4000000000000004]{pict_62.pdf}}}. Hence, as above, the concolic tester can tweak
the shape of \raisebox{-3.1874999999999982bp}{\makebox[5.231249999999999bp][l]{\includegraphics[trim=2.4000000000000004 2.4000000000000004 2.4000000000000004 2.4000000000000004]{pict_22.pdf}}} so that its nested \raisebox{-3.1874999999999982bp}{\makebox[19.200000000000003bp][l]{\includegraphics[trim=2.4000000000000004 2.4000000000000004 2.4000000000000004 2.4000000000000004]{pict_12.pdf}}} expression inspects the
result of \raisebox{-3.1874999999999982bp}{\makebox[14.293750000000001bp][l]{\includegraphics[trim=2.4000000000000004 2.4000000000000004 2.4000000000000004 2.4000000000000004]{pict_60.pdf}}}.  The test constraint
\raisebox{-3.2890624999999982bp}{\makebox[72.9171875bp][l]{\includegraphics[trim=2.4000000000000004 2.4000000000000004 2.4000000000000004 2.4000000000000004]{pict_63.pdf}}}
shows one way to do so; the concolic tester can add a new clause to the nested \raisebox{-3.1874999999999982bp}{\makebox[19.200000000000003bp][l]{\includegraphics[trim=2.4000000000000004 2.4000000000000004 2.4000000000000004 2.4000000000000004]{pict_12.pdf}}} expression
that tests if \raisebox{-3.1874999999999982bp}{\makebox[18.882031250000004bp][l]{\includegraphics[trim=2.4000000000000004 2.4000000000000004 2.4000000000000004 2.4000000000000004]{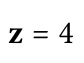}}} and forces the evaluation to avoid the \raisebox{-3.1874999999999982bp}{\makebox[19.200000000000003bp][l]{\includegraphics[trim=2.4000000000000004 2.4000000000000004 2.4000000000000004 2.4000000000000004]{pict_50.pdf}}}
branch.  As before, the concolic tester has a
number of choices for the new clause of the nested \raisebox{-3.1874999999999982bp}{\makebox[19.200000000000003bp][l]{\includegraphics[trim=2.4000000000000004 2.4000000000000004 2.4000000000000004 2.4000000000000004]{pict_12.pdf}}} expression
and eventually it may have to explore them all. This time we opt to
continue the discussion by having the new clause
return a random number that we represent as yet a new input
\raisebox{-2.3617187499999996bp}{\makebox[8.47578125bp][l]{\includegraphics[trim=2.4000000000000004 2.4000000000000004 2.4000000000000004 2.4000000000000004]{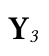}}}:

\begin{SCentered}\raisebox{-2.757812499999991bp}{\makebox[132.26328125bp][l]{\includegraphics[trim=2.4000000000000004 2.4000000000000004 2.4000000000000004 2.4000000000000004]{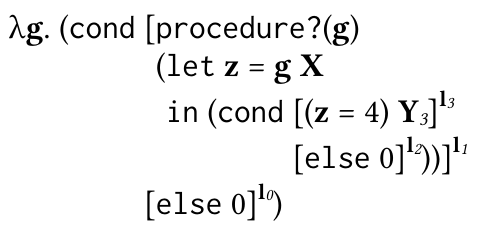}}}\end{SCentered}

Assuming that \raisebox{-3.1874999999999982bp}{\makebox[6.8921875bp][l]{\includegraphics[trim=2.4000000000000004 2.4000000000000004 2.4000000000000004 2.4000000000000004]{pict_2.pdf}}} is equal to \raisebox{-3.1874999999999982bp}{\makebox[4.46328125bp][l]{\includegraphics[trim=2.4000000000000004 2.4000000000000004 2.4000000000000004 2.4000000000000004]{pict_34.pdf}}}, as above, and \raisebox{-2.3617187499999996bp}{\makebox[8.47578125bp][l]{\includegraphics[trim=2.4000000000000004 2.4000000000000004 2.4000000000000004 2.4000000000000004]{pict_68.pdf}}} is randomly
set to  \raisebox{-3.1874999999999982bp}{\makebox[4.46328125bp][l]{\includegraphics[trim=2.4000000000000004 2.4000000000000004 2.4000000000000004 2.4000000000000004]{pict_7.pdf}}},  we obtain this third list of path constraints from the evaluation of the
example:

\begin{SCentered}\raisebox{-2.925781249999991bp}{\makebox[276.12500000000006bp][l]{\includegraphics[trim=2.4000000000000004 2.4000000000000004 2.4000000000000004 2.4000000000000004]{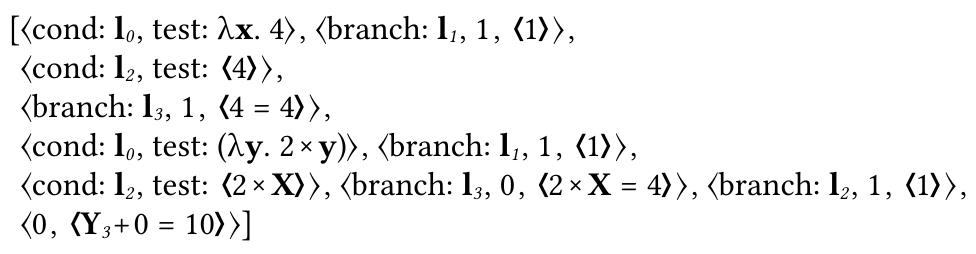}}}\end{SCentered}

\noindent The list is roughly the same as the one from the previous input
except for three constraints. The first difference is that
the fourth constraint is a test constraint that indicates that the nested \raisebox{-3.1874999999999982bp}{\makebox[19.200000000000003bp][l]{\includegraphics[trim=2.4000000000000004 2.4000000000000004 2.4000000000000004 2.4000000000000004]{pict_12.pdf}}} expressions of \raisebox{-3.1874999999999982bp}{\makebox[5.231249999999999bp][l]{\includegraphics[trim=2.4000000000000004 2.4000000000000004 2.4000000000000004 2.4000000000000004]{pict_22.pdf}}}
follows branch \raisebox{-2.3617187499999996bp}{\makebox[5.60546875bp][l]{\includegraphics[trim=2.4000000000000004 2.4000000000000004 2.4000000000000004 2.4000000000000004]{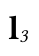}}} for the first application of \raisebox{-3.1874999999999982bp}{\makebox[5.231249999999999bp][l]{\includegraphics[trim=2.4000000000000004 2.4000000000000004 2.4000000000000004 2.4000000000000004]{pict_22.pdf}}}. Next, the
second to last constraint is a new
branch constraint that indicates that the nested \raisebox{-3.1874999999999982bp}{\makebox[19.200000000000003bp][l]{\includegraphics[trim=2.4000000000000004 2.4000000000000004 2.4000000000000004 2.4000000000000004]{pict_12.pdf}}} expressions of \raisebox{-3.1874999999999982bp}{\makebox[5.231249999999999bp][l]{\includegraphics[trim=2.4000000000000004 2.4000000000000004 2.4000000000000004 2.4000000000000004]{pict_22.pdf}}}
fails to follow branch \raisebox{-2.3617187499999996bp}{\makebox[5.60546875bp][l]{\includegraphics[trim=2.4000000000000004 2.4000000000000004 2.4000000000000004 2.4000000000000004]{pict_62.pdf}}} for the first application of \raisebox{-3.1874999999999982bp}{\makebox[5.231249999999999bp][l]{\includegraphics[trim=2.4000000000000004 2.4000000000000004 2.4000000000000004 2.4000000000000004]{pict_22.pdf}}}, since
\raisebox{-3.1874999999999982bp}{\makebox[18.235156250000003bp][l]{\includegraphics[trim=2.4000000000000004 2.4000000000000004 2.4000000000000004 2.4000000000000004]{pict_10.pdf}}} is not equal to \raisebox{-3.1874999999999982bp}{\makebox[4.46328125bp][l]{\includegraphics[trim=2.4000000000000004 2.4000000000000004 2.4000000000000004 2.4000000000000004]{pict_61.pdf}}}. Finally, the last constraint records that the
the test of the \raisebox{-3.1874999999999982bp}{\makebox[19.200000000000003bp][l]{\includegraphics[trim=2.4000000000000004 2.4000000000000004 2.4000000000000004 2.4000000000000004]{pict_12.pdf}}} expression of the example still evaluates to
\raisebox{-3.1874999999999982bp}{\makebox[4.46328125bp][l]{\includegraphics[trim=2.4000000000000004 2.4000000000000004 2.4000000000000004 2.4000000000000004]{pict_7.pdf}}} but that the result depends on the input \raisebox{-2.3617187499999996bp}{\makebox[8.47578125bp][l]{\includegraphics[trim=2.4000000000000004 2.4000000000000004 2.4000000000000004 2.4000000000000004]{pict_68.pdf}}},
specifically that \raisebox{-2.3617187499999996bp}{\makebox[38.825bp][l]{\includegraphics[trim=2.4000000000000004 2.4000000000000004 2.4000000000000004 2.4000000000000004]{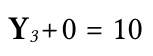}}}.

At this point, the concolic tester could focus on
the first{-}order constraint
\raisebox{-2.3617187499999996bp}{\makebox[61.059375bp][l]{\includegraphics[trim=2.4000000000000004 2.4000000000000004 2.4000000000000004 2.4000000000000004]{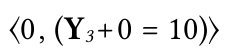}}} and ask the SMT solver to
come up with a value for \raisebox{-2.3617187499999996bp}{\makebox[8.47578125bp][l]{\includegraphics[trim=2.4000000000000004 2.4000000000000004 2.4000000000000004 2.4000000000000004]{pict_68.pdf}}} that flips the test of the
\raisebox{-3.1874999999999982bp}{\makebox[19.200000000000003bp][l]{\includegraphics[trim=2.4000000000000004 2.4000000000000004 2.4000000000000004 2.4000000000000004]{pict_12.pdf}}} expression. Instead, we opt for a different choice
that focuses on the input{-}related constraints and, pushing the
evolution of the shape of the input further.

In particular, the concolic tester can use the test
constraint
\raisebox{-3.2890624999999982bp}{\makebox[86.68906249999999bp][l]{\includegraphics[trim=2.4000000000000004 2.4000000000000004 2.4000000000000004 2.4000000000000004]{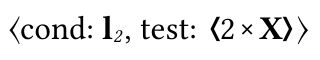}}}
to further refine the shape of \raisebox{-3.1874999999999982bp}{\makebox[5.231249999999999bp][l]{\includegraphics[trim=2.4000000000000004 2.4000000000000004 2.4000000000000004 2.4000000000000004]{pict_22.pdf}}}. This constraint is part of the group
of constraints that correspond to the second application of \raisebox{-3.1874999999999982bp}{\makebox[5.231249999999999bp][l]{\includegraphics[trim=2.4000000000000004 2.4000000000000004 2.4000000000000004 2.4000000000000004]{pict_22.pdf}}} which
imply that for the second application of \raisebox{-3.1874999999999982bp}{\makebox[5.231249999999999bp][l]{\includegraphics[trim=2.4000000000000004 2.4000000000000004 2.4000000000000004 2.4000000000000004]{pict_22.pdf}}}, evaluation again follows the
\raisebox{-3.1874999999999982bp}{\makebox[19.200000000000003bp][l]{\includegraphics[trim=2.4000000000000004 2.4000000000000004 2.4000000000000004 2.4000000000000004]{pict_50.pdf}}} branch of the nested \raisebox{-3.1874999999999982bp}{\makebox[19.200000000000003bp][l]{\includegraphics[trim=2.4000000000000004 2.4000000000000004 2.4000000000000004 2.4000000000000004]{pict_12.pdf}}} expression of the input. Thus the concolic tester can extend
the nested \raisebox{-3.1874999999999982bp}{\makebox[19.200000000000003bp][l]{\includegraphics[trim=2.4000000000000004 2.4000000000000004 2.4000000000000004 2.4000000000000004]{pict_12.pdf}}} expression
with a new branch with the test \raisebox{-3.1874999999999982bp}{\makebox[32.653906250000006bp][l]{\includegraphics[trim=2.4000000000000004 2.4000000000000004 2.4000000000000004 2.4000000000000004]{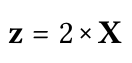}}}:

\noindent \begin{SCentered}\raisebox{-2.4523437499999883bp}{\makebox[142.10234375bp][l]{\includegraphics[trim=2.4000000000000004 2.4000000000000004 2.4000000000000004 2.4000000000000004]{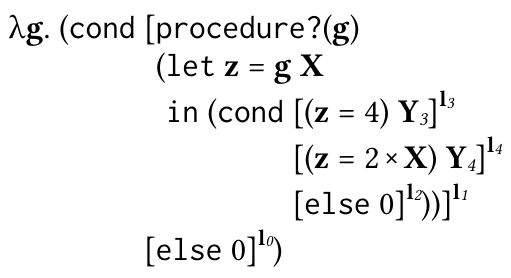}}}\end{SCentered}

\noindent where both \raisebox{-2.3617187499999996bp}{\makebox[8.47578125bp][l]{\includegraphics[trim=2.4000000000000004 2.4000000000000004 2.4000000000000004 2.4000000000000004]{pict_68.pdf}}} and \raisebox{-2.3617187499999996bp}{\makebox[8.47578125bp][l]{\includegraphics[trim=2.4000000000000004 2.4000000000000004 2.4000000000000004 2.4000000000000004]{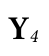}}} are the constant number \raisebox{-3.1874999999999982bp}{\makebox[4.46328125bp][l]{\includegraphics[trim=2.4000000000000004 2.4000000000000004 2.4000000000000004 2.4000000000000004]{pict_7.pdf}}}.

Note that this step of input evolution is that the test of the new branch uses
the expression trace \raisebox{-3.1874999999999982bp}{\makebox[18.235156250000003bp][l]{\includegraphics[trim=2.4000000000000004 2.4000000000000004 2.4000000000000004 2.4000000000000004]{pict_10.pdf}}} instead of an actual number.
This is necessary because this branch should handle the result of
the application \raisebox{-3.1874999999999982bp}{\makebox[14.293750000000001bp][l]{\includegraphics[trim=2.4000000000000004 2.4000000000000004 2.4000000000000004 2.4000000000000004]{pict_60.pdf}}} when \raisebox{-3.1874999999999982bp}{\makebox[5.0015625bp][l]{\includegraphics[trim=2.4000000000000004 2.4000000000000004 2.4000000000000004 2.4000000000000004]{pict_31.pdf}}} is \raisebox{-3.1874999999999982bp}{\makebox[31.664062500000007bp][l]{\includegraphics[trim=2.4000000000000004 2.4000000000000004 2.4000000000000004 2.4000000000000004]{pict_32.pdf}}} for all
values of \raisebox{-3.1874999999999982bp}{\makebox[6.8921875bp][l]{\includegraphics[trim=2.4000000000000004 2.4000000000000004 2.4000000000000004 2.4000000000000004]{pict_2.pdf}}} and not just when \raisebox{-3.1874999999999982bp}{\makebox[6.8921875bp][l]{\includegraphics[trim=2.4000000000000004 2.4000000000000004 2.4000000000000004 2.4000000000000004]{pict_2.pdf}}} is \raisebox{-3.1874999999999982bp}{\makebox[4.46328125bp][l]{\includegraphics[trim=2.4000000000000004 2.4000000000000004 2.4000000000000004 2.4000000000000004]{pict_34.pdf}}}.

At the same time, this evolution establishes a connection between the input \raisebox{-3.1874999999999982bp}{\makebox[6.8921875bp][l]{\includegraphics[trim=2.4000000000000004 2.4000000000000004 2.4000000000000004 2.4000000000000004]{pict_2.pdf}}} and the
test of the \raisebox{-3.1874999999999982bp}{\makebox[19.200000000000003bp][l]{\includegraphics[trim=2.4000000000000004 2.4000000000000004 2.4000000000000004 2.4000000000000004]{pict_12.pdf}}} expression through the new input \raisebox{-2.3617187499999996bp}{\makebox[8.47578125bp][l]{\includegraphics[trim=2.4000000000000004 2.4000000000000004 2.4000000000000004 2.4000000000000004]{pict_77.pdf}}}, which
has important repercussions for the concolic loop.
Specifically, running the example with the new input
produces the list of path constraints:

\noindent \begin{SCentered}\raisebox{-3.1984374999999954bp}{\makebox[318.21171875000005bp][l]{\includegraphics[trim=2.4000000000000004 2.4000000000000004 2.4000000000000004 2.4000000000000004]{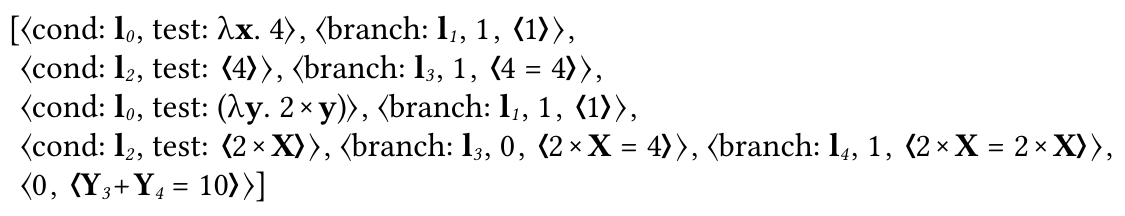}}}\end{SCentered}

\noindent Hence, the concolic tester has two ways (\raisebox{-2.3617187499999996bp}{\makebox[8.47578125bp][l]{\includegraphics[trim=2.4000000000000004 2.4000000000000004 2.4000000000000004 2.4000000000000004]{pict_68.pdf}}} and
\raisebox{-2.3617187499999996bp}{\makebox[8.47578125bp][l]{\includegraphics[trim=2.4000000000000004 2.4000000000000004 2.4000000000000004 2.4000000000000004]{pict_77.pdf}}}) to affect the \raisebox{-3.1874999999999982bp}{\makebox[19.200000000000003bp][l]{\includegraphics[trim=2.4000000000000004 2.4000000000000004 2.4000000000000004 2.4000000000000004]{pict_12.pdf}}} expression in the example.
It can ask the SMT solver to come
up with values for \raisebox{-2.3617187499999996bp}{\makebox[8.47578125bp][l]{\includegraphics[trim=2.4000000000000004 2.4000000000000004 2.4000000000000004 2.4000000000000004]{pict_68.pdf}}}, \raisebox{-2.3617187499999996bp}{\makebox[8.47578125bp][l]{\includegraphics[trim=2.4000000000000004 2.4000000000000004 2.4000000000000004 2.4000000000000004]{pict_77.pdf}}} and \raisebox{-3.1874999999999982bp}{\makebox[6.8921875bp][l]{\includegraphics[trim=2.4000000000000004 2.4000000000000004 2.4000000000000004 2.4000000000000004]{pict_2.pdf}}}, using the negation of the only first{-}order
constraint, i.e., \raisebox{-2.3617187499999996bp}{\makebox[42.837500000000006bp][l]{\includegraphics[trim=2.4000000000000004 2.4000000000000004 2.4000000000000004 2.4000000000000004]{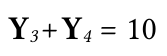}}}. However, due to the
second to last branch constraint it has to also
constrain the SMT solver, ensuring that \raisebox{-3.1874999999999982bp}{\makebox[38.49843750000001bp][l]{\includegraphics[trim=2.4000000000000004 2.4000000000000004 2.4000000000000004 2.4000000000000004]{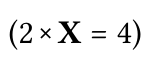}}} is false. Unfortunately,
this constraint makes it impossible for the SMT solver to derive \raisebox{-3.1874999999999982bp}{\makebox[6.8921875bp][l]{\includegraphics[trim=2.4000000000000004 2.4000000000000004 2.4000000000000004 2.4000000000000004]{pict_2.pdf}}}\raisebox{-3.1874999999999982bp}{\makebox[4.800000000000001bp][l]{\includegraphics[trim=2.4000000000000004 2.4000000000000004 2.4000000000000004 2.4000000000000004]{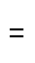}}}\raisebox{-3.1874999999999982bp}{\makebox[4.46328125bp][l]{\includegraphics[trim=2.4000000000000004 2.4000000000000004 2.4000000000000004 2.4000000000000004]{pict_37.pdf}}} which is necessary to
trigger the bug.  In fact, exactly for this reason, this shape of input
is a dead{-}end.

As a result, the
concolic server is better off backtracking to the second input from above:

\noindent \begin{SCentered}\raisebox{-2.757812499999991bp}{\makebox[132.26328125bp][l]{\includegraphics[trim=2.4000000000000004 2.4000000000000004 2.4000000000000004 2.4000000000000004]{pict_69.pdf}}}\end{SCentered}

\noindent where \raisebox{-3.1874999999999982bp}{\makebox[6.8921875bp][l]{\includegraphics[trim=2.4000000000000004 2.4000000000000004 2.4000000000000004 2.4000000000000004]{pict_2.pdf}}} and \raisebox{-2.3617187499999996bp}{\makebox[8.47578125bp][l]{\includegraphics[trim=2.4000000000000004 2.4000000000000004 2.4000000000000004 2.4000000000000004]{pict_68.pdf}}} are equal to \raisebox{-3.1874999999999982bp}{\makebox[4.46328125bp][l]{\includegraphics[trim=2.4000000000000004 2.4000000000000004 2.4000000000000004 2.4000000000000004]{pict_34.pdf}}} and \raisebox{-3.1874999999999982bp}{\makebox[4.46328125bp][l]{\includegraphics[trim=2.4000000000000004 2.4000000000000004 2.4000000000000004 2.4000000000000004]{pict_7.pdf}}} respectively.
With this input, the evaluation of the example produces these path constraints (duplicated
from earlier):

\noindent \begin{SCentered}\raisebox{-3.1984374999999954bp}{\makebox[276.12500000000006bp][l]{\includegraphics[trim=2.4000000000000004 2.4000000000000004 2.4000000000000004 2.4000000000000004]{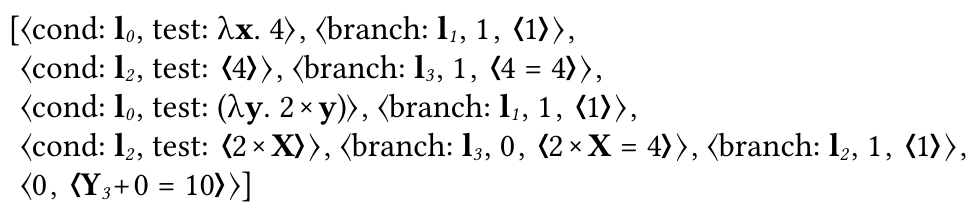}}}\end{SCentered}

\noindent Thus, the concolic tester can ask the SMT solver for help with determining
a new value for \raisebox{-2.3617187499999996bp}{\makebox[8.47578125bp][l]{\includegraphics[trim=2.4000000000000004 2.4000000000000004 2.4000000000000004 2.4000000000000004]{pict_68.pdf}}}, the input it has a first{-}order
constraint for. Subsequently the SMT solver derives that when \raisebox{-2.3617187499999996bp}{\makebox[8.47578125bp][l]{\includegraphics[trim=2.4000000000000004 2.4000000000000004 2.4000000000000004 2.4000000000000004]{pict_68.pdf}}} is
\raisebox{-3.1874999999999982bp}{\makebox[8.9265625bp][l]{\includegraphics[trim=2.4000000000000004 2.4000000000000004 2.4000000000000004 2.4000000000000004]{pict_35.pdf}}}, the result of \raisebox{-2.3617187499999996bp}{\makebox[38.825bp][l]{\includegraphics[trim=2.4000000000000004 2.4000000000000004 2.4000000000000004 2.4000000000000004]{pict_72.pdf}}} flips.

With a new value for \raisebox{-2.3617187499999996bp}{\makebox[8.47578125bp][l]{\includegraphics[trim=2.4000000000000004 2.4000000000000004 2.4000000000000004 2.4000000000000004]{pict_68.pdf}}} in hand and the same shape for \raisebox{-3.1874999999999982bp}{\makebox[5.231249999999999bp][l]{\includegraphics[trim=2.4000000000000004 2.4000000000000004 2.4000000000000004 2.4000000000000004]{pict_22.pdf}}},
the concolic tester evaluates the example and, as usual, records a new
list of path constraints:

\noindent \begin{SCentered}\raisebox{-2.7093749999999996bp}{\makebox[293.42578125000006bp][l]{\includegraphics[trim=2.4000000000000004 2.4000000000000004 2.4000000000000004 2.4000000000000004]{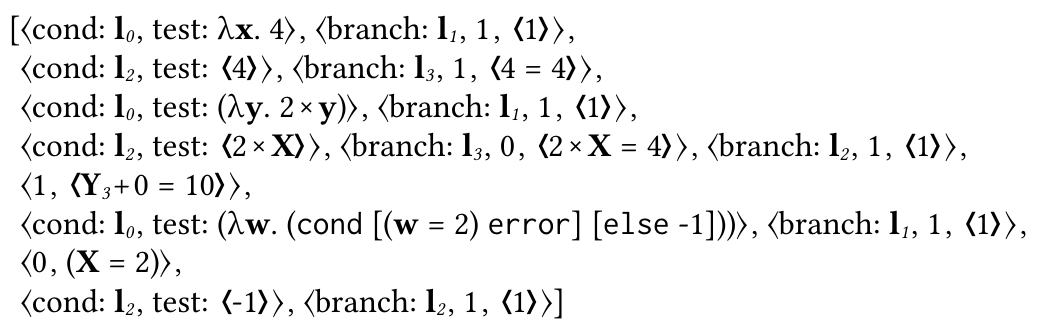}}}\end{SCentered}

\noindent Since the value of \raisebox{-2.3617187499999996bp}{\makebox[8.47578125bp][l]{\includegraphics[trim=2.4000000000000004 2.4000000000000004 2.4000000000000004 2.4000000000000004]{pict_68.pdf}}} makes the test of the \raisebox{-3.1874999999999982bp}{\makebox[19.200000000000003bp][l]{\includegraphics[trim=2.4000000000000004 2.4000000000000004 2.4000000000000004 2.4000000000000004]{pict_12.pdf}}} expression
true, the evaluation reaches the third application of \raisebox{-3.1874999999999982bp}{\makebox[5.231249999999999bp][l]{\includegraphics[trim=2.4000000000000004 2.4000000000000004 2.4000000000000004 2.4000000000000004]{pict_22.pdf}}}
and through that the test of the \raisebox{-3.1874999999999982bp}{\makebox[19.200000000000003bp][l]{\includegraphics[trim=2.4000000000000004 2.4000000000000004 2.4000000000000004 2.4000000000000004]{pict_12.pdf}}} expression of \raisebox{-3.1874999999999982bp}{\makebox[149.60625bp][l]{\includegraphics[trim=2.4000000000000004 2.4000000000000004 2.4000000000000004 2.4000000000000004]{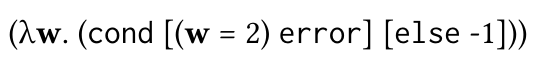}}}.
This test contributes the new first{-}order constraint \raisebox{-3.1874999999999982bp}{\makebox[43.669531250000006bp][l]{\includegraphics[trim=2.4000000000000004 2.4000000000000004 2.4000000000000004 2.4000000000000004]{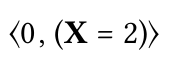}}}.
Unfortunately, if the concolic tester asks the
the SMT solver to tweak the input so that \raisebox{-3.1874999999999982bp}{\makebox[6.8921875bp][l]{\includegraphics[trim=2.4000000000000004 2.4000000000000004 2.4000000000000004 2.4000000000000004]{pict_2.pdf}}} is equal to \raisebox{-3.1874999999999982bp}{\makebox[4.46328125bp][l]{\includegraphics[trim=2.4000000000000004 2.4000000000000004 2.4000000000000004 2.4000000000000004]{pict_37.pdf}}},
the solver reports that the formula is unsatisfiable since
  \raisebox{-3.1874999999999982bp}{\makebox[21.435156250000006bp][l]{\includegraphics[trim=2.4000000000000004 2.4000000000000004 2.4000000000000004 2.4000000000000004]{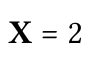}}} from the new first{-}order constraint clashes with
  the fact that \raisebox{-3.1874999999999982bp}{\makebox[38.49843750000001bp][l]{\includegraphics[trim=2.4000000000000004 2.4000000000000004 2.4000000000000004 2.4000000000000004]{pict_80.pdf}}} is false from the second to last branch constraint.

To make progress here, the concolic tester can take a step back and
and force the evaluation
of the second application of \raisebox{-3.1874999999999982bp}{\makebox[5.231249999999999bp][l]{\includegraphics[trim=2.4000000000000004 2.4000000000000004 2.4000000000000004 2.4000000000000004]{pict_22.pdf}}} in the example to follow branch
\raisebox{-2.3617187499999996bp}{\makebox[5.60546875bp][l]{\includegraphics[trim=2.4000000000000004 2.4000000000000004 2.4000000000000004 2.4000000000000004]{pict_71.pdf}}} instead of the \raisebox{-3.1874999999999982bp}{\makebox[19.200000000000003bp][l]{\includegraphics[trim=2.4000000000000004 2.4000000000000004 2.4000000000000004 2.4000000000000004]{pict_50.pdf}}} branch. To do so, the tester truncates its list of path constraints and
inserts at the end a new branch constraint as if the test of branch
\raisebox{-2.3617187499999996bp}{\makebox[5.60546875bp][l]{\includegraphics[trim=2.4000000000000004 2.4000000000000004 2.4000000000000004 2.4000000000000004]{pict_71.pdf}}} would have succeeded for the second application of \raisebox{-3.1874999999999982bp}{\makebox[5.231249999999999bp][l]{\includegraphics[trim=2.4000000000000004 2.4000000000000004 2.4000000000000004 2.4000000000000004]{pict_22.pdf}}}. This
targeted branch constraint modification produces:

\noindent \begin{SCentered}\raisebox{-2.4710937499999996bp}{\makebox[199.234375bp][l]{\includegraphics[trim=2.4000000000000004 2.4000000000000004 2.4000000000000004 2.4000000000000004]{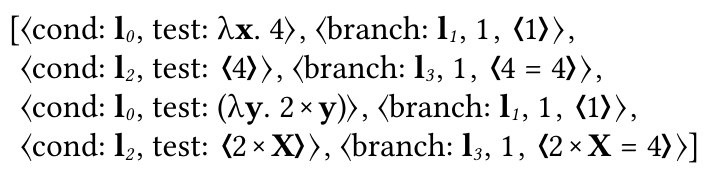}}}\end{SCentered}

\noindent The concolic tester translates the modified constraints into a formula,
asks the SMT solver for a model and the solver replies that \raisebox{-3.1874999999999982bp}{\makebox[6.8921875bp][l]{\includegraphics[trim=2.4000000000000004 2.4000000000000004 2.4000000000000004 2.4000000000000004]{pict_2.pdf}}} is equal to
\raisebox{-3.1874999999999982bp}{\makebox[4.46328125bp][l]{\includegraphics[trim=2.4000000000000004 2.4000000000000004 2.4000000000000004 2.4000000000000004]{pict_37.pdf}}}.

Consequently, the concolic tester tries again the previous input

\noindent \begin{SCentered}\raisebox{-2.757812499999991bp}{\makebox[132.26328125bp][l]{\includegraphics[trim=2.4000000000000004 2.4000000000000004 2.4000000000000004 2.4000000000000004]{pict_69.pdf}}}\end{SCentered}

\noindent but where \raisebox{-3.1874999999999982bp}{\makebox[6.8921875bp][l]{\includegraphics[trim=2.4000000000000004 2.4000000000000004 2.4000000000000004 2.4000000000000004]{pict_2.pdf}}} is equal to \raisebox{-3.1874999999999982bp}{\makebox[4.46328125bp][l]{\includegraphics[trim=2.4000000000000004 2.4000000000000004 2.4000000000000004 2.4000000000000004]{pict_37.pdf}}} and \raisebox{-2.3617187499999996bp}{\makebox[8.47578125bp][l]{\includegraphics[trim=2.4000000000000004 2.4000000000000004 2.4000000000000004 2.4000000000000004]{pict_68.pdf}}} remains equal to \raisebox{-3.1874999999999982bp}{\makebox[4.46328125bp][l]{\includegraphics[trim=2.4000000000000004 2.4000000000000004 2.4000000000000004 2.4000000000000004]{pict_7.pdf}}}, as before.
The resulting path constraints are:

\noindent \begin{SCentered}\raisebox{-3.1984374999999954bp}{\makebox[198.72890625bp][l]{\includegraphics[trim=2.4000000000000004 2.4000000000000004 2.4000000000000004 2.4000000000000004]{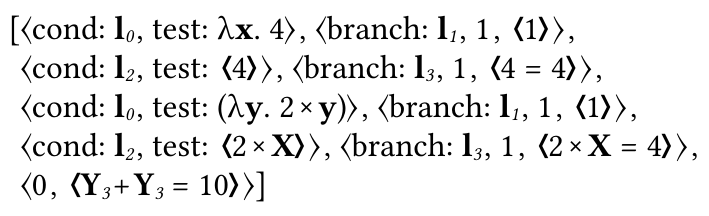}}}\end{SCentered}

With the next iteration of the concolic loop, the concolic tester can
perform one more SMT query. The query adjusts \raisebox{-2.3617187499999996bp}{\makebox[8.47578125bp][l]{\includegraphics[trim=2.4000000000000004 2.4000000000000004 2.4000000000000004 2.4000000000000004]{pict_68.pdf}}} to \raisebox{-3.1874999999999982bp}{\makebox[4.46328125bp][l]{\includegraphics[trim=2.4000000000000004 2.4000000000000004 2.4000000000000004 2.4000000000000004]{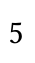}}} to achieve
\raisebox{-2.3617187499999996bp}{\makebox[42.837500000000006bp][l]{\includegraphics[trim=2.4000000000000004 2.4000000000000004 2.4000000000000004 2.4000000000000004]{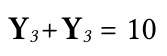}}}. As
a result the evaluation reaches again the \raisebox{-3.1874999999999982bp}{\makebox[19.200000000000003bp][l]{\includegraphics[trim=2.4000000000000004 2.4000000000000004 2.4000000000000004 2.4000000000000004]{pict_12.pdf}}} expression in \raisebox{-3.1874999999999982bp}{\makebox[149.60625bp][l]{\includegraphics[trim=2.4000000000000004 2.4000000000000004 2.4000000000000004 2.4000000000000004]{pict_84.pdf}}} and
since \raisebox{-3.1874999999999982bp}{\makebox[6.8921875bp][l]{\includegraphics[trim=2.4000000000000004 2.4000000000000004 2.4000000000000004 2.4000000000000004]{pict_2.pdf}}} is already equal to \raisebox{-3.1874999999999982bp}{\makebox[4.46328125bp][l]{\includegraphics[trim=2.4000000000000004 2.4000000000000004 2.4000000000000004 2.4000000000000004]{pict_37.pdf}}},
the input finally triggers the error.

To sum up, the example shows how our higher{-}order concolic tester
evaluates a program under test and records both first{-}order and
input{-}related constraints. It feeds the first{-}order ones to an
SMT solver the same way a first{-}order concolic tester uses the
constraints it records to explore the control{-}flow graph of the program under
test. The input{-}related ones, which are a
distinguishing feature of our design, help the concolic tester to evolve the
input iteratively and
introduce further ways
the input can influence the evaluation of the user program. The
subsequent section makes these insights precise with a formal model.

\paragraph{Remark on Completeness}
Before though we conclude this section, we briefly discuss
a subtle point about the inputs that our concolic tester
produces through evolution. The astute reader may have observed that
\raisebox{-3.1874999999999982bp}{\makebox[79.78046875bp][l]{\includegraphics[trim=2.4000000000000004 2.4000000000000004 2.4000000000000004 2.4000000000000004]{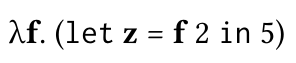}}}, a simpler input than the one our
concolic tester generates, can also reveal the error in our running
example. The ability of the concolic tester to produce inputs
like the one for our running example {---} that call their arguments (maybe more than once),
inspect the result of
the call(s) and, then based on that, decide on the value to return {---} is
critical and, without it, we would not be able to explore all possible behavior
of the user program.
For instance,
consider  the following variant of our example:

\noindent \begin{SCentered}\raisebox{-2.786718749999997bp}{\makebox[155.3140625bp][l]{\includegraphics[trim=2.4000000000000004 2.4000000000000004 2.4000000000000004 2.4000000000000004]{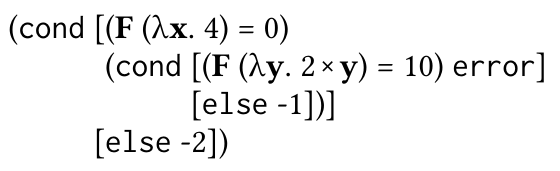}}}\end{SCentered}

\noindent Here, the input cannot trigger the error unless it can distinguish the
two call sites of \raisebox{-3.1874999999999982bp}{\makebox[5.231249999999999bp][l]{\includegraphics[trim=2.4000000000000004 2.4000000000000004 2.4000000000000004 2.4000000000000004]{pict_22.pdf}}} and thus it must be able to distinguish the two
functions that are passed to \raisebox{-3.1874999999999982bp}{\makebox[5.231249999999999bp][l]{\includegraphics[trim=2.4000000000000004 2.4000000000000004 2.4000000000000004 2.4000000000000004]{pict_22.pdf}}}. And, clearly, the only way to distinguish
them is to call them. Indeed
this is exactly what concolic{-}generated input does:

\noindent \begin{SCentered}\raisebox{-2.4523437499999883bp}{\makebox[142.10234375bp][l]{\includegraphics[trim=2.4000000000000004 2.4000000000000004 2.4000000000000004 2.4000000000000004]{pict_76.pdf}}}\end{SCentered}

Put differently, it is not sufficient that a higher{-}order concolic
tester call user program functions simply to discover errors in them.
It must also be able to call given functions in order to force the user
program to expose all of its possible behavior.

\sectionNewpage

\Ssection{Formalizing Higher{-}Order Concolic Testing}{Formalizing Higher{-}Order Concolic Testing}\label{t:x28part_x22secx3amodelx22x29}

The formal model of our higher{-}order concolic tester consists of three
pieces: (i) the language of user programs, (ii) a concolic
machine that evaluates a user program for a given input and produces
the corresponding list of path constraints, and (iii) the input evolution
process that uses the list of path constraints to construct a new input for
the user program for the next iteration of the concolic loop.

Figure~\hyperref[t:x28counter_x28x22figurex22_x22figx3amodelx2dsummaryx22x29x29]{\FigureRef{2}{t:x28counter_x28x22figurex22_x22figx3amodelx2dsummaryx22x29x29}} puts these three pieces together and
shows how they form the concolic loop.  First, a user program
\raisebox{-3.1874999999999982bp}{\makebox[4.69375bp][l]{\includegraphics[trim=2.4000000000000004 2.4000000000000004 2.4000000000000004 2.4000000000000004]{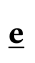}}} goes though an instrumentation step that translates it
into the corresponding concolic program  \raisebox{-3.1874999999999982bp}{\makebox[4.69375bp][l]{\includegraphics[trim=2.4000000000000004 2.4000000000000004 2.4000000000000004 2.4000000000000004]{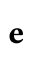}}}.  Instrumentation
requires also the construction of a store \raisebox{-3.1874999999999982bp}{\makebox[5.471875000000001bp][l]{\includegraphics[trim=2.4000000000000004 2.4000000000000004 2.4000000000000004 2.4000000000000004]{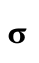}}} that maps each free variable
\raisebox{-3.1874999999999982bp}{\makebox[6.8921875bp][l]{\includegraphics[trim=2.4000000000000004 2.4000000000000004 2.4000000000000004 2.4000000000000004]{pict_2.pdf}}} of \raisebox{-3.1874999999999982bp}{\makebox[4.69375bp][l]{\includegraphics[trim=2.4000000000000004 2.4000000000000004 2.4000000000000004 2.4000000000000004]{pict_93.pdf}}} to a number \raisebox{-3.1874999999999982bp}{\makebox[5.913281250000001bp][l]{\includegraphics[trim=2.4000000000000004 2.4000000000000004 2.4000000000000004 2.4000000000000004]{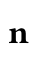}}} or a canonical function \raisebox{-2.3617187499999996bp}{\makebox[44.544531250000006bp][l]{\includegraphics[trim=2.4000000000000004 2.4000000000000004 2.4000000000000004 2.4000000000000004]{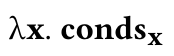}}}.
 In the initial store, concolic variables map either to a random number or
 the default canonical function \raisebox{-3.2562499999999996bp}{\makebox[77.29140625bp][l]{\includegraphics[trim=2.4000000000000004 2.4000000000000004 2.4000000000000004 2.4000000000000004]{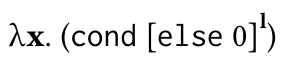}}}.  Put differently the store codifies the
 input for the user program.  Given the store, the concolic machine
 evaluates the result of the instrumentation,  \raisebox{-3.1874999999999982bp}{\makebox[4.69375bp][l]{\includegraphics[trim=2.4000000000000004 2.4000000000000004 2.4000000000000004 2.4000000000000004]{pict_94.pdf}}}, using
three registers: the store \raisebox{-3.1874999999999982bp}{\makebox[5.471875000000001bp][l]{\includegraphics[trim=2.4000000000000004 2.4000000000000004 2.4000000000000004 2.4000000000000004]{pict_95.pdf}}}, the list of
 path constraints \raisebox{-3.1874999999999982bp}{\makebox[5.682812500000001bp][l]{\includegraphics[trim=2.4000000000000004 2.4000000000000004 2.4000000000000004 2.4000000000000004]{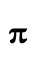}}} (that is initially empty) and the concolic program
 \raisebox{-3.1874999999999982bp}{\makebox[4.69375bp][l]{\includegraphics[trim=2.4000000000000004 2.4000000000000004 2.4000000000000004 2.4000000000000004]{pict_94.pdf}}}. If the result of the evaluation is not an error, the final content
 \raisebox{-3.1874999999999982bp}{\makebox[5.682812500000001bp][l]{\includegraphics[trim=2.4000000000000004 2.4000000000000004 2.4000000000000004 2.4000000000000004]{pict_99.pdf}}} of the middle register  together with the store \raisebox{-3.1874999999999982bp}{\makebox[5.471875000000001bp][l]{\includegraphics[trim=2.4000000000000004 2.4000000000000004 2.4000000000000004 2.4000000000000004]{pict_95.pdf}}} become the
 seed for the evolution of the input.  Specifically, the \raisebox{-3.1874999999999982bp}{\makebox[23.3828125bp][l]{\includegraphics[trim=2.4000000000000004 2.4000000000000004 2.4000000000000004 2.4000000000000004]{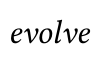}}}
 metafunction uses them to compute a new store  \raisebox{-3.1874999999999982bp}{\makebox[5.471875000000001bp][l]{\includegraphics[trim=2.4000000000000004 2.4000000000000004 2.4000000000000004 2.4000000000000004]{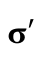}}} and the concolic
 loop proceeds with this new input to the next iteration until it discovers
 an error in the user program.

\begin{Figure}\begin{Centerfigure}\begin{FigureInside}\begin{tikzpicture}
\node[rectangle, draw](source) at (-3.5,0.8) {\begin{minipage}[t]{3.5cm}\begin{center}The User Program\end{center}
\begin{center}\raisebox{-3.1874999999999982bp}{\makebox[4.69375bp][l]{\includegraphics[trim=2.4000000000000004 2.4000000000000004 2.4000000000000004 2.4000000000000004]{pict_93.pdf}}}\end{center}\end{minipage}};
\node[rectangle, draw](store) at (-3.5,-0.8) {\begin{minipage}[t]{5.5cm}\begin{center}The New Input\end{center}
\begin{center}$\raisebox{-3.1874999999999982bp}{\makebox[5.471875000000001bp][l]{\includegraphics[trim=2.4000000000000004 2.4000000000000004 2.4000000000000004 2.4000000000000004]{pict_95.pdf}}} : \raisebox{-3.1874999999999982bp}{\makebox[6.8921875bp][l]{\includegraphics[trim=2.4000000000000004 2.4000000000000004 2.4000000000000004 2.4000000000000004]{pict_2.pdf}}} \longmapsto \raisebox{-3.1874999999999982bp}{\makebox[5.913281250000001bp][l]{\includegraphics[trim=2.4000000000000004 2.4000000000000004 2.4000000000000004 2.4000000000000004]{pict_96.pdf}}} \text{ or } \raisebox{-2.3617187499999996bp}{\makebox[44.544531250000006bp][l]{\includegraphics[trim=2.4000000000000004 2.4000000000000004 2.4000000000000004 2.4000000000000004]{pict_97.pdf}}}$\end{center}\end{minipage}};
\node[rectangle, draw](concolic) at (4,0) {\begin{minipage}[t]{4cm}\begin{center}The Concolic Program\end{center}
\begin{center}\raisebox{-3.1874999999999982bp}{\makebox[4.69375bp][l]{\includegraphics[trim=2.4000000000000004 2.4000000000000004 2.4000000000000004 2.4000000000000004]{pict_94.pdf}}}\end{center}\end{minipage}};
\node[rectangle, draw](pcs) at (0,-3.5) {\begin{minipage}[t]{4.5cm}\begin{center}The List of Path Constraints\end{center}
\begin{center}\raisebox{-3.1874999999999982bp}{\makebox[5.682812500000001bp][l]{\includegraphics[trim=2.4000000000000004 2.4000000000000004 2.4000000000000004 2.4000000000000004]{pict_99.pdf}}}\end{center}\end{minipage}};
\draw[thick] (source) -- (store);
\draw[->,thick] (-3.5,0) -- node[text width=2.5cm,pos=0.72,above]{\begin{minipage}[t]{2.5cm}\begin{center}Instrumentation\end{center}
\begin{center}\raisebox{-3.1874999999999982bp}{\makebox[30.430468750000003bp][l]{\includegraphics[trim=2.4000000000000004 2.4000000000000004 2.4000000000000004 2.4000000000000004]{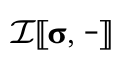}}}\end{center}\end{minipage}} (concolic);

\draw[->,thick] (concolic) -- node[midway,right]{\begin{minipage}[t]{4cm}\begin{center}Concolic Evaluation\end{center}
\begin{center}\raisebox{-2.6179687499999993bp}{\makebox[95.08125bp][l]{\includegraphics[trim=2.4000000000000004 2.4000000000000004 2.4000000000000004 2.4000000000000004]{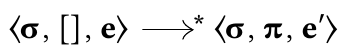}}}\end{center}\end{minipage}}(pcs);
\draw[->,thick] (pcs) -- node[midway,left]{\begin{minipage}[t]{4.75cm}\begin{center}Evolving New Inputs\end{center}
\begin{center}\raisebox{-2.9390624999999986bp}{\makebox[90.08281249999999bp][l]{\includegraphics[trim=2.4000000000000004 2.4000000000000004 2.4000000000000004 2.4000000000000004]{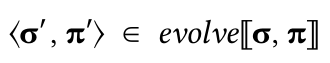}}}\end{center}\end{minipage}}(store);

\end{tikzpicture}\end{FigureInside}\end{Centerfigure}

\Centertext{\Legend{\FigureTarget{\label{t:x28counter_x28x22figurex22_x22figx3amodelx2dsummaryx22x29x29}\textsf{Fig.}~\textsf{2}. }{t:x28counter_x28x22figurex22_x22figx3amodelx2dsummaryx22x29x29}\textsf{The Concolic Loop, Formally}}}\end{Figure}

\SecRefUC{\SectionNumberLink{t:x28part_x22secx3amodelx2dlanguagex22x29}{5.1}}{From User Programs to Concolic Evaluation} details the syntax of user programs,
the syntax of concolic programs, the instrumentation step that
translates the former to the latter and the operation of the concolic
machine.
\SecRefUC{\SectionNumberLink{t:x28part_x22secx3amodelx2devolvex22x29}{5.2}}{Evolving  New Inputs} concludes this section
with a formal description of the evolution of new inputs.

\Ssubsection{From User Programs to Concolic Evaluation}{From User Programs to Concolic Evaluation}\label{t:x28part_x22secx3amodelx2dlanguagex22x29}

\begin{Figure}\begin{Centerfigure}\begin{FigureInside}\raisebox{-2.6742187499999996bp}{\makebox[293.07968750000003bp][l]{\includegraphics[trim=2.4000000000000004 2.4000000000000004 2.4000000000000004 2.4000000000000004]{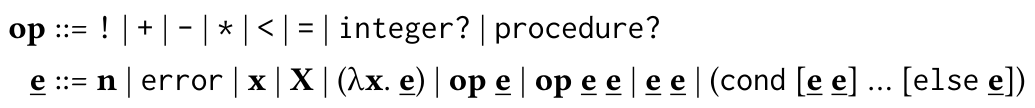}}}\end{FigureInside}\end{Centerfigure}

\Centertext{\Legend{\FigureTarget{\label{t:x28counter_x28x22figurex22_x22figx3asourcex2dlanguagex22x29x29}\textsf{Fig.}~\textsf{3}. }{t:x28counter_x28x22figurex22_x22figx3asourcex2dlanguagex22x29x29}\textsf{The Syntax of User Programs}}}\end{Figure}

The language of user programs is a typical dynamic functional language based on the  call{-}by{-}value
$\lambda${-}calculus. Figure~\hyperref[t:x28counter_x28x22figurex22_x22figx3asourcex2dlanguagex22x29x29]{\FigureRef{3}{t:x28counter_x28x22figurex22_x22figx3asourcex2dlanguagex22x29x29}} collects the constructs of the
 language that include numbers \raisebox{-3.1874999999999982bp}{\makebox[5.913281250000001bp][l]{\includegraphics[trim=2.4000000000000004 2.4000000000000004 2.4000000000000004 2.4000000000000004]{pict_96.pdf}}}, \raisebox{-3.1874999999999982bp}{\makebox[24.0bp][l]{\includegraphics[trim=2.4000000000000004 2.4000000000000004 2.4000000000000004 2.4000000000000004]{pict_4.pdf}}},
 multi{-}way conditional expressions \raisebox{-3.1874999999999982bp}{\makebox[19.200000000000003bp][l]{\includegraphics[trim=2.4000000000000004 2.4000000000000004 2.4000000000000004 2.4000000000000004]{pict_12.pdf}}},
 and concolic variables \raisebox{-3.1874999999999982bp}{\makebox[6.8921875bp][l]{\includegraphics[trim=2.4000000000000004 2.4000000000000004 2.4000000000000004 2.4000000000000004]{pict_2.pdf}}} which as we discuss
 in the previous sections correspond to
 the inputs of a program. For closed user programs, i.e., those without concolic
 or other free variables, we define a standard  call{-}by{-}value reduction
 semantics with reduction relation \raisebox{-3.3617187499999996bp}{\makebox[19.7734375bp][l]{\includegraphics[trim=2.4000000000000004 2.4000000000000004 2.4000000000000004 2.4000000000000004]{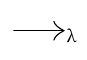}}}. The complete definition of the
 language of the user programs is in
 the supplementary material.

\begin{Figure}\begin{Centerfigure}\begin{FigureInside}\raisebox{-2.6703124999999996bp}{\makebox[332.9312500000001bp][l]{\includegraphics[trim=2.4000000000000004 2.4000000000000004 2.4000000000000004 2.4000000000000004]{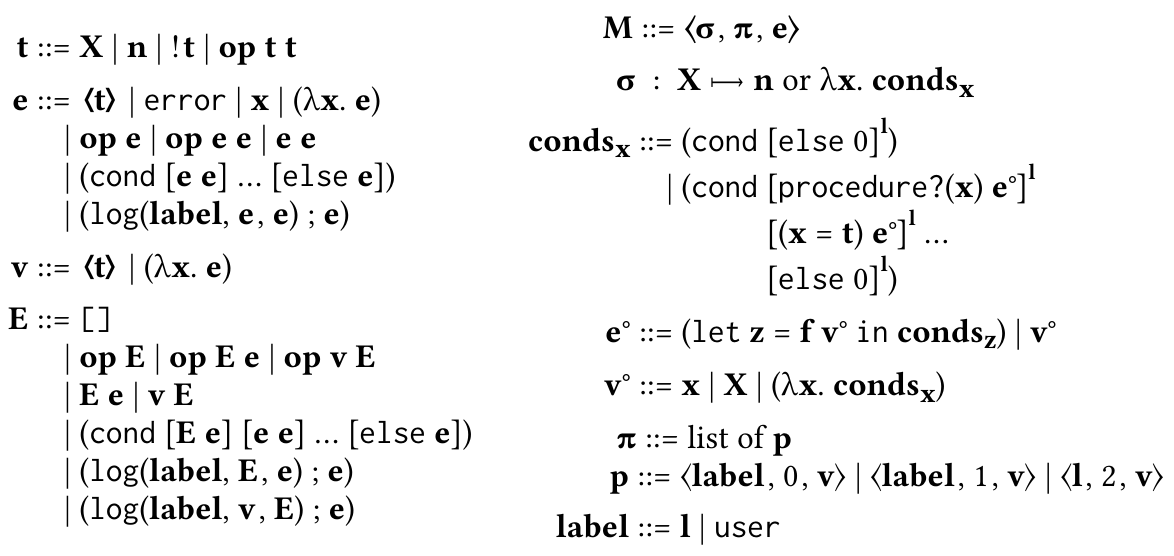}}}\end{FigureInside}\end{Centerfigure}

\Centertext{\Legend{\FigureTarget{\label{t:x28counter_x28x22figurex22_x22figx3amachinex2dlanguagex22x29x29}\textsf{Fig.}~\textsf{4}. }{t:x28counter_x28x22figurex22_x22figx3amachinex2dlanguagex22x29x29}\textsf{The Syntax of the Concolic Language and the Definition of the Concolic Machine}}}\end{Figure}

Figure~\hyperref[t:x28counter_x28x22figurex22_x22figx3amachinex2dlanguagex22x29x29]{\FigureRef{4}{t:x28counter_x28x22figurex22_x22figx3amachinex2dlanguagex22x29x29}}
shows the definition of the concolic language.
The syntax of concolic programs (in the left{-}hand side of the figure)
deviates from that of
user programs in two points: (i)
it combines and generalizes
numbers and concolic variables
into \emph{traced values} \raisebox{-3.2890624999999982bp}{\makebox[11.79609375bp][l]{\includegraphics[trim=2.4000000000000004 2.4000000000000004 2.4000000000000004 2.4000000000000004]{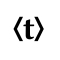}}}
and (ii) it adds an additional log{-}and{-}sequence construct
\raisebox{-2.3617187499999996bp}{\makebox[85.35624999999999bp][l]{\includegraphics[trim=2.4000000000000004 2.4000000000000004 2.4000000000000004 2.4000000000000004]{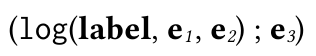}}}.

As we discuss in \ChapRef{\SectionNumberLink{t:x28part_x22secx3abackgroundx22x29}{2}}{A Refresher on First{-}Order Concolic Testing} and \ChapRef{\SectionNumberLink{t:x28part_x22secx3ahowx2dgeneratex22x29}{4}}{Directed Evolution of Canonical Functions}, an
expression trace \raisebox{-3.1874999999999982bp}{\makebox[3.436718749999999bp][l]{\includegraphics[trim=2.4000000000000004 2.4000000000000004 2.4000000000000004 2.4000000000000004]{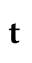}}} describes how  a number depends on the inputs
of the program. That is a number may be directly the value of a concolic
variable \raisebox{-3.1874999999999982bp}{\makebox[6.8921875bp][l]{\includegraphics[trim=2.4000000000000004 2.4000000000000004 2.4000000000000004 2.4000000000000004]{pict_2.pdf}}}, or a constant number \raisebox{-3.1874999999999982bp}{\makebox[5.913281250000001bp][l]{\includegraphics[trim=2.4000000000000004 2.4000000000000004 2.4000000000000004 2.4000000000000004]{pict_96.pdf}}}, or the negation of a trace
\raisebox{-3.1874999999999982bp}{\makebox[7.000781249999999bp][l]{\includegraphics[trim=2.4000000000000004 2.4000000000000004 2.4000000000000004 2.4000000000000004]{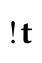}}} (since we use \raisebox{-3.1874999999999982bp}{\makebox[4.46328125bp][l]{\includegraphics[trim=2.4000000000000004 2.4000000000000004 2.4000000000000004 2.4000000000000004]{pict_7.pdf}}} and \raisebox{-3.1874999999999982bp}{\makebox[4.46328125bp][l]{\includegraphics[trim=2.4000000000000004 2.4000000000000004 2.4000000000000004 2.4000000000000004]{pict_8.pdf}}} as booleans), or the result of
an operation \raisebox{-2.3617187499999996bp}{\makebox[27.511718750000007bp][l]{\includegraphics[trim=2.4000000000000004 2.4000000000000004 2.4000000000000004 2.4000000000000004]{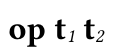}}}.
In a concolic program we represent a number with trace \raisebox{-3.1874999999999982bp}{\makebox[3.436718749999999bp][l]{\includegraphics[trim=2.4000000000000004 2.4000000000000004 2.4000000000000004 2.4000000000000004]{pict_110.pdf}}}  as \raisebox{-3.2890624999999982bp}{\makebox[11.79609375bp][l]{\includegraphics[trim=2.4000000000000004 2.4000000000000004 2.4000000000000004 2.4000000000000004]{pict_108.pdf}}},
and we recompute the actual numerical value when needed. In effect,
traced values allow us to {``}concolicly{''} multiplex the concrete and the symbolic
evaluation of programs.

The log{-}and{-}sequence construct
\raisebox{-2.3617187499999996bp}{\makebox[85.35624999999999bp][l]{\includegraphics[trim=2.4000000000000004 2.4000000000000004 2.4000000000000004 2.4000000000000004]{pict_109.pdf}}}
instructs the concolic machine
to extend  its current list of path constraints with a new constraint.
For the new constraint the machine uses the first three arguments of the
construct and we revisit the details of how it does so
further on. After the  machine adds the new constraint to its list of
path constraints, concolic evaluation continues with  \raisebox{-2.3617187499999996bp}{\makebox[7.1796875bp][l]{\includegraphics[trim=2.4000000000000004 2.4000000000000004 2.4000000000000004 2.4000000000000004]{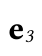}}}.

Before delving into the details of how the instrumentation translates
user programs to concolic ones, we discuss  the concolic machine \raisebox{-3.1874999999999982bp}{\makebox[8.6296875bp][l]{\includegraphics[trim=2.4000000000000004 2.4000000000000004 2.4000000000000004 2.4000000000000004]{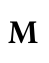}}}.
The right{-}hand side of figure~\hyperref[t:x28counter_x28x22figurex22_x22figx3amachinex2dlanguagex22x29x29]{\FigureRef{4}{t:x28counter_x28x22figurex22_x22figx3amachinex2dlanguagex22x29x29}} defines
the states of the machine as a triplet of a store
\raisebox{-3.1874999999999982bp}{\makebox[5.471875000000001bp][l]{\includegraphics[trim=2.4000000000000004 2.4000000000000004 2.4000000000000004 2.4000000000000004]{pict_95.pdf}}}, a list of path constraints  \raisebox{-3.1874999999999982bp}{\makebox[5.682812500000001bp][l]{\includegraphics[trim=2.4000000000000004 2.4000000000000004 2.4000000000000004 2.4000000000000004]{pict_99.pdf}}} and a concolic program
\raisebox{-3.1874999999999982bp}{\makebox[4.69375bp][l]{\includegraphics[trim=2.4000000000000004 2.4000000000000004 2.4000000000000004 2.4000000000000004]{pict_94.pdf}}}. As we discuss above \raisebox{-3.1874999999999982bp}{\makebox[5.471875000000001bp][l]{\includegraphics[trim=2.4000000000000004 2.4000000000000004 2.4000000000000004 2.4000000000000004]{pict_95.pdf}}} encodes the inputs of the user program
and maps the program{'}s concolic variables to either numbers
\raisebox{-3.1874999999999982bp}{\makebox[5.913281250000001bp][l]{\includegraphics[trim=2.4000000000000004 2.4000000000000004 2.4000000000000004 2.4000000000000004]{pict_96.pdf}}} or concolic functions \raisebox{-2.3617187499999996bp}{\makebox[44.544531250000006bp][l]{\includegraphics[trim=2.4000000000000004 2.4000000000000004 2.4000000000000004 2.4000000000000004]{pict_97.pdf}}}.
The body of the latter, \raisebox{-2.3617187499999996bp}{\makebox[29.550781250000007bp][l]{\includegraphics[trim=2.4000000000000004 2.4000000000000004 2.4000000000000004 2.4000000000000004]{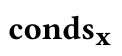}}}, follows the informal description from
  \ChapRef{\SectionNumberLink{t:x28part_x22secx3aapproximatex22x29}{3}}{Canonical Functions Are All We Need} and \ChapRef{\SectionNumberLink{t:x28part_x22secx3ahowx2dgeneratex22x29}{4}}{Directed Evolution of Canonical Functions}.
Formally  \raisebox{-2.3617187499999996bp}{\makebox[29.550781250000007bp][l]{\includegraphics[trim=2.4000000000000004 2.4000000000000004 2.4000000000000004 2.4000000000000004]{pict_115.pdf}}} is a \raisebox{-3.1874999999999982bp}{\makebox[19.200000000000003bp][l]{\includegraphics[trim=2.4000000000000004 2.4000000000000004 2.4000000000000004 2.4000000000000004]{pict_12.pdf}}} expression
that has either just an \raisebox{-3.1874999999999982bp}{\makebox[19.200000000000003bp][l]{\includegraphics[trim=2.4000000000000004 2.4000000000000004 2.4000000000000004 2.4000000000000004]{pict_50.pdf}}} branch that returns the default value \raisebox{-3.1874999999999982bp}{\makebox[4.46328125bp][l]{\includegraphics[trim=2.4000000000000004 2.4000000000000004 2.4000000000000004 2.4000000000000004]{pict_7.pdf}}}
or, in addition to the \raisebox{-3.1874999999999982bp}{\makebox[19.200000000000003bp][l]{\includegraphics[trim=2.4000000000000004 2.4000000000000004 2.4000000000000004 2.4000000000000004]{pict_50.pdf}}} branch it also has branches that inspect
variable \raisebox{-3.1874999999999982bp}{\makebox[5.38515625bp][l]{\includegraphics[trim=2.4000000000000004 2.4000000000000004 2.4000000000000004 2.4000000000000004]{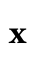}}}. In particular, the latter branches  test whether \raisebox{-3.1874999999999982bp}{\makebox[5.38515625bp][l]{\includegraphics[trim=2.4000000000000004 2.4000000000000004 2.4000000000000004 2.4000000000000004]{pict_116.pdf}}} is
a procedure or whether it is equal to a number. Each branch, including the
\raisebox{-3.1874999999999982bp}{\makebox[19.200000000000003bp][l]{\includegraphics[trim=2.4000000000000004 2.4000000000000004 2.4000000000000004 2.4000000000000004]{pict_50.pdf}}} branch, has a unique identifying label \raisebox{-3.1874999999999982bp}{\makebox[3.1195312499999996bp][l]{\includegraphics[trim=2.4000000000000004 2.4000000000000004 2.4000000000000004 2.4000000000000004]{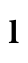}}}. The body of a
branch other than the \raisebox{-3.1874999999999982bp}{\makebox[19.200000000000003bp][l]{\includegraphics[trim=2.4000000000000004 2.4000000000000004 2.4000000000000004 2.4000000000000004]{pict_50.pdf}}} branch is either a variable or a concolic function or a \raisebox{-3.1874999999999982bp}{\makebox[14.400000000000004bp][l]{\includegraphics[trim=2.4000000000000004 2.4000000000000004 2.4000000000000004 2.4000000000000004]{pict_20.pdf}}}
expression that calls some function \raisebox{-3.1874999999999982bp}{\makebox[3.753124999999999bp][l]{\includegraphics[trim=2.4000000000000004 2.4000000000000004 2.4000000000000004 2.4000000000000004]{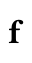}}} in scope and  makes the result
\raisebox{-3.1874999999999982bp}{\makebox[4.339062499999999bp][l]{\includegraphics[trim=2.4000000000000004 2.4000000000000004 2.4000000000000004 2.4000000000000004]{pict_9.pdf}}} available to a nested \raisebox{-2.3617187499999996bp}{\makebox[28.678906250000004bp][l]{\includegraphics[trim=2.4000000000000004 2.4000000000000004 2.4000000000000004 2.4000000000000004]{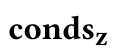}}} expression.\NoteBox{\NoteContent{We use \raisebox{-2.3617187499999996bp}{\makebox[61.02421875bp][l]{\includegraphics[trim=2.4000000000000004 2.4000000000000004 2.4000000000000004 2.4000000000000004]{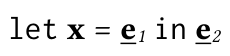}}}
as shorthand for \raisebox{-2.3617187499999996bp}{\makebox[43.19375000000001bp][l]{\includegraphics[trim=2.4000000000000004 2.4000000000000004 2.4000000000000004 2.4000000000000004]{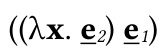}}}.}}

The list of path constraints  \raisebox{-3.1874999999999982bp}{\makebox[5.682812500000001bp][l]{\includegraphics[trim=2.4000000000000004 2.4000000000000004 2.4000000000000004 2.4000000000000004]{pict_99.pdf}}} consists of three kinds of constraints
\raisebox{-3.1874999999999982bp}{\makebox[5.577343750000001bp][l]{\includegraphics[trim=2.4000000000000004 2.4000000000000004 2.4000000000000004 2.4000000000000004]{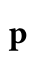}}}. The first are of the form
\raisebox{-3.1874999999999982bp}{\makebox[47.553906250000004bp][l]{\includegraphics[trim=2.4000000000000004 2.4000000000000004 2.4000000000000004 2.4000000000000004]{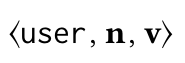}}} and record the result \raisebox{-3.1874999999999982bp}{\makebox[5.913281250000001bp][l]{\includegraphics[trim=2.4000000000000004 2.4000000000000004 2.4000000000000004 2.4000000000000004]{pict_96.pdf}}} (\raisebox{-3.1874999999999982bp}{\makebox[4.46328125bp][l]{\includegraphics[trim=2.4000000000000004 2.4000000000000004 2.4000000000000004 2.4000000000000004]{pict_7.pdf}}} or \raisebox{-3.1874999999999982bp}{\makebox[4.46328125bp][l]{\includegraphics[trim=2.4000000000000004 2.4000000000000004 2.4000000000000004 2.4000000000000004]{pict_8.pdf}}}) of a test
in a \raisebox{-3.1874999999999982bp}{\makebox[19.200000000000003bp][l]{\includegraphics[trim=2.4000000000000004 2.4000000000000004 2.4000000000000004 2.4000000000000004]{pict_12.pdf}}} expression of the user program together with its expression
trace {---} recall that traces are embedded in the representation of
first{-}order values in the concolic language. These constraints are akin to the
first{-}order constraints from \ChapRef{\SectionNumberLink{t:x28part_x22secx3ahowx2dgeneratex22x29}{4}}{Directed Evolution of Canonical Functions} such as
\raisebox{-3.2890624999999982bp}{\makebox[59.68593750000001bp][l]{\includegraphics[trim=2.4000000000000004 2.4000000000000004 2.4000000000000004 2.4000000000000004]{pict_46.pdf}}}. The other two kinds correspond to
the input{-}related constraints from \ChapRef{\SectionNumberLink{t:x28part_x22secx3ahowx2dgeneratex22x29}{4}}{Directed Evolution of Canonical Functions}.
In the model, branch constraints
are of the form \raisebox{-3.1874999999999982bp}{\makebox[31.473437500000003bp][l]{\includegraphics[trim=2.4000000000000004 2.4000000000000004 2.4000000000000004 2.4000000000000004]{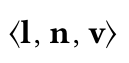}}} and, as before, record whether the
test from branch \raisebox{-3.1874999999999982bp}{\makebox[3.1195312499999996bp][l]{\includegraphics[trim=2.4000000000000004 2.4000000000000004 2.4000000000000004 2.4000000000000004]{pict_117.pdf}}} of a \raisebox{-3.1874999999999982bp}{\makebox[19.200000000000003bp][l]{\includegraphics[trim=2.4000000000000004 2.4000000000000004 2.4000000000000004 2.4000000000000004]{pict_12.pdf}}} expression in an input succeeds or fails
together with the (traced) result of the test. Test constraints
take the form \raisebox{-3.1874999999999982bp}{\makebox[30.023437500000007bp][l]{\includegraphics[trim=2.4000000000000004 2.4000000000000004 2.4000000000000004 2.4000000000000004]{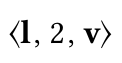}}} and they record that all the tests of a
\raisebox{-3.1874999999999982bp}{\makebox[19.200000000000003bp][l]{\includegraphics[trim=2.4000000000000004 2.4000000000000004 2.4000000000000004 2.4000000000000004]{pict_12.pdf}}} expression whose \raisebox{-3.1874999999999982bp}{\makebox[19.200000000000003bp][l]{\includegraphics[trim=2.4000000000000004 2.4000000000000004 2.4000000000000004 2.4000000000000004]{pict_50.pdf}}} branch has label  \raisebox{-3.1874999999999982bp}{\makebox[3.1195312499999996bp][l]{\includegraphics[trim=2.4000000000000004 2.4000000000000004 2.4000000000000004 2.4000000000000004]{pict_117.pdf}}}  may inspect value \raisebox{-3.1874999999999982bp}{\makebox[5.078125bp][l]{\includegraphics[trim=2.4000000000000004 2.4000000000000004 2.4000000000000004 2.4000000000000004]{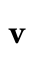}}}.

\begin{Figure}\begin{Centerfigure}\begin{FigureInside}\raisebox{-2.949218749999977bp}{\makebox[354.1070312500001bp][l]{\includegraphics[trim=2.4000000000000004 2.4000000000000004 2.4000000000000004 2.4000000000000004]{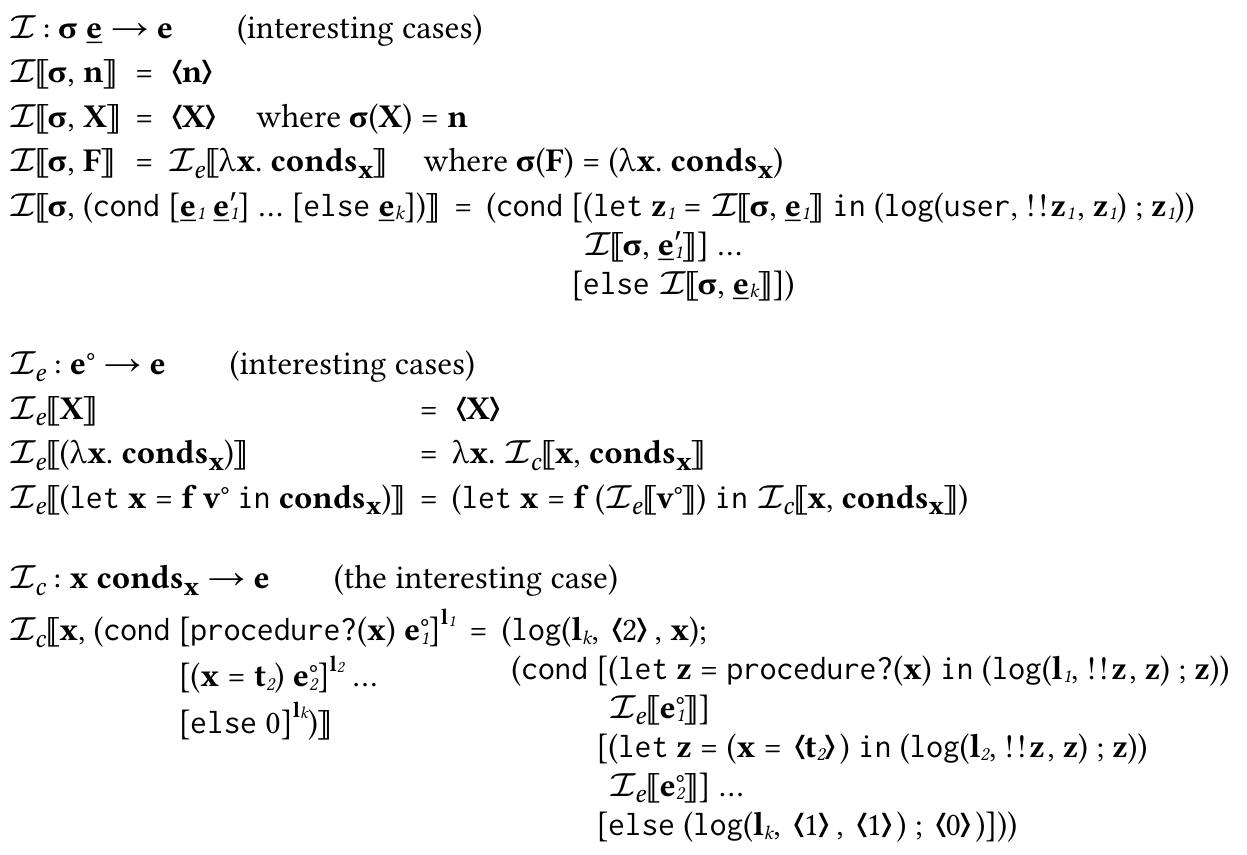}}}\end{FigureInside}\end{Centerfigure}

\Centertext{\Legend{\FigureTarget{\label{t:x28counter_x28x22figurex22_x22figx3ainstrumentationx22x29x29}\textsf{Fig.}~\textsf{5}. }{t:x28counter_x28x22figurex22_x22figx3ainstrumentationx22x29x29}\textsf{Interesting Cases of The Instrumentation Functions}}}\end{Figure}

The instrumentation meta{-}function \raisebox{-3.1874999999999982bp}{\makebox[7.487500000000001bp][l]{\includegraphics[trim=2.4000000000000004 2.4000000000000004 2.4000000000000004 2.4000000000000004]{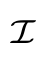}}}
consumes a store \raisebox{-3.1874999999999982bp}{\makebox[5.471875000000001bp][l]{\includegraphics[trim=2.4000000000000004 2.4000000000000004 2.4000000000000004 2.4000000000000004]{pict_95.pdf}}} and a user program
\raisebox{-3.1874999999999982bp}{\makebox[4.69375bp][l]{\includegraphics[trim=2.4000000000000004 2.4000000000000004 2.4000000000000004 2.4000000000000004]{pict_93.pdf}}} and produces an equivalent concolic program
\raisebox{-3.1874999999999982bp}{\makebox[4.69375bp][l]{\includegraphics[trim=2.4000000000000004 2.4000000000000004 2.4000000000000004 2.4000000000000004]{pict_94.pdf}}} that has \raisebox{-3.1874999999999982bp}{\makebox[14.400000000000004bp][l]{\includegraphics[trim=2.4000000000000004 2.4000000000000004 2.4000000000000004 2.4000000000000004]{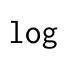}}} expressions at the appropriate places
to record path constraints.
Figure~\hyperref[t:x28counter_x28x22figurex22_x22figx3ainstrumentationx22x29x29]{\FigureRef{5}{t:x28counter_x28x22figurex22_x22figx3ainstrumentationx22x29x29}} shows the interesting cases of
\raisebox{-3.1874999999999982bp}{\makebox[30.324218750000007bp][l]{\includegraphics[trim=2.4000000000000004 2.4000000000000004 2.4000000000000004 2.4000000000000004]{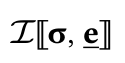}}} while the remaining ones
recur structurally on \raisebox{-3.1874999999999982bp}{\makebox[4.69375bp][l]{\includegraphics[trim=2.4000000000000004 2.4000000000000004 2.4000000000000004 2.4000000000000004]{pict_93.pdf}}}.
The interesting cases are
those concerning numbers, \raisebox{-3.1874999999999982bp}{\makebox[19.200000000000003bp][l]{\includegraphics[trim=2.4000000000000004 2.4000000000000004 2.4000000000000004 2.4000000000000004]{pict_12.pdf}}} expressions and concolic variables.
For a number \raisebox{-3.1874999999999982bp}{\makebox[5.913281250000001bp][l]{\includegraphics[trim=2.4000000000000004 2.4000000000000004 2.4000000000000004 2.4000000000000004]{pict_96.pdf}}}, the instrumentation embeds \raisebox{-3.1874999999999982bp}{\makebox[5.913281250000001bp][l]{\includegraphics[trim=2.4000000000000004 2.4000000000000004 2.4000000000000004 2.4000000000000004]{pict_96.pdf}}} in a traced value.
For a \raisebox{-3.1874999999999982bp}{\makebox[19.200000000000003bp][l]{\includegraphics[trim=2.4000000000000004 2.4000000000000004 2.4000000000000004 2.4000000000000004]{pict_12.pdf}}} expression, it transforms the expression in a mostly
recursive manner except that it injects a \raisebox{-3.1874999999999982bp}{\makebox[14.400000000000004bp][l]{\includegraphics[trim=2.4000000000000004 2.4000000000000004 2.4000000000000004 2.4000000000000004]{pict_129.pdf}}} expression for the
result of the test of each
branch that records it together with its expression
trace.\NoteBox{\NoteContent{\raisebox{-3.1874999999999982bp}{\makebox[4.800000000000001bp][l]{\includegraphics[trim=2.4000000000000004 2.4000000000000004 2.4000000000000004 2.4000000000000004]{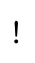}}}\raisebox{-3.1874999999999982bp}{\makebox[4.800000000000001bp][l]{\includegraphics[trim=2.4000000000000004 2.4000000000000004 2.4000000000000004 2.4000000000000004]{pict_131.pdf}}}\raisebox{-2.3617187499999996bp}{\makebox[6.824999999999999bp][l]{\includegraphics[trim=2.4000000000000004 2.4000000000000004 2.4000000000000004 2.4000000000000004]{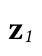}}} is a double negation that turns \raisebox{-2.3617187499999996bp}{\makebox[6.824999999999999bp][l]{\includegraphics[trim=2.4000000000000004 2.4000000000000004 2.4000000000000004 2.4000000000000004]{pict_132.pdf}}} to \raisebox{-3.1874999999999982bp}{\makebox[4.46328125bp][l]{\includegraphics[trim=2.4000000000000004 2.4000000000000004 2.4000000000000004 2.4000000000000004]{pict_8.pdf}}} if
it is a number other than \raisebox{-3.1874999999999982bp}{\makebox[4.46328125bp][l]{\includegraphics[trim=2.4000000000000004 2.4000000000000004 2.4000000000000004 2.4000000000000004]{pict_7.pdf}}}.}} The injected \raisebox{-3.1874999999999982bp}{\makebox[14.400000000000004bp][l]{\includegraphics[trim=2.4000000000000004 2.4000000000000004 2.4000000000000004 2.4000000000000004]{pict_129.pdf}}} expression
generates a first{-}order constraint as we discuss above.

The cases for a concolic variable \raisebox{-3.1874999999999982bp}{\makebox[6.8921875bp][l]{\includegraphics[trim=2.4000000000000004 2.4000000000000004 2.4000000000000004 2.4000000000000004]{pict_2.pdf}}} are the most involved.  If the
store \raisebox{-3.1874999999999982bp}{\makebox[5.471875000000001bp][l]{\includegraphics[trim=2.4000000000000004 2.4000000000000004 2.4000000000000004 2.4000000000000004]{pict_95.pdf}}} maps \raisebox{-3.1874999999999982bp}{\makebox[6.8921875bp][l]{\includegraphics[trim=2.4000000000000004 2.4000000000000004 2.4000000000000004 2.4000000000000004]{pict_2.pdf}}} to a number then, the instrumentation simply embeds
\raisebox{-3.1874999999999982bp}{\makebox[6.8921875bp][l]{\includegraphics[trim=2.4000000000000004 2.4000000000000004 2.4000000000000004 2.4000000000000004]{pict_2.pdf}}} in a traced value similar to the case for numbers.  However, if \raisebox{-3.1874999999999982bp}{\makebox[5.471875000000001bp][l]{\includegraphics[trim=2.4000000000000004 2.4000000000000004 2.4000000000000004 2.4000000000000004]{pict_95.pdf}}}
maps \raisebox{-3.1874999999999982bp}{\makebox[6.8921875bp][l]{\includegraphics[trim=2.4000000000000004 2.4000000000000004 2.4000000000000004 2.4000000000000004]{pict_2.pdf}}} to a canonical function, the instrumentation cannot turn \raisebox{-3.1874999999999982bp}{\makebox[6.8921875bp][l]{\includegraphics[trim=2.4000000000000004 2.4000000000000004 2.4000000000000004 2.4000000000000004]{pict_2.pdf}}}
to the corresponding traced value. After all, expression traces aim to capture
 formulas that the concolic tester records in path constraints
so that it can use them to
issue queries to the SMT solver. Thus they have to refer strictly to
variables that hold first{-}order data.  Consequently, \raisebox{-3.1874999999999982bp}{\makebox[7.487500000000001bp][l]{\includegraphics[trim=2.4000000000000004 2.4000000000000004 2.4000000000000004 2.4000000000000004]{pict_128.pdf}}}
delegates to two further meta{-}functions \raisebox{-2.3617187499999996bp}{\makebox[10.6953125bp][l]{\includegraphics[trim=2.4000000000000004 2.4000000000000004 2.4000000000000004 2.4000000000000004]{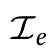}}} and
\raisebox{-2.3617187499999996bp}{\makebox[10.59921875bp][l]{\includegraphics[trim=2.4000000000000004 2.4000000000000004 2.4000000000000004 2.4000000000000004]{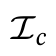}}} that produce an instrumented function that gets
substituted for the concolic variable in the concolic
program.

The \raisebox{-2.3617187499999996bp}{\makebox[10.6953125bp][l]{\includegraphics[trim=2.4000000000000004 2.4000000000000004 2.4000000000000004 2.4000000000000004]{pict_133.pdf}}} meta{-}function is similar to
\raisebox{-3.1874999999999982bp}{\makebox[7.487500000000001bp][l]{\includegraphics[trim=2.4000000000000004 2.4000000000000004 2.4000000000000004 2.4000000000000004]{pict_128.pdf}}}\NoteBox{\NoteContent{In contrast to \raisebox{-3.1874999999999982bp}{\makebox[7.487500000000001bp][l]{\includegraphics[trim=2.4000000000000004 2.4000000000000004 2.4000000000000004 2.4000000000000004]{pict_128.pdf}}},
\raisebox{-2.3617187499999996bp}{\makebox[10.6953125bp][l]{\includegraphics[trim=2.4000000000000004 2.4000000000000004 2.4000000000000004 2.4000000000000004]{pict_133.pdf}}} has a single case for concolic variables \raisebox{-3.1874999999999982bp}{\makebox[6.8921875bp][l]{\includegraphics[trim=2.4000000000000004 2.4000000000000004 2.4000000000000004 2.4000000000000004]{pict_2.pdf}}}. This is
because by construction concolic variables in canonical functions point to
first{-}order data. We return to that point further on when we discuss the
evolution of inputs.}}
except that it calls  \raisebox{-2.3617187499999996bp}{\makebox[10.59921875bp][l]{\includegraphics[trim=2.4000000000000004 2.4000000000000004 2.4000000000000004 2.4000000000000004]{pict_134.pdf}}} for the instrumentation of
a \raisebox{-3.1874999999999982bp}{\makebox[19.200000000000003bp][l]{\includegraphics[trim=2.4000000000000004 2.4000000000000004 2.4000000000000004 2.4000000000000004]{pict_12.pdf}}} expression and passes along to it the value its tests may
inspect. In turn, \raisebox{-2.3617187499999996bp}{\makebox[10.59921875bp][l]{\includegraphics[trim=2.4000000000000004 2.4000000000000004 2.4000000000000004 2.4000000000000004]{pict_134.pdf}}} adds at the beginning of the
\raisebox{-3.1874999999999982bp}{\makebox[19.200000000000003bp][l]{\includegraphics[trim=2.4000000000000004 2.4000000000000004 2.4000000000000004 2.4000000000000004]{pict_12.pdf}}} expression a \raisebox{-3.1874999999999982bp}{\makebox[14.400000000000004bp][l]{\includegraphics[trim=2.4000000000000004 2.4000000000000004 2.4000000000000004 2.4000000000000004]{pict_129.pdf}}} expression that records a test path
constraint for the conditional expression and then similar to \raisebox{-3.1874999999999982bp}{\makebox[7.487500000000001bp][l]{\includegraphics[trim=2.4000000000000004 2.4000000000000004 2.4000000000000004 2.4000000000000004]{pict_128.pdf}}}, it
injects a \raisebox{-3.1874999999999982bp}{\makebox[14.400000000000004bp][l]{\includegraphics[trim=2.4000000000000004 2.4000000000000004 2.4000000000000004 2.4000000000000004]{pict_129.pdf}}} expression in each branch of the \raisebox{-3.1874999999999982bp}{\makebox[19.200000000000003bp][l]{\includegraphics[trim=2.4000000000000004 2.4000000000000004 2.4000000000000004 2.4000000000000004]{pict_12.pdf}}} expression to
record branch constraints.

\begin{Figure}\begin{Centerfigure}\begin{FigureInside}\raisebox{-0.9421874999999886bp}{\makebox[354.38671875000006bp][l]{\includegraphics[trim=2.4000000000000004 2.4000000000000004 2.4000000000000004 2.4000000000000004]{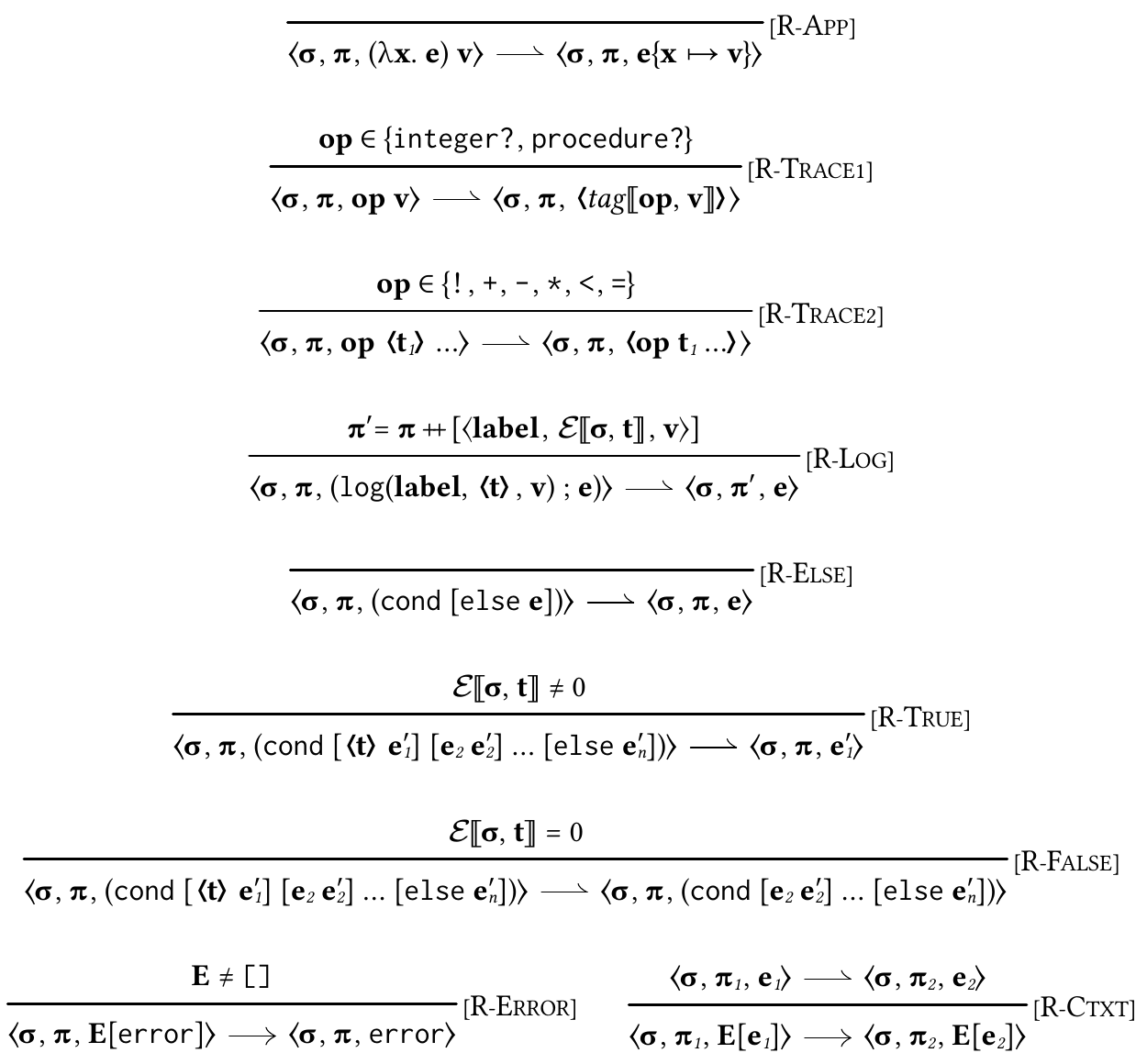}}}\hspace*{\fill}\\\hspace*{\fill}\\\raisebox{-2.386718749999991bp}{\makebox[349.20000000000005bp][l]{\includegraphics[trim=2.4000000000000004 2.4000000000000004 2.4000000000000004 2.4000000000000004]{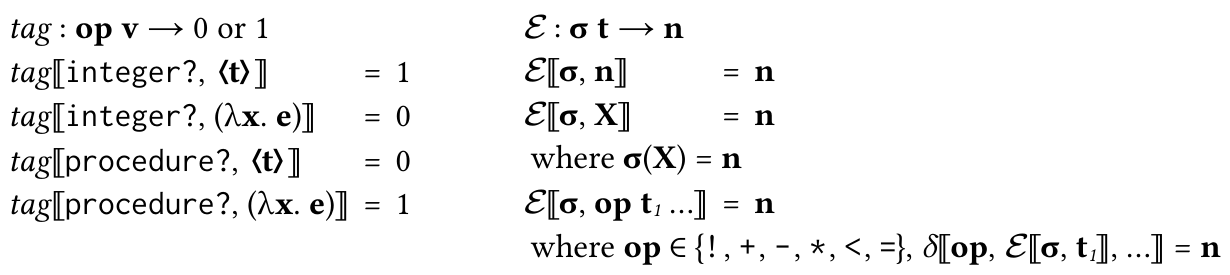}}}\end{FigureInside}\end{Centerfigure}

\Centertext{\Legend{\FigureTarget{\label{t:x28counter_x28x22figurex22_x22figx3aconcolicx2dexecutionx22x29x29}\textsf{Fig.}~\textsf{6}. }{t:x28counter_x28x22figurex22_x22figx3aconcolicx2dexecutionx22x29x29}\textsf{The Reduction Relation of Concolic Execution}}}\end{Figure}

We conclude this section with a discussion of the reduction rules for concolic evaluation.
from figure~\hyperref[t:x28counter_x28x22figurex22_x22figx3aconcolicx2dexecutionx22x29x29]{\FigureRef{6}{t:x28counter_x28x22figurex22_x22figx3aconcolicx2dexecutionx22x29x29}}.
Rule \raisebox{-2.6179687499999993bp}{\makebox[27.6546875bp][l]{\includegraphics[trim=2.4000000000000004 2.4000000000000004 2.4000000000000004 2.4000000000000004]{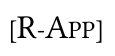}}} is the standard call{-}by{-}value
$\beta${-}reduction.
The next two rules, \raisebox{-2.6179687499999993bp}{\makebox[39.917187500000004bp][l]{\includegraphics[trim=2.4000000000000004 2.4000000000000004 2.4000000000000004 2.4000000000000004]{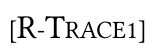}}} and \raisebox{-2.6179687499999993bp}{\makebox[39.917187500000004bp][l]{\includegraphics[trim=2.4000000000000004 2.4000000000000004 2.4000000000000004 2.4000000000000004]{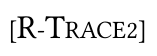}}},
reduce primitive operators and produce appropriate expression traces.
When the given operator is one of \raisebox{-3.1874999999999982bp}{\makebox[38.400000000000006bp][l]{\includegraphics[trim=2.4000000000000004 2.4000000000000004 2.4000000000000004 2.4000000000000004]{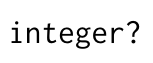}}} or \raisebox{-3.1874999999999982bp}{\makebox[48.00000000000001bp][l]{\includegraphics[trim=2.4000000000000004 2.4000000000000004 2.4000000000000004 2.4000000000000004]{pict_58.pdf}}},
\raisebox{-2.6179687499999993bp}{\makebox[39.917187500000004bp][l]{\includegraphics[trim=2.4000000000000004 2.4000000000000004 2.4000000000000004 2.4000000000000004]{pict_138.pdf}}} inspects the tag of its argument and produces \raisebox{-3.2890624999999982bp}{\makebox[12.822656250000001bp][l]{\includegraphics[trim=2.4000000000000004 2.4000000000000004 2.4000000000000004 2.4000000000000004]{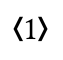}}}
if the tag is the right one and \raisebox{-3.2890624999999982bp}{\makebox[12.822656250000001bp][l]{\includegraphics[trim=2.4000000000000004 2.4000000000000004 2.4000000000000004 2.4000000000000004]{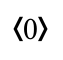}}} otherwise.
When the given operator is not a predicate,
\raisebox{-2.6179687499999993bp}{\makebox[39.917187500000004bp][l]{\includegraphics[trim=2.4000000000000004 2.4000000000000004 2.4000000000000004 2.4000000000000004]{pict_139.pdf}}} constructs an expression trace
from the given operator and the traces of the operands.
Rule \raisebox{-2.6179687499999993bp}{\makebox[28.247656250000006bp][l]{\includegraphics[trim=2.4000000000000004 2.4000000000000004 2.4000000000000004 2.4000000000000004]{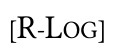}}} appends a new path constraint to the
list of path constraints \raisebox{-3.1874999999999982bp}{\makebox[5.682812500000001bp][l]{\includegraphics[trim=2.4000000000000004 2.4000000000000004 2.4000000000000004 2.4000000000000004]{pict_99.pdf}}} of the machine and then proceeds with the
\raisebox{-3.1874999999999982bp}{\makebox[4.69375bp][l]{\includegraphics[trim=2.4000000000000004 2.4000000000000004 2.4000000000000004 2.4000000000000004]{pict_94.pdf}}} expression.
The new path constraint contains the label from the first argument
of \raisebox{-3.1874999999999982bp}{\makebox[14.400000000000004bp][l]{\includegraphics[trim=2.4000000000000004 2.4000000000000004 2.4000000000000004 2.4000000000000004]{pict_129.pdf}}}, the number that corresponds to the traced value of the second
argument and the traced value of the third argument.
The next three rules, \raisebox{-2.6179687499999993bp}{\makebox[29.843750000000007bp][l]{\includegraphics[trim=2.4000000000000004 2.4000000000000004 2.4000000000000004 2.4000000000000004]{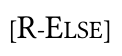}}}, \raisebox{-2.6179687499999993bp}{\makebox[31.918750000000003bp][l]{\includegraphics[trim=2.4000000000000004 2.4000000000000004 2.4000000000000004 2.4000000000000004]{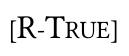}}} and \raisebox{-2.6179687499999993bp}{\makebox[33.732031250000006bp][l]{\includegraphics[trim=2.4000000000000004 2.4000000000000004 2.4000000000000004 2.4000000000000004]{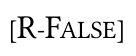}}}
govern the evaluation of \raisebox{-3.1874999999999982bp}{\makebox[19.200000000000003bp][l]{\includegraphics[trim=2.4000000000000004 2.4000000000000004 2.4000000000000004 2.4000000000000004]{pict_12.pdf}}} expressions .
If the number that corresponds  to the \raisebox{-3.2890624999999982bp}{\makebox[11.79609375bp][l]{\includegraphics[trim=2.4000000000000004 2.4000000000000004 2.4000000000000004 2.4000000000000004]{pict_108.pdf}}} test expression of
the first branch of the \raisebox{-3.1874999999999982bp}{\makebox[19.200000000000003bp][l]{\includegraphics[trim=2.4000000000000004 2.4000000000000004 2.4000000000000004 2.4000000000000004]{pict_12.pdf}}} expression is non{-}zero,
\raisebox{-2.6179687499999993bp}{\makebox[31.918750000000003bp][l]{\includegraphics[trim=2.4000000000000004 2.4000000000000004 2.4000000000000004 2.4000000000000004]{pict_145.pdf}}} reduces the \raisebox{-3.1874999999999982bp}{\makebox[19.200000000000003bp][l]{\includegraphics[trim=2.4000000000000004 2.4000000000000004 2.4000000000000004 2.4000000000000004]{pict_12.pdf}}} expression to  the
body of the first branch. Otherwise, \raisebox{-2.6179687499999993bp}{\makebox[33.732031250000006bp][l]{\includegraphics[trim=2.4000000000000004 2.4000000000000004 2.4000000000000004 2.4000000000000004]{pict_146.pdf}}} drops the first branch.
When the given \raisebox{-3.1874999999999982bp}{\makebox[19.200000000000003bp][l]{\includegraphics[trim=2.4000000000000004 2.4000000000000004 2.4000000000000004 2.4000000000000004]{pict_12.pdf}}} expression only contains the \raisebox{-3.1874999999999982bp}{\makebox[19.200000000000003bp][l]{\includegraphics[trim=2.4000000000000004 2.4000000000000004 2.4000000000000004 2.4000000000000004]{pict_50.pdf}}} branch,
\raisebox{-2.6179687499999993bp}{\makebox[29.843750000000007bp][l]{\includegraphics[trim=2.4000000000000004 2.4000000000000004 2.4000000000000004 2.4000000000000004]{pict_144.pdf}}} proceeds with its body.
Finally, rules \raisebox{-2.6179687499999993bp}{\makebox[36.27187500000001bp][l]{\includegraphics[trim=2.4000000000000004 2.4000000000000004 2.4000000000000004 2.4000000000000004]{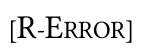}}} and \raisebox{-2.6179687499999993bp}{\makebox[32.74218750000001bp][l]{\includegraphics[trim=2.4000000000000004 2.4000000000000004 2.4000000000000004 2.4000000000000004]{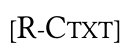}}} introduce the
compatible closure of the reduction relation (modulo errors).

\Ssubsection{Evolving  New Inputs}{Evolving  New Inputs}\label{t:x28part_x22secx3amodelx2devolvex22x29}

\begin{Figure}\begin{Centerfigure}\begin{FigureInside}\fbox{\parbox{0.96\textwidth}{\begin{quote}\noindent
\raisebox{-2.6179687499999993bp}{\makebox[67.65624999999999bp][l]{\includegraphics[trim=2.4000000000000004 2.4000000000000004 2.4000000000000004 2.4000000000000004]{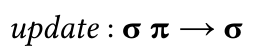}}}

\begin{quote}Given a store \raisebox{-3.1874999999999982bp}{\makebox[5.471875000000001bp][l]{\includegraphics[trim=2.4000000000000004 2.4000000000000004 2.4000000000000004 2.4000000000000004]{pict_95.pdf}}} and a list of path constraints \raisebox{-3.1874999999999982bp}{\makebox[5.682812500000001bp][l]{\includegraphics[trim=2.4000000000000004 2.4000000000000004 2.4000000000000004 2.4000000000000004]{pict_99.pdf}}},
encodes the path constraints as an SMT query,
invokes the SMT solver and updates the store with the solution.
The encoding of the list of path constraints into the query involves
(i) asserting all first-order constraints in the list;
(ii) asserting the expression traces from the branch constraints
      in the list if the branches' tests succeed or their negation if they fail and
(iii) asserting a constraint per conditional expression in the store that entails
the branches of the conditional are disjoint.\end{quote}
\vspace{0.7em}

\noindent
\raisebox{-2.5304687499999954bp}{\makebox[142.2578125bp][l]{\includegraphics[trim=2.4000000000000004 2.4000000000000004 2.4000000000000004 2.4000000000000004]{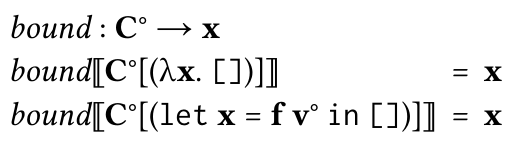}}}

\begin{quote}Given a canonical context s
that ends with a \textlambda-abstraction
or a \raisebox{-3.1874999999999982bp}{\makebox[14.400000000000004bp][l]{\includegraphics[trim=2.4000000000000004 2.4000000000000004 2.4000000000000004 2.4000000000000004]{pict_20.pdf}}}-binding,
extracts the variable introduced by the innermost binder.\end{quote}
\vspace{0.7em}

\noindent
$\raisebox{-3.1874999999999982bp}{\makebox[6.64296875bp][l]{\includegraphics[trim=2.4000000000000004 2.4000000000000004 2.4000000000000004 2.4000000000000004]{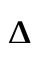}}} \subset \{ \raisebox{-3.1874999999999982bp}{\makebox[5.38515625bp][l]{\includegraphics[trim=2.4000000000000004 2.4000000000000004 2.4000000000000004 2.4000000000000004]{pict_116.pdf}}}, \raisebox{-3.1874999999999982bp}{\makebox[5.356249999999999bp][l]{\includegraphics[trim=2.4000000000000004 2.4000000000000004 2.4000000000000004 2.4000000000000004]{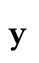}}}, \raisebox{-3.1874999999999982bp}{\makebox[4.339062499999999bp][l]{\includegraphics[trim=2.4000000000000004 2.4000000000000004 2.4000000000000004 2.4000000000000004]{pict_9.pdf}}}, \dots \}$
   stands for any finite subset of non-concolic variables.
\vspace{0.7em}

\noindent
\raisebox{-2.3617187499999996bp}{\makebox[147.99375bp][l]{\includegraphics[trim=2.4000000000000004 2.4000000000000004 2.4000000000000004 2.4000000000000004]{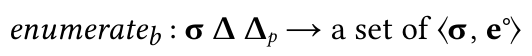}}}
\begin{quote}Given a store, two sets of variables ,
computes the set of new branch bodies, creating fresh concolic variables
and mapping them to numbers in the resulting store as needed.
The bodies of the new branches can refer to any
variables in the first set while let expressions
can apply any the
of variables in the second set.\end{quote}
\vspace{0.7em}

\noindent
\raisebox{-2.6179687499999993bp}{\makebox[59.946875000000006bp][l]{\includegraphics[trim=2.4000000000000004 2.4000000000000004 2.4000000000000004 2.4000000000000004]{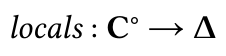}}}
\begin{quote}Given a canonical context d,
computes the set of all variables in scope in the hole.\end{quote}
\vspace{0.7em}

\noindent
\raisebox{-2.3617187499999996bp}{\makebox[63.858593750000004bp][l]{\includegraphics[trim=2.4000000000000004 2.4000000000000004 2.4000000000000004 2.4000000000000004]{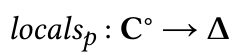}}}
\begin{quote}Given a canonical context of variables in scope in the whole that are bound to functions.\end{quote}\end{quote}}}\end{FigureInside}\end{Centerfigure}

\Centertext{\Legend{\FigureTarget{\label{t:x28counter_x28x22figurex22_x22figx3amutatex2dhelperx22x29x29}\textsf{Fig.}~\textsf{7}. }{t:x28counter_x28x22figurex22_x22figx3amutatex2dhelperx22x29x29}\textsf{Metafunctions Used by \raisebox{-3.1874999999999982bp}{\makebox[23.3828125bp][l]{\includegraphics[trim=2.4000000000000004 2.4000000000000004 2.4000000000000004 2.4000000000000004]{pict_100.pdf}}}
to Update Inputs and Compute Auxiliary Information}}}\end{Figure}

Once it obtains a list of path constraints \raisebox{-3.1874999999999982bp}{\makebox[5.682812500000001bp][l]{\includegraphics[trim=2.4000000000000004 2.4000000000000004 2.4000000000000004 2.4000000000000004]{pict_99.pdf}}} from the concolic evaluation of
a user program (after instrumentation), the concolic tester uses
meta{-}function \raisebox{-3.1874999999999982bp}{\makebox[23.3828125bp][l]{\includegraphics[trim=2.4000000000000004 2.4000000000000004 2.4000000000000004 2.4000000000000004]{pict_100.pdf}}} to modify the current input \raisebox{-3.1874999999999982bp}{\makebox[5.471875000000001bp][l]{\includegraphics[trim=2.4000000000000004 2.4000000000000004 2.4000000000000004 2.4000000000000004]{pict_95.pdf}}}.
Figure~\hyperref[t:x28counter_x28x22figurex22_x22figx3amutate1x22x29x29]{\FigureRef{8}{t:x28counter_x28x22figurex22_x22figx3amutate1x22x29x29}} through figure~\hyperref[t:x28counter_x28x22figurex22_x22figx3amutate3x22x29x29]{\FigureRef{10}{t:x28counter_x28x22figurex22_x22figx3amutate3x22x29x29}} present the
formal definition of \raisebox{-3.1874999999999982bp}{\makebox[47.208593750000006bp][l]{\includegraphics[trim=2.4000000000000004 2.4000000000000004 2.4000000000000004 2.4000000000000004]{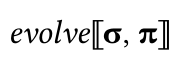}}} as a non{-}deterministic process
that describes all the valid next inputs after
an iteration of the concolic loop. The meta{-}function
also returns a new list of path constraints \raisebox{-3.1874999999999982bp}{\makebox[5.682812500000001bp][l]{\includegraphics[trim=2.4000000000000004 2.4000000000000004 2.4000000000000004 2.4000000000000004]{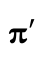}}} that, as we show in
\ChapRef{\SectionNumberLink{t:x28part_x22secx3atheoryx22x29}{6}}{Correctness of Higher{-}Order Concolic Testing}, predicts the path
constraints produced by the evaluation of  the user program with input
\raisebox{-3.1874999999999982bp}{\makebox[5.471875000000001bp][l]{\includegraphics[trim=2.4000000000000004 2.4000000000000004 2.4000000000000004 2.4000000000000004]{pict_101.pdf}}}.
Figure~\hyperref[t:x28counter_x28x22figurex22_x22figx3amutatex2dhelperx22x29x29]{\FigureRef{7}{t:x28counter_x28x22figurex22_x22figx3amutatex2dhelperx22x29x29}} provides a summary of the auxiliary
metafunctions that \raisebox{-3.1874999999999982bp}{\makebox[23.3828125bp][l]{\includegraphics[trim=2.4000000000000004 2.4000000000000004 2.4000000000000004 2.4000000000000004]{pict_100.pdf}}} employs. The
\raisebox{-3.1874999999999982bp}{\makebox[25.917187500000004bp][l]{\includegraphics[trim=2.4000000000000004 2.4000000000000004 2.4000000000000004 2.4000000000000004]{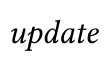}}} meta{-}function offers an interface to the SMT solver while
the rest compute information about the contexts of the store \raisebox{-3.1874999999999982bp}{\makebox[5.471875000000001bp][l]{\includegraphics[trim=2.4000000000000004 2.4000000000000004 2.4000000000000004 2.4000000000000004]{pict_95.pdf}}}.
The supplementary material contains their complete definitions.

\begin{Figure}\begin{Centerfigure}\begin{FigureInside}\raisebox{-3.4320312499999943bp}{\makebox[324.42234375000004bp][l]{\includegraphics[trim=2.4000000000000004 2.4000000000000004 2.4000000000000004 2.4000000000000004]{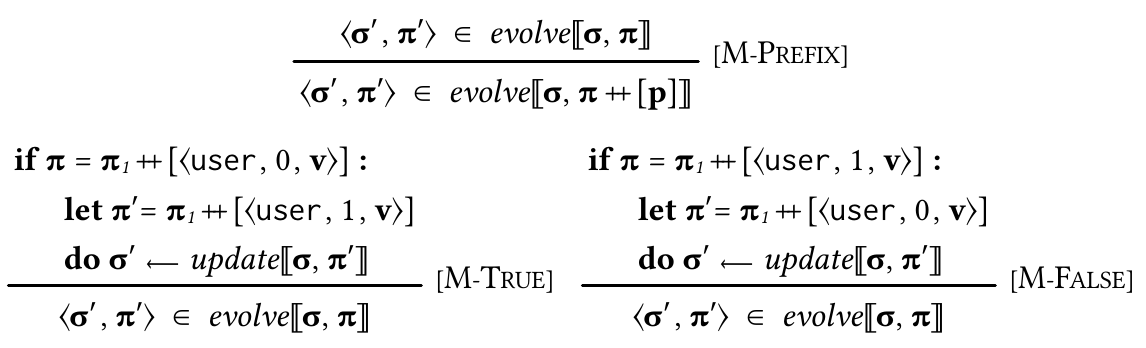}}}\end{FigureInside}\end{Centerfigure}

\Centertext{\Legend{\FigureTarget{\label{t:x28counter_x28x22figurex22_x22figx3amutate1x22x29x29}\textsf{Fig.}~\textsf{8}. }{t:x28counter_x28x22figurex22_x22figx3amutate1x22x29x29}\textsf{Evolving New Inputs (i): Negating Branches in User Programs}}}\end{Figure}

\begin{Figure}\begin{Centerfigure}\begin{FigureInside}\raisebox{-3.162499999999966bp}{\makebox[347.81250000000006bp][l]{\includegraphics[trim=2.4000000000000004 2.4000000000000004 2.4000000000000004 2.4000000000000004]{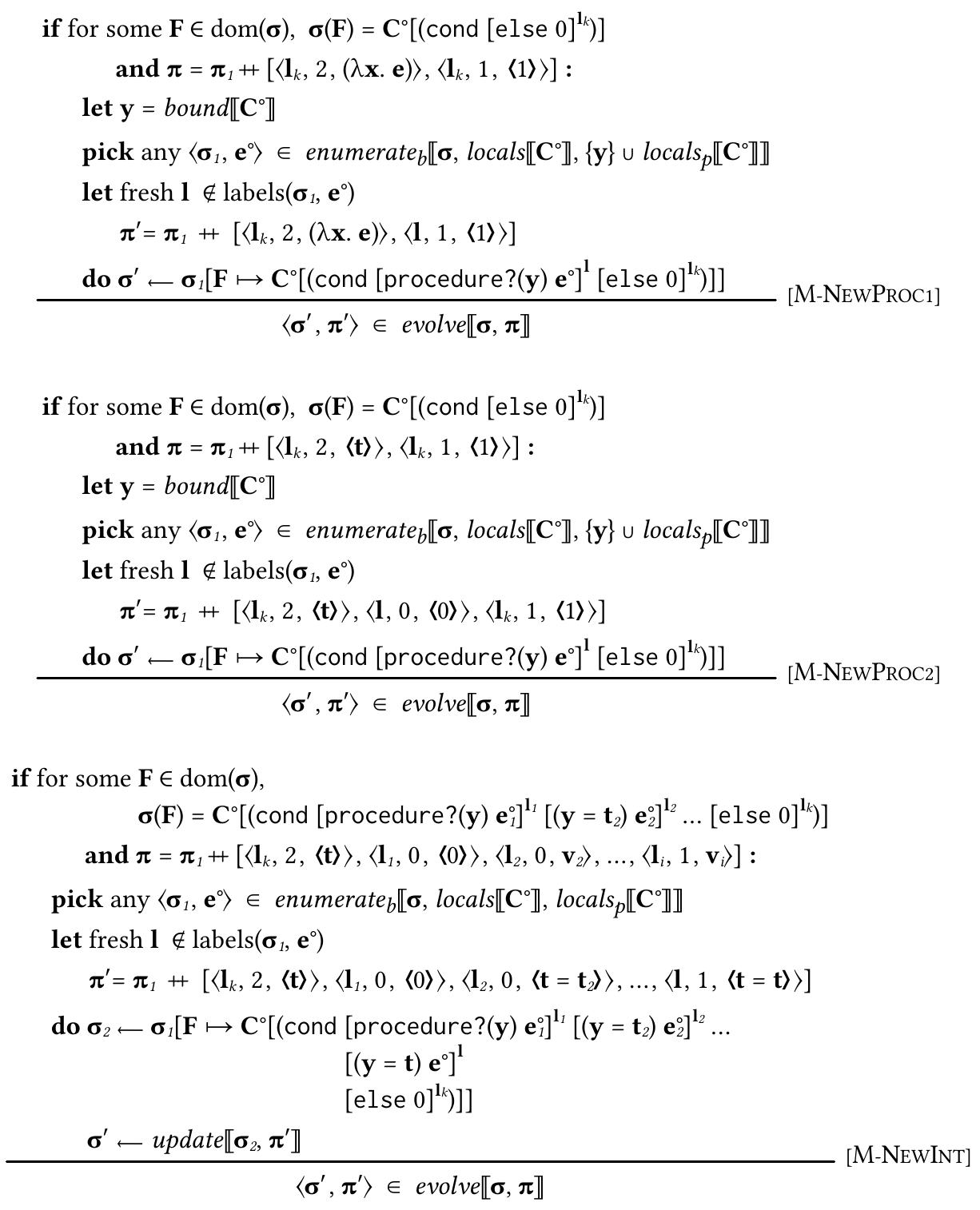}}}\end{FigureInside}\end{Centerfigure}

\Centertext{\Legend{\FigureTarget{\label{t:x28counter_x28x22figurex22_x22figx3amutate2x22x29x29}\textsf{Fig.}~\textsf{9}. }{t:x28counter_x28x22figurex22_x22figx3amutate2x22x29x29}\textsf{Evolving New Inputs (ii): Adding New Branches in Canonical Functions}}}\end{Figure}

\begin{Figure}\begin{Centerfigure}\begin{FigureInside}\raisebox{-3.1624999999999943bp}{\makebox[348.2328125000001bp][l]{\includegraphics[trim=2.4000000000000004 2.4000000000000004 2.4000000000000004 2.4000000000000004]{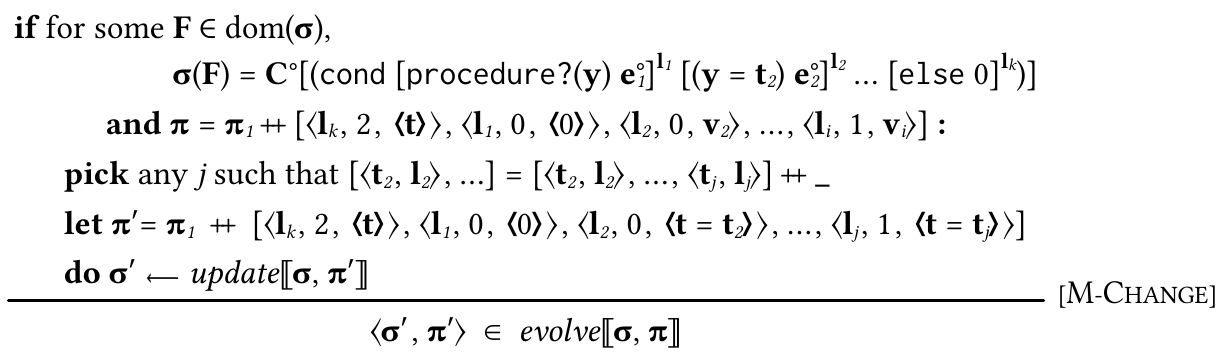}}}\end{FigureInside}\end{Centerfigure}

\Centertext{\Legend{\FigureTarget{\label{t:x28counter_x28x22figurex22_x22figx3amutate3x22x29x29}\textsf{Fig.}~\textsf{10}. }{t:x28counter_x28x22figurex22_x22figx3amutate3x22x29x29}\textsf{Evolving New Inputs (iii): Targeted Branch Constraint Modification}}}\end{Figure}

Figure~\hyperref[t:x28counter_x28x22figurex22_x22figx3amutate1x22x29x29]{\FigureRef{8}{t:x28counter_x28x22figurex22_x22figx3amutate1x22x29x29}} presents the first three rules of
\raisebox{-3.1874999999999982bp}{\makebox[23.3828125bp][l]{\includegraphics[trim=2.4000000000000004 2.4000000000000004 2.4000000000000004 2.4000000000000004]{pict_100.pdf}}}. The first rule, \raisebox{-2.6179687499999993bp}{\makebox[39.42265625bp][l]{\includegraphics[trim=2.4000000000000004 2.4000000000000004 2.4000000000000004 2.4000000000000004]{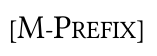}}}, allows the
concolic tester to cut off any suffix from the list of path
constraints and focus on the remaining prefix.
The next two rules, \raisebox{-2.6179687499999993bp}{\makebox[36.15078125000001bp][l]{\includegraphics[trim=2.4000000000000004 2.4000000000000004 2.4000000000000004 2.4000000000000004]{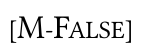}}} and \raisebox{-2.6179687499999993bp}{\makebox[34.337500000000006bp][l]{\includegraphics[trim=2.4000000000000004 2.4000000000000004 2.4000000000000004 2.4000000000000004]{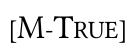}}},
handle the situation where the remaining list, \raisebox{-3.1874999999999982bp}{\makebox[5.682812500000001bp][l]{\includegraphics[trim=2.4000000000000004 2.4000000000000004 2.4000000000000004 2.4000000000000004]{pict_99.pdf}}}, ends with a
(first{-}order) path constraint logged from a conditional expression in the
user program. These two rules negate the last path constraint
and, with \raisebox{-3.1874999999999982bp}{\makebox[25.917187500000004bp][l]{\includegraphics[trim=2.4000000000000004 2.4000000000000004 2.4000000000000004 2.4000000000000004]{pict_158.pdf}}}, consult the SMT solver for a
solution that satisfies
the negated list of path constraints \raisebox{-3.1874999999999982bp}{\makebox[5.682812500000001bp][l]{\includegraphics[trim=2.4000000000000004 2.4000000000000004 2.4000000000000004 2.4000000000000004]{pict_157.pdf}}}.

The next three rules, \raisebox{-2.6179687499999993bp}{\makebox[55.65078125000001bp][l]{\includegraphics[trim=2.4000000000000004 2.4000000000000004 2.4000000000000004 2.4000000000000004]{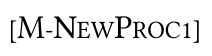}}}, \raisebox{-2.6179687499999993bp}{\makebox[55.65078125000001bp][l]{\includegraphics[trim=2.4000000000000004 2.4000000000000004 2.4000000000000004 2.4000000000000004]{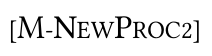}}}
and \raisebox{-2.6179687499999993bp}{\makebox[45.626562500000006bp][l]{\includegraphics[trim=2.4000000000000004 2.4000000000000004 2.4000000000000004 2.4000000000000004]{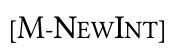}}} in figure~\hyperref[t:x28counter_x28x22figurex22_x22figx3amutate2x22x29x29]{\FigureRef{9}{t:x28counter_x28x22figurex22_x22figx3amutate2x22x29x29}}
handle the insertion of new branches in the conditional expression of inputs.
As we mention in \ChapRef{\SectionNumberLink{t:x28part_x22secx3ahowx2dgeneratex22x29}{4}}{Directed Evolution of Canonical Functions}, when \raisebox{-3.1874999999999982bp}{\makebox[5.682812500000001bp][l]{\includegraphics[trim=2.4000000000000004 2.4000000000000004 2.4000000000000004 2.4000000000000004]{pict_99.pdf}}} ends with a list of path constraints of
the form \raisebox{-3.2890624999999982bp}{\makebox[85.37968749999999bp][l]{\includegraphics[trim=2.4000000000000004 2.4000000000000004 2.4000000000000004 2.4000000000000004]{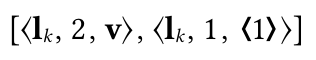}}},
there must be a corresponding canonical function in
\raisebox{-3.1874999999999982bp}{\makebox[5.471875000000001bp][l]{\includegraphics[trim=2.4000000000000004 2.4000000000000004 2.4000000000000004 2.4000000000000004]{pict_95.pdf}}}  with a conditional expression with just an \raisebox{-3.1874999999999982bp}{\makebox[19.200000000000003bp][l]{\includegraphics[trim=2.4000000000000004 2.4000000000000004 2.4000000000000004 2.4000000000000004]{pict_50.pdf}}} branch with label \raisebox{-2.3617187499999996bp}{\makebox[5.840624999999999bp][l]{\includegraphics[trim=2.4000000000000004 2.4000000000000004 2.4000000000000004 2.4000000000000004]{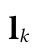}}}.
In this case either rule \raisebox{-2.6179687499999993bp}{\makebox[55.65078125000001bp][l]{\includegraphics[trim=2.4000000000000004 2.4000000000000004 2.4000000000000004 2.4000000000000004]{pict_165.pdf}}} or rule \raisebox{-2.6179687499999993bp}{\makebox[55.65078125000001bp][l]{\includegraphics[trim=2.4000000000000004 2.4000000000000004 2.4000000000000004 2.4000000000000004]{pict_166.pdf}}}
inserts a new \raisebox{-3.1874999999999982bp}{\makebox[48.00000000000001bp][l]{\includegraphics[trim=2.4000000000000004 2.4000000000000004 2.4000000000000004 2.4000000000000004]{pict_58.pdf}}} clause to this conditional expression,
depending on whether the tests of this conditional expression inspect a value that is
a function or a number.
If the suffix of \raisebox{-3.1874999999999982bp}{\makebox[5.682812500000001bp][l]{\includegraphics[trim=2.4000000000000004 2.4000000000000004 2.4000000000000004 2.4000000000000004]{pict_99.pdf}}} indicates that the
conditional expression has more than one branches and the tests of the conditional expression inspect a number,
rule \raisebox{-2.6179687499999993bp}{\makebox[45.626562500000006bp][l]{\includegraphics[trim=2.4000000000000004 2.4000000000000004 2.4000000000000004 2.4000000000000004]{pict_167.pdf}}} inserts a new branch using
 the expression trace of the number.
 For the body of the new branch, the rules
pick one of the options we discuss in \ChapRef{\SectionNumberLink{t:x28part_x22secx3ahowx2dgeneratex22x29}{4}}{Directed Evolution of Canonical Functions}.
using meta{-}function \raisebox{-2.3617187499999996bp}{\makebox[43.900000000000006bp][l]{\includegraphics[trim=2.4000000000000004 2.4000000000000004 2.4000000000000004 2.4000000000000004]{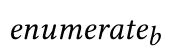}}}.
When the latter introduces a fresh concolic variable as the body of a
branch, it always maps it to a number in the corresponding store in its
result. Thus, inductively, all concolic variables in canonical functions
hold first{-}order data as we note above in our discussion of
instrumentation.

The last rule in figure~\hyperref[t:x28counter_x28x22figurex22_x22figx3amutate3x22x29x29]{\FigureRef{10}{t:x28counter_x28x22figurex22_x22figx3amutate3x22x29x29}}, \raisebox{-2.6179687499999993bp}{\makebox[46.04687500000001bp][l]{\includegraphics[trim=2.4000000000000004 2.4000000000000004 2.4000000000000004 2.4000000000000004]{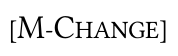}}}, shows  how the
concolic tester performs targeted branch constraint modification.
The goal of this rule is to
cause the evaluation to follow a particular branch of a
conditional expression during a call of a canonical function. The rule does so by adjusting
branch constraints in the argument \raisebox{-3.1874999999999982bp}{\makebox[5.682812500000001bp][l]{\includegraphics[trim=2.4000000000000004 2.4000000000000004 2.4000000000000004 2.4000000000000004]{pict_99.pdf}}} of
\raisebox{-3.1874999999999982bp}{\makebox[23.3828125bp][l]{\includegraphics[trim=2.4000000000000004 2.4000000000000004 2.4000000000000004 2.4000000000000004]{pict_100.pdf}}}. Specifically, it truncates a group of branch  constraints from a conditional expression
and attaches at the end a new branch constraint that would had
been there if the evaluation had followed a particular branch of the
conditional expression. In detail, the new branch constraint  asserts that
the traced value \raisebox{-3.2890624999999982bp}{\makebox[11.79609375bp][l]{\includegraphics[trim=2.4000000000000004 2.4000000000000004 2.4000000000000004 2.4000000000000004]{pict_108.pdf}}} that the tests of the conditional
expression inspect satisfies one of the tests \raisebox{-2.3617187499999996bp}{\makebox[20.41796875bp][l]{\includegraphics[trim=2.4000000000000004 2.4000000000000004 2.4000000000000004 2.4000000000000004]{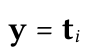}}} from the branches of the
conditional expression. Lastly, the rule consults the SMT solver
for a new store that matches the new list of path constraints.

\sectionNewpage

\Ssection{Correctness of Higher{-}Order Concolic Testing}{Correctness of Higher{-}Order Concolic Testing}\label{t:x28part_x22secx3atheoryx22x29}

\newtheorem{property}[theorem]{Property}

This section establishes three facts about our concolic tester that
together entail its correctness. First, if concolic evaluation of a program triggers a
bug so does the evaluation of the program in the user language (soundness).
Second, the concolic tester can produce inputs that triggers a bug in
the user program, if such input exists (completeness).  Third, for each
iteration of the concolic loop, the concolic tester produces inputs to explore a
specific and selected{-}in{-}advance control{-}flow path of the user program (concolic property). We discuss the formal statements of the
three facts and we provide some interesting details about their proofs.
The complete formal development with all the proofs is in the supplementary material.

Soundness guarantees that concolic evaluation respects the semantics of
the user language. In particular, concolic evaluation does not
discover spurious bugs. Thus, the information that the concolic machine
collects such as path constraints and expression traces does not affect
the behavior of programs. Formally, the  \nameref{correct:sound}
theorem states that for any proper store \raisebox{-3.1874999999999982bp}{\makebox[5.471875000000001bp][l]{\includegraphics[trim=2.4000000000000004 2.4000000000000004 2.4000000000000004 2.4000000000000004]{pict_95.pdf}}},\NoteBox{\NoteContent{ A store \raisebox{-3.1874999999999982bp}{\makebox[5.471875000000001bp][l]{\includegraphics[trim=2.4000000000000004 2.4000000000000004 2.4000000000000004 2.4000000000000004]{pict_95.pdf}}} is
\emph{proper} if (i) all concolic variables occurring free in canonical
functions in \raisebox{-3.1874999999999982bp}{\makebox[5.471875000000001bp][l]{\includegraphics[trim=2.4000000000000004 2.4000000000000004 2.4000000000000004 2.4000000000000004]{pict_95.pdf}}} are mapped to numbers by \raisebox{-3.1874999999999982bp}{\makebox[5.471875000000001bp][l]{\includegraphics[trim=2.4000000000000004 2.4000000000000004 2.4000000000000004 2.4000000000000004]{pict_95.pdf}}}, (ii) all labels in
\raisebox{-3.1874999999999982bp}{\makebox[5.471875000000001bp][l]{\includegraphics[trim=2.4000000000000004 2.4000000000000004 2.4000000000000004 2.4000000000000004]{pict_95.pdf}}} are unique and
(iii) the expression traces in \raisebox{-3.1874999999999982bp}{\makebox[19.200000000000003bp][l]{\includegraphics[trim=2.4000000000000004 2.4000000000000004 2.4000000000000004 2.4000000000000004]{pict_12.pdf}}} expression correspond to distinct numbers.
In this section, we only consider proper stores.  }}
if the concolic evaluation of user program \raisebox{-3.1874999999999982bp}{\makebox[4.69375bp][l]{\includegraphics[trim=2.4000000000000004 2.4000000000000004 2.4000000000000004 2.4000000000000004]{pict_93.pdf}}} with inputs \raisebox{-3.1874999999999982bp}{\makebox[5.471875000000001bp][l]{\includegraphics[trim=2.4000000000000004 2.4000000000000004 2.4000000000000004 2.4000000000000004]{pict_95.pdf}}} reduces to \raisebox{-3.1874999999999982bp}{\makebox[24.0bp][l]{\includegraphics[trim=2.4000000000000004 2.4000000000000004 2.4000000000000004 2.4000000000000004]{pict_4.pdf}}},
  the  evaluation of \raisebox{-3.1874999999999982bp}{\makebox[4.69375bp][l]{\includegraphics[trim=2.4000000000000004 2.4000000000000004 2.4000000000000004 2.4000000000000004]{pict_93.pdf}}} in the user language after retrieving its
  inputs from \raisebox{-3.1874999999999982bp}{\makebox[5.471875000000001bp][l]{\includegraphics[trim=2.4000000000000004 2.4000000000000004 2.4000000000000004 2.4000000000000004]{pict_95.pdf}}} also reduces to \raisebox{-3.1874999999999982bp}{\makebox[24.0bp][l]{\includegraphics[trim=2.4000000000000004 2.4000000000000004 2.4000000000000004 2.4000000000000004]{pict_4.pdf}}}. The metafunction
  \raisebox{-3.1874999999999982bp}{\makebox[31.7484375bp][l]{\includegraphics[trim=2.4000000000000004 2.4000000000000004 2.4000000000000004 2.4000000000000004]{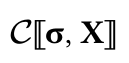}}} retrieves the input of  \raisebox{-3.1874999999999982bp}{\makebox[4.69375bp][l]{\includegraphics[trim=2.4000000000000004 2.4000000000000004 2.4000000000000004 2.4000000000000004]{pict_93.pdf}}} bound to \raisebox{-3.1874999999999982bp}{\makebox[6.8921875bp][l]{\includegraphics[trim=2.4000000000000004 2.4000000000000004 2.4000000000000004 2.4000000000000004]{pict_2.pdf}}}
  in \raisebox{-3.1874999999999982bp}{\makebox[5.471875000000001bp][l]{\includegraphics[trim=2.4000000000000004 2.4000000000000004 2.4000000000000004 2.4000000000000004]{pict_95.pdf}}} by traversing \raisebox{-3.1874999999999982bp}{\makebox[5.471875000000001bp][l]{\includegraphics[trim=2.4000000000000004 2.4000000000000004 2.4000000000000004 2.4000000000000004]{pict_95.pdf}}} recursively and turning traced values into
  their user language counterparts.

\begin{theorem}[Soundness]\label{correct:sound}
For any \raisebox{-3.1874999999999982bp}{\makebox[4.69375bp][l]{\includegraphics[trim=2.4000000000000004 2.4000000000000004 2.4000000000000004 2.4000000000000004]{pict_93.pdf}}} with concolic variables
$\raisebox{-2.3617187499999996bp}{\makebox[9.378125bp][l]{\includegraphics[trim=2.4000000000000004 2.4000000000000004 2.4000000000000004 2.4000000000000004]{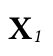}}}, \dots, \raisebox{-2.3617187499999996bp}{\makebox[9.79296875bp][l]{\includegraphics[trim=2.4000000000000004 2.4000000000000004 2.4000000000000004 2.4000000000000004]{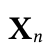}}}$
and any store \raisebox{-3.1874999999999982bp}{\makebox[5.471875000000001bp][l]{\includegraphics[trim=2.4000000000000004 2.4000000000000004 2.4000000000000004 2.4000000000000004]{pict_95.pdf}}} closing \raisebox{-3.1874999999999982bp}{\makebox[4.69375bp][l]{\includegraphics[trim=2.4000000000000004 2.4000000000000004 2.4000000000000004 2.4000000000000004]{pict_93.pdf}}}, if
\raisebox{-2.6179687499999993bp}{\makebox[137.61796875bp][l]{\includegraphics[trim=2.4000000000000004 2.4000000000000004 2.4000000000000004 2.4000000000000004]{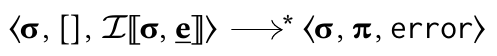}}}
then
\raisebox{-3.3617187499999996bp}{\makebox[131.48828125bp][l]{\includegraphics[trim=2.4000000000000004 2.4000000000000004 2.4000000000000004 2.4000000000000004]{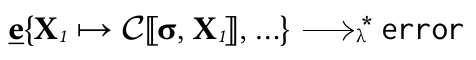}}}.
\end{theorem}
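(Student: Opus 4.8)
The plan is to prove soundness by exhibiting a forward simulation between the concolic machine and the user-language reduction relation, mediated by an \emph{erasure} function that undoes the instrumentation. First I would define an erasure $\lfloor\,\cdot\,\rfloor$ sending a concolic program back to a user program by (i) projecting every traced value to the number it denotes, (ii) deleting every log-and-sequence construct and replacing it by its continuation, and (iii) recurring structurally everywhere else. Extended to machine states, $\lfloor\,\cdot\,\rfloor$ maps a state to the user term in its program register, discarding both the store $\sigma$ and the path-constraint list; it sends the concolic $\mathtt{ERROR}$ state to the user-language $\mathtt{ERROR}$. I write $\mathcal{I}$ for the instrumentation meta-function and $\mathit{retrieve}$ for the meta-function that reads an input out of $\sigma$ and turns traced values back into user-language values.

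The argument then rests on two lemmas. The \emph{static lemma} states that erasure is a left inverse of instrumentation up to input substitution: $\lfloor \mathcal{I}(\sigma,p)\rfloor = p[x_1 := \mathit{retrieve}(\sigma,x_1),\ldots,x_n := \mathit{retrieve}(\sigma,x_n)]$, with an analogous identity for the two mutually recursive auxiliary instrumentation meta-functions that handle function-valued inputs. This is proved by structural induction on $p$; the only nonroutine cases are the concolic-variable cases, where I must check that erasing an instrumented canonical function yields exactly the value produced by $\mathit{retrieve}$, i.e.\ that the injected log expressions and trace decorations vanish under erasure while the underlying $\lambda$/let skeleton is preserved verbatim.

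The \emph{dynamic lemma} is the forward simulation: whenever the concolic machine steps $s \longmapsto s'$ one has $\lfloor s\rfloor \longrightarrow^{*} \lfloor s'\rfloor$, and whenever a concolic step reaches $\mathtt{ERROR}$ the erased program reaches $\mathtt{ERROR}$ in the user language. I would prove it by case analysis on the concolic reduction rules. The bookkeeping rules — the log-append rule and the two primitive-operator trace rules — are no-ops under erasure, since erasure discards logs and keeps only the concrete number a trace denotes, so they contribute zero user steps. The genuinely computational rules ($\beta$-reduction, the three conditional rules, and the compatible-closure and error rules) each erase to the corresponding user-language step or error; the crucial point is that the conditional rules dispatch on exactly the concrete number carried by the traced test value, which is precisely what erasure exposes, so both sides take the same branch (or the same error). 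Properness of $\sigma$ is used here to guarantee that these concrete numbers are well defined and that labels do not collide during bookkeeping, so that erasure of evaluation contexts yields user-language evaluation contexts and the congruence rules go through.

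Finally I would assemble the theorem: given a sequence witnessing $\langle\sigma,\epsilon,\mathcal{I}(\sigma,p)\rangle \longmapsto^{*} \mathtt{ERROR}$, an induction on its length using the dynamic lemma yields $\lfloor \mathcal{I}(\sigma,p)\rfloor \longrightarrow^{*} \mathtt{ERROR}$, and rewriting the left-hand side by the static lemma gives that $p$ with its inputs retrieved from $\sigma$ reduces to $\mathtt{ERROR}$, as required. The step I expect to be the main obstacle is the concolic-variable case of the static lemma: matching the instrumented canonical-function skeleton emitted by the mutually recursive instrumentation meta-functions against the de-instrumented value produced by $\mathit{retrieve}$, and verifying that every injected log and trace annotation is exactly the kind that erasure cancels. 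By contrast, once erasure is set up to commute with substitution, the dynamic simulation should reduce to a mostly mechanical rule-by-rule check.
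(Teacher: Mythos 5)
Your overall strategy---a structural erasure that cancels instrumentation, plus a weak forward simulation showing each concolic step erases to zero or more user-language steps---is the natural formalization of the paper's own justification for this theorem (that path constraints and expression traces ``do not affect the behavior of programs''). The paper's main text contains no proof of Soundness to compare against; it defers entirely to the supplementary material, and nothing in the paper suggests a route other than the one you take. So the plan is sound in outline, but two of your specific steps would fail as written.

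First, the two primitive-operator rules are \emph{not} no-ops under erasure. A concolic redex $(\mathit{op}\ tv_1\ tv_2)$ erases to $(\mathit{op}\ n_1\ n_2)$, which is not a value, and its contractum erases to the result number $n'$; these rules therefore correspond to exactly one user-language step each, not zero. Only the log-append rule is a genuine zero-step case. Your simulation statement $\lfloor s\rfloor \longrightarrow^{*} \lfloor s'\rfloor$ is weak enough to absorb this, but the case analysis as you describe it would be asserting $\lfloor s\rfloor = \lfloor s'\rfloor$ for those rules, which is false. Second, and more seriously, the exact syntactic equality in your static lemma, $\lfloor \mathcal{I}(\sigma,p)\rfloor = p[x_i := \mathit{retrieve}(\sigma,x_i)]$, is doubtful. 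For the injected log expression to mention the \emph{result} of a conditional's test together with its trace, instrumentation must bind that result (e.g., by a let, which the paper treats as sugar for application), since duplicating the test expression would duplicate its logging effects. A purely structural erasure leaves those administrative binders behind, so $\lfloor \mathcal{I}(\sigma,p)\rfloor$ agrees with the substituted program only up to administrative $\beta$-redexes. If instead you define erasure to contract such redexes, it is no longer structural, and your claim that erasure sends evaluation contexts to evaluation contexts needs its own proof (a log form sitting in evaluation position can erase to a term in which the hole vanishes). Either way, the static lemma must be weakened to ``equal up to administrative reduction,'' and that weakening has to be threaded through the congruence case of the simulation and through your final assembly step, where you currently just rewrite by the static lemma to conclude.
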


Completeness captures that if a user program has a bug, our concolic
tester finds it through the iterative evolution of initially default
inputs. Formally, the \nameref{correct:complete} theorem
assumes that, for some inputs, the user program returns \raisebox{-3.1874999999999982bp}{\makebox[24.0bp][l]{\includegraphics[trim=2.4000000000000004 2.4000000000000004 2.4000000000000004 2.4000000000000004]{pict_4.pdf}}} and
has five
conclusions that describe that (1) the concolic loop starts with a store
that contains inputs
for all concolic variables in a user program; (2) these initial inputs
  are arbitrary numbers or the default functions (3) each iteration \raisebox{-3.1874999999999982bp}{\makebox[3.0906249999999993bp][l]{\includegraphics[trim=2.4000000000000004 2.4000000000000004 2.4000000000000004 2.4000000000000004]{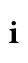}}}
  results in a pair \raisebox{-2.3617187499999996bp}{\makebox[26.296093750000004bp][l]{\includegraphics[trim=2.4000000000000004 2.4000000000000004 2.4000000000000004 2.4000000000000004]{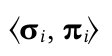}}} of a store and a list of path
  constraints; (4) after the end of iteration \raisebox{-3.1874999999999982bp}{\makebox[3.0906249999999993bp][l]{\includegraphics[trim=2.4000000000000004 2.4000000000000004 2.4000000000000004 2.4000000000000004]{pict_178.pdf}}} that does not trigger
  an \raisebox{-3.1874999999999982bp}{\makebox[24.0bp][l]{\includegraphics[trim=2.4000000000000004 2.4000000000000004 2.4000000000000004 2.4000000000000004]{pict_4.pdf}}}, \raisebox{-3.1874999999999982bp}{\makebox[23.3828125bp][l]{\includegraphics[trim=2.4000000000000004 2.4000000000000004 2.4000000000000004 2.4000000000000004]{pict_100.pdf}}} uses \raisebox{-2.3617187499999996bp}{\makebox[26.296093750000004bp][l]{\includegraphics[trim=2.4000000000000004 2.4000000000000004 2.4000000000000004 2.4000000000000004]{pict_179.pdf}}}
  to generate \raisebox{-2.3617187499999996bp}{\makebox[37.169531250000006bp][l]{\includegraphics[trim=2.4000000000000004 2.4000000000000004 2.4000000000000004 2.4000000000000004]{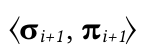}}}; and (5) the concolic
  loop terminates with the discovery of the \raisebox{-3.1874999999999982bp}{\makebox[24.0bp][l]{\includegraphics[trim=2.4000000000000004 2.4000000000000004 2.4000000000000004 2.4000000000000004]{pict_4.pdf}}}.
  An interesting point about the fourth conclusion is that it establishes
  that the list of path constraints \raisebox{-2.3617187499999996bp}{\makebox[12.66484375bp][l]{\includegraphics[trim=2.4000000000000004 2.4000000000000004 2.4000000000000004 2.4000000000000004]{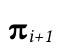}}} that \raisebox{-3.1874999999999982bp}{\makebox[23.3828125bp][l]{\includegraphics[trim=2.4000000000000004 2.4000000000000004 2.4000000000000004 2.4000000000000004]{pict_100.pdf}}} returns
   at the end of each iteration \raisebox{-3.1874999999999982bp}{\makebox[3.0906249999999993bp][l]{\includegraphics[trim=2.4000000000000004 2.4000000000000004 2.4000000000000004 2.4000000000000004]{pict_178.pdf}}} is
  equivalent to the actual list that  iteration \raisebox{-3.1874999999999982bp}{\makebox[14.400000000000004bp][l]{\includegraphics[trim=2.4000000000000004 2.4000000000000004 2.4000000000000004 2.4000000000000004]{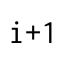}}} of the loop
  produces. Here, two lists of path constraints are \emph{equivalent} if they have
the same first{-}order and branch constraints.

\begin{theorem}[Completeness]\label{correct:complete}
For any \raisebox{-3.1874999999999982bp}{\makebox[4.69375bp][l]{\includegraphics[trim=2.4000000000000004 2.4000000000000004 2.4000000000000004 2.4000000000000004]{pict_93.pdf}}} with concolic variables
$\raisebox{-2.3617187499999996bp}{\makebox[9.378125bp][l]{\includegraphics[trim=2.4000000000000004 2.4000000000000004 2.4000000000000004 2.4000000000000004]{pict_174.pdf}}},\dots,\raisebox{-2.3617187499999996bp}{\makebox[9.79296875bp][l]{\includegraphics[trim=2.4000000000000004 2.4000000000000004 2.4000000000000004 2.4000000000000004]{pict_175.pdf}}}$,
if there exists closed values
$\raisebox{-2.3617187499999996bp}{\makebox[7.5640625bp][l]{\includegraphics[trim=2.4000000000000004 2.4000000000000004 2.4000000000000004 2.4000000000000004]{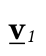}}},\dots,\raisebox{-2.3617187499999996bp}{\makebox[7.97890625bp][l]{\includegraphics[trim=2.4000000000000004 2.4000000000000004 2.4000000000000004 2.4000000000000004]{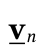}}}$ in the language of user programs
such that none of the values contain \raisebox{-3.1874999999999982bp}{\makebox[24.0bp][l]{\includegraphics[trim=2.4000000000000004 2.4000000000000004 2.4000000000000004 2.4000000000000004]{pict_4.pdf}}} and
\raisebox{-3.3617187499999996bp}{\makebox[104.81796874999999bp][l]{\includegraphics[trim=2.4000000000000004 2.4000000000000004 2.4000000000000004 2.4000000000000004]{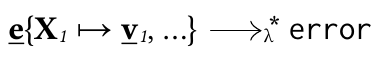}}}
then there exists a sequence of stores and paths
$\raisebox{-2.3617187499999996bp}{\makebox[28.177343750000006bp][l]{\includegraphics[trim=2.4000000000000004 2.4000000000000004 2.4000000000000004 2.4000000000000004]{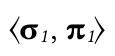}}}, \dots, \raisebox{-2.3617187499999996bp}{\makebox[31.974218750000006bp][l]{\includegraphics[trim=2.4000000000000004 2.4000000000000004 2.4000000000000004 2.4000000000000004]{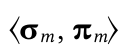}}}$ such that

\noindent \begin{enumerate}\atItemizeStart

\item $\text{dom}(\raisebox{-2.3617187499999996bp}{\makebox[7.957812499999999bp][l]{\includegraphics[trim=2.4000000000000004 2.4000000000000004 2.4000000000000004 2.4000000000000004]{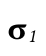}}})=\{\raisebox{-2.3617187499999996bp}{\makebox[9.378125bp][l]{\includegraphics[trim=2.4000000000000004 2.4000000000000004 2.4000000000000004 2.4000000000000004]{pict_174.pdf}}},\dots,\raisebox{-2.3617187499999996bp}{\makebox[9.79296875bp][l]{\includegraphics[trim=2.4000000000000004 2.4000000000000004 2.4000000000000004 2.4000000000000004]{pict_175.pdf}}}\}$,

\item For all $1\le k\le n$, either
\raisebox{-2.3617187499999996bp}{\makebox[37.83437500000001bp][l]{\includegraphics[trim=2.4000000000000004 2.4000000000000004 2.4000000000000004 2.4000000000000004]{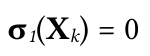}}} or
\raisebox{-3.2562499999999996bp}{\makebox[110.6625bp][l]{\includegraphics[trim=2.4000000000000004 2.4000000000000004 2.4000000000000004 2.4000000000000004]{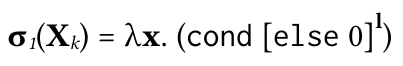}}}.

\item For all $1\le i< m$,
\raisebox{-3.3617187499999996bp}{\makebox[126.03828124999998bp][l]{\includegraphics[trim=2.4000000000000004 2.4000000000000004 2.4000000000000004 2.4000000000000004]{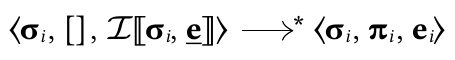}}}.

\item For all $1\le i< m$, there exists a pair
\raisebox{-2.9390624999999986bp}{\makebox[102.33749999999999bp][l]{\includegraphics[trim=2.4000000000000004 2.4000000000000004 2.4000000000000004 2.4000000000000004]{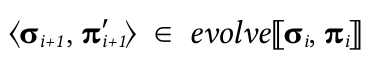}}} such that \raisebox{-2.3617187499999996bp}{\makebox[12.66484375bp][l]{\includegraphics[trim=2.4000000000000004 2.4000000000000004 2.4000000000000004 2.4000000000000004]{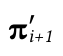}}} is equivalent to
a prefix of \raisebox{-2.3617187499999996bp}{\makebox[12.66484375bp][l]{\includegraphics[trim=2.4000000000000004 2.4000000000000004 2.4000000000000004 2.4000000000000004]{pict_181.pdf}}}.

\item \raisebox{-3.3617187499999996bp}{\makebox[155.15546875bp][l]{\includegraphics[trim=2.4000000000000004 2.4000000000000004 2.4000000000000004 2.4000000000000004]{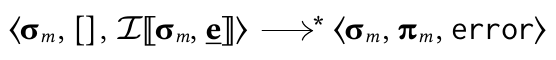}}}.\end{enumerate}

\noindent \end{theorem}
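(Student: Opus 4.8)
The plan is to prove completeness by exhibiting a concrete, successful run of the concolic loop: starting from the default store, I would construct an explicit sequence of evolution choices that culminates in a store whose concolic evaluation reaches \raisebox{-3.1874999999999982bp}{\makebox[24.0bp][l]{\includegraphics[trim=2.4000000000000004 2.4000000000000004 2.4000000000000004 2.4000000000000004]{pict_4.pdf}}}. The witness values $v_1,\dots,v_n$ drive the construction, but only indirectly---through the finite interaction they induce. Since evolve is non-deterministic, it suffices to produce one path, so the whole argument is a reachability statement about the space of evolution choices.

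First I would run the reference evaluation of $e$ with $v_1,\dots,v_n$ substituted for the concolic variables. Because this evaluation reduces to \raisebox{-3.1874999999999982bp}{\makebox[24.0bp][l]{\includegraphics[trim=2.4000000000000004 2.4000000000000004 2.4000000000000004 2.4000000000000004]{pict_4.pdf}}}, it terminates, so it performs only finitely many calls to the higher-order inputs, and each call inspects only finitely many returned values. I would record this finite interaction trace. The key step is then to appeal to the canonical-function property from Section~3 (following Nguy\~{\^{e}}n et al.): from the finite observed behavior of each $v_k$ I would extract a canonical function $\hat{v}_k$, together with the finitely many numeric constants it needs, that reproduces exactly the same sequence of calls, inspections, and replies in the context of this particular execution. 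Because only finitely much of each $v_k$ is exercised before the error, $\hat{v}_k$ is a finite canonical term even when $v_k$ itself is not expressible as one. Let $\sigma^\ast$ map each $x_k$ to $\hat{v}_k$; by construction concolic evaluation of $e$ under $\sigma^\ast$ reaches \raisebox{-3.1874999999999982bp}{\makebox[24.0bp][l]{\includegraphics[trim=2.4000000000000004 2.4000000000000004 2.4000000000000004 2.4000000000000004]{pict_4.pdf}}}, and the Soundness theorem confirms this is a genuine bug.

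The heart of the argument is to show that the evolution metafunction can reach $\sigma^\ast$ (or another error-triggering store) from the initial default store in finitely many steps. I would induct on a measure of how much of $\sigma^\ast$'s shape the current store has already realized---for instance, the number of branches of the canonical functions in $\sigma^\ast$ not yet present in the current input, together with the numeric constants not yet fixed. At each stage the current partial input is evaluated, producing a list of path constraints whose test and branch constraints expose precisely where the current input diverges from the branch $\sigma^\ast$ would take. I would then show that a deliberately chosen evolution rule of Figures~8--10 is enabled to insert exactly that missing piece: the branch-insertion rules (Figure~9), via the body-choosing metafunction, supply the clause matching $\hat{v}_k$'s next action (return a constant, a function, or a variable in scope, or call a function in scope); the first-order negation rules (Figure~8) and the targeted branch-constraint-modification rule (Figure~10) let me fix numeric values and force the intended branch. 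Because $\sigma^\ast$ furnishes a consistent assignment to every numeric input that occurs, each SMT query issued along the way is satisfiable, so the solver never fails and the step always goes through; choosing only steps that add material present in $\sigma^\ast$ makes the measure strictly decrease, so the loop reaches \raisebox{-3.1874999999999982bp}{\makebox[24.0bp][l]{\includegraphics[trim=2.4000000000000004 2.4000000000000004 2.4000000000000004 2.4000000000000004]{pict_4.pdf}}} after finitely many iterations. Discharging the five conclusions is then bookkeeping on this sequence: (1)--(2) fix the initial default store, (3) is the per-iteration concolic evaluation, and (5) is the final error-triggering iteration.

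The main obstacle I expect is conclusion~(4): that the list of path constraints evolve \emph{predicts} is equivalent, on its first-order and branch constraints, to the list the next concrete run actually produces. The difficulty is that inserting a branch or changing an input can shift subsequent control flow inside both the user program and the canonical input, so the prediction must be shown robust under these changes. I would isolate this in a simulation lemma---essentially the concolic property stated at the start of this section---asserting that whenever evolve sends $(\sigma,\vec{pc})$ to $(\sigma',\vec{pc}')$, concolic evaluation under $\sigma'$ yields a list agreeing with $\vec{pc}'$ up to the truncation point, proved by induction on the reduction sequence while tracking how each injected log-and-sequence construct emits precisely the anticipated constraint. With that lemma in hand, the predicted-constraint equivalence feeds each induction step, and the measure argument closes the proof.
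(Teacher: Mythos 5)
Your proposal mirrors the paper's own proof: it decomposes completeness into the same two steps---first extracting canonical functions that reproduce the finite interaction of the witness values (the paper's Representation Completeness lemma, proved there via an intermediate language that records the interaction in a global map), then showing the evolution rules make strictly increasing progress toward that canonical store, one branch at a time (the paper's Search Completeness lemma)---and it isolates conclusion~(4) in exactly the simulation lemma the paper states as the Concolic theorem. The argument is sound and essentially identical in structure, differing only in presentation (your explicit progress measure versus the paper's ``one more branch in common'' phrasing).
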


We prove the \nameref{correct:complete} theorem in two steps.
First, we show that if there is an \raisebox{-3.1874999999999982bp}{\makebox[24.0bp][l]{\includegraphics[trim=2.4000000000000004 2.4000000000000004 2.4000000000000004 2.4000000000000004]{pict_4.pdf}}} in the user program that an input
can trigger, there exists a store \raisebox{-3.1874999999999982bp}{\makebox[5.471875000000001bp][l]{\includegraphics[trim=2.4000000000000004 2.4000000000000004 2.4000000000000004 2.4000000000000004]{pict_95.pdf}}} that contains numbers and canonical functions
that also causes the \raisebox{-3.1874999999999982bp}{\makebox[24.0bp][l]{\includegraphics[trim=2.4000000000000004 2.4000000000000004 2.4000000000000004 2.4000000000000004]{pict_4.pdf}}} to manifest. Thus this step validates the
definition of canonical functions.

\begin{lemma}[Representation Completeness]\label{correct:repncomp}
Let \raisebox{-3.1874999999999982bp}{\makebox[23.20546875bp][l]{\includegraphics[trim=2.4000000000000004 2.4000000000000004 2.4000000000000004 2.4000000000000004]{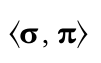}}} be a \emph{proper counterexample} for
a user program \raisebox{-3.1874999999999982bp}{\makebox[4.69375bp][l]{\includegraphics[trim=2.4000000000000004 2.4000000000000004 2.4000000000000004 2.4000000000000004]{pict_93.pdf}}} if $\mathit{FV}(\raisebox{-3.1874999999999982bp}{\makebox[4.69375bp][l]{\includegraphics[trim=2.4000000000000004 2.4000000000000004 2.4000000000000004 2.4000000000000004]{pict_93.pdf}}})\subset\text{dom}(\raisebox{-3.1874999999999982bp}{\makebox[5.471875000000001bp][l]{\includegraphics[trim=2.4000000000000004 2.4000000000000004 2.4000000000000004 2.4000000000000004]{pict_95.pdf}}})$,
\raisebox{-2.6179687499999993bp}{\makebox[137.61796875bp][l]{\includegraphics[trim=2.4000000000000004 2.4000000000000004 2.4000000000000004 2.4000000000000004]{pict_176.pdf}}}
and \raisebox{-3.1874999999999982bp}{\makebox[5.682812500000001bp][l]{\includegraphics[trim=2.4000000000000004 2.4000000000000004 2.4000000000000004 2.4000000000000004]{pict_99.pdf}}} does not contain branch constraints from \raisebox{-3.1874999999999982bp}{\makebox[19.200000000000003bp][l]{\includegraphics[trim=2.4000000000000004 2.4000000000000004 2.4000000000000004 2.4000000000000004]{pict_50.pdf}}} branches.
For any \raisebox{-3.1874999999999982bp}{\makebox[4.69375bp][l]{\includegraphics[trim=2.4000000000000004 2.4000000000000004 2.4000000000000004 2.4000000000000004]{pict_93.pdf}}} with concolic variables
$\raisebox{-2.3617187499999996bp}{\makebox[9.378125bp][l]{\includegraphics[trim=2.4000000000000004 2.4000000000000004 2.4000000000000004 2.4000000000000004]{pict_174.pdf}}},\dots,\raisebox{-2.3617187499999996bp}{\makebox[9.79296875bp][l]{\includegraphics[trim=2.4000000000000004 2.4000000000000004 2.4000000000000004 2.4000000000000004]{pict_175.pdf}}}$,
if there exists closed values
$\raisebox{-2.3617187499999996bp}{\makebox[7.5640625bp][l]{\includegraphics[trim=2.4000000000000004 2.4000000000000004 2.4000000000000004 2.4000000000000004]{pict_183.pdf}}},\dots,\raisebox{-2.3617187499999996bp}{\makebox[7.97890625bp][l]{\includegraphics[trim=2.4000000000000004 2.4000000000000004 2.4000000000000004 2.4000000000000004]{pict_184.pdf}}}$
such that no value contains \raisebox{-3.1874999999999982bp}{\makebox[24.0bp][l]{\includegraphics[trim=2.4000000000000004 2.4000000000000004 2.4000000000000004 2.4000000000000004]{pict_4.pdf}}} and
\raisebox{-3.3617187499999996bp}{\makebox[104.81796874999999bp][l]{\includegraphics[trim=2.4000000000000004 2.4000000000000004 2.4000000000000004 2.4000000000000004]{pict_185.pdf}}}
then there exists a store \raisebox{-3.1874999999999982bp}{\makebox[5.471875000000001bp][l]{\includegraphics[trim=2.4000000000000004 2.4000000000000004 2.4000000000000004 2.4000000000000004]{pict_95.pdf}}} and path constraints \raisebox{-3.1874999999999982bp}{\makebox[5.682812500000001bp][l]{\includegraphics[trim=2.4000000000000004 2.4000000000000004 2.4000000000000004 2.4000000000000004]{pict_99.pdf}}} such that
\raisebox{-3.1874999999999982bp}{\makebox[23.20546875bp][l]{\includegraphics[trim=2.4000000000000004 2.4000000000000004 2.4000000000000004 2.4000000000000004]{pict_195.pdf}}} is a proper counterexample of \raisebox{-3.1874999999999982bp}{\makebox[4.69375bp][l]{\includegraphics[trim=2.4000000000000004 2.4000000000000004 2.4000000000000004 2.4000000000000004]{pict_93.pdf}}}.
\end{lemma}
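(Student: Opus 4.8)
The plan is to build the required store directly out of the (necessarily finite) user-language evaluation that witnesses the bug, replacing each higher-order input $v_i$ by a canonical function that replays exactly the interactions $v_i$ has with $p$ along that run. First I would fix a terminating reduction $p[\vec v/\vec x] \to^* \mathtt{err}$ in the user language. Because this sequence is finite and call-by-value evaluation is deterministic, every function input is applied only finitely often, each such application performs only finitely many call-backs into the values it has received, and the entire exchange between $p$ and its inputs forms a finite tree of interactions, exactly of the kind the diagrams in Section~3 illustrate. Numeric inputs are handled immediately by mapping the corresponding $x_i$ to the number $v_i$.

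Next I would define, by recursion on this finite interaction tree, a canonical function $\hat v$ for every function value $v$ exchanged anywhere in the run --- not only the top-level inputs $v_i$, but also every function the user program passes into an input during a call-back. The body of $\hat v$ is a cond dispatching on its formal parameter: for each distinct argument $v$ is observed to receive, I install an explicit branch, testing whether that argument is a procedure or equals a particular number $n$ as appropriate, choosing fresh labels and pairwise distinct test numbers so that the resulting store is proper. Inside a branch, if $v$ returned without calling back, $\hat v$ returns the canonical representative of that value; if $v$ called one of its arguments $g$ on some $w$, the branch is a $\mathsf{let}$ that applies $\hat g$ to $\hat w$ and then continues dispatching on the result, matching the recorded continuation. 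Since every argument actually seen is given an explicit branch, the replay never falls through to the else branch, which is precisely the ``no else-branch constraints'' clause in the definition of a proper counterexample; and because every branch body is a recorded number, a nested $\hat w$, or a call to a bound variable, no free concolic variable is introduced, so the remaining properness conditions hold as well.

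With $\sigma$ mapping each $x_i$ to $\hat v_i$ (or to $v_i$ when it is a number), I would establish a simulation between the concolic machine on the instrumented program and the user-language evaluation of $p[\vec v/\vec x]$. The invariant is a relation $R$ pairing corresponding runtime subterms: a traced value with the number it denotes, and each function value $v$ in the user run with its canonical twin $\hat v$. The crux is the preservation of $R$ under the two step relations, and specifically the application case: when $\hat v$ is applied to an argument $R$-related to the one $v$ received at the matching point of the original run, its cond selects the branch built for that argument and the two machines re-synchronize, with call-backs discharged by the $\mathsf{let}$ bodies. Theorem~\ref{correct:sound} guarantees that the instrumentation itself does not perturb control flow. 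Since $R$ relates $\mathtt{err}$ only to $\mathtt{err}$, the concolic run reaches $\mathtt{err}$ as well, and reading off the final store and path-constraint list yields the desired proper counterexample.

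I expect the main obstacle to be the higher-order call-back case of the simulation: the values a canonical input sends back to the program are themselves ``new inputs'' in the sense of Section~3 and must already be canonically represented, so the construction of the family $\{\hat v\}$ and the relation $R$ are genuinely mutually recursive over the whole interaction tree. Making this recursion well founded --- I would use the finite number of reduction steps remaining in the witnessing run as the induction measure --- and checking that the freshly chosen branch labels and test numbers preserve properness throughout are where the bookkeeping is most delicate. By contrast, the first-order and constant-function cases are routine and follow the worked examples of Section~3 directly.
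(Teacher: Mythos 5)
Your overall strategy is the paper's own: run $p$ on the witnessing values, treat the finite run as a finite tree of exchanges between the program and its inputs, and manufacture canonical functions that replay exactly those exchanges, with properness secured by fresh labels, pairwise distinct test numbers, and the observation that the replay never reaches an \texttt{else} branch. The paper packages this differently: it defines an intermediate language that is the union of the user language and the concolic language, whose reduction threads a global map recording every interaction, so the store is assembled incrementally during a single simulation pass by induction on the length of the run. That one-pass design is precisely what dissolves the mutual recursion between the family $\{\hat v\}$ and the simulation relation $R$ that you identify as the delicate point; your two-phase organization (extract the tree, define the canonical functions by recursion on remaining steps, then prove a separate simulation) can be pushed through with your step-count measure, but it incurs bookkeeping the paper's formulation avoids.

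There is, however, one genuine error to repair. You build canonical twins for ``every function value exchanged anywhere in the run --- not only the top-level inputs, but also every function the user program passes into an input,'' and your $R$ pairs every user-run function with a canonical twin. Only values flowing \emph{from the inputs toward the program} can, and need to, be canonical. A function $g$ that the program passes into an input appears in the concolic run as an \emph{instrumented program lambda}, which the tester has no power to replace; correspondingly, the canonical grammar only permits a \texttt{let} body to apply a variable in scope, so the call-back in $\hat v$'s branch must apply the formal parameter that received (the instrumented) $g$, not a prefabricated term $\hat g$. As written, your application case relates the wrong objects --- the concolic machine holds the instrumented $g$ while $R$ expects a canonical twin --- and the induction breaks; indeed the constructed configuration would not even be a run of the instrumented $p$. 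The fix is to split $R$ by direction of flow: program-side functions are related to their instrumented versions (this is where the simulation underlying \nameref{correct:sound} does the work), and only input-side functions --- the top-level $v_i$ and the functions the inputs themselves construct and send out, such as your $\hat w$ --- get canonical twins. With that repair (and with branch bodies returning fresh concolic variables that $\sigma$ maps to the recorded numbers, rather than numeric literals, so as to stay inside the canonical grammar), your argument goes through and coincides with the paper's proof of \autoref{correct:repncomp}.
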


\begin{proof}[Proof Sketch]
To prove \autoref{correct:repncomp}, we define an intermediate
language that contains the union of the user language
and the concolic language. The intermediate
language also collects details about the interaction between
user programs and concolic functions
in a global map. Given \raisebox{-3.1874999999999982bp}{\makebox[4.69375bp][l]{\includegraphics[trim=2.4000000000000004 2.4000000000000004 2.4000000000000004 2.4000000000000004]{pict_93.pdf}}} and $\{\raisebox{-2.3617187499999996bp}{\makebox[6.6234375bp][l]{\includegraphics[trim=2.4000000000000004 2.4000000000000004 2.4000000000000004 2.4000000000000004]{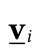}}}\}$,
we simulate the reduction sequence of \raisebox{-3.1874999999999982bp}{\makebox[4.69375bp][l]{\includegraphics[trim=2.4000000000000004 2.4000000000000004 2.4000000000000004 2.4000000000000004]{pict_93.pdf}}}
with the intermediate language and construct a store \raisebox{-3.1874999999999982bp}{\makebox[5.471875000000001bp][l]{\includegraphics[trim=2.4000000000000004 2.4000000000000004 2.4000000000000004 2.4000000000000004]{pict_95.pdf}}} using
the collected information from the global map.
\end{proof}

As the second step of the proof of \nameref{correct:complete},
we show that the evolution of inputs of the concolic loop produces an
input that can trigger the same \raisebox{-3.1874999999999982bp}{\makebox[24.0bp][l]{\includegraphics[trim=2.4000000000000004 2.4000000000000004 2.4000000000000004 2.4000000000000004]{pict_4.pdf}}} as the one an  arbitrary \raisebox{-3.1874999999999982bp}{\makebox[5.471875000000001bp][l]{\includegraphics[trim=2.4000000000000004 2.4000000000000004 2.4000000000000004 2.4000000000000004]{pict_95.pdf}}} triggers.
As a consequence, the concolic tester only needs to explore
inputs it obtains from \raisebox{-3.1874999999999982bp}{\makebox[23.3828125bp][l]{\includegraphics[trim=2.4000000000000004 2.4000000000000004 2.4000000000000004 2.4000000000000004]{pict_100.pdf}}}.

\begin{lemma}[Search Completeness]\label{correct:searchcomp}
For any \raisebox{-3.1874999999999982bp}{\makebox[4.69375bp][l]{\includegraphics[trim=2.4000000000000004 2.4000000000000004 2.4000000000000004 2.4000000000000004]{pict_93.pdf}}} with concolic variables
$\raisebox{-2.3617187499999996bp}{\makebox[9.378125bp][l]{\includegraphics[trim=2.4000000000000004 2.4000000000000004 2.4000000000000004 2.4000000000000004]{pict_174.pdf}}},\dots,\raisebox{-2.3617187499999996bp}{\makebox[9.79296875bp][l]{\includegraphics[trim=2.4000000000000004 2.4000000000000004 2.4000000000000004 2.4000000000000004]{pict_175.pdf}}}$,
if \raisebox{-3.1874999999999982bp}{\makebox[4.69375bp][l]{\includegraphics[trim=2.4000000000000004 2.4000000000000004 2.4000000000000004 2.4000000000000004]{pict_93.pdf}}} has a proper counterexample
then there exists a sequence of stores and paths
satisfying \autoref{correct:complete} (1){-}(5).
\end{lemma}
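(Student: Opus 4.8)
The plan is to use \autoref{correct:repncomp} to collapse the search problem into the task of reaching a single, fixed target. Representation Completeness turns the hypothesis that $p$ has a proper counterexample into a concrete witness: a store $\sigma^\star$ built only from numbers and canonical functions, together with a list of path constraints $\phi^\star$ that contains no constraints from $\mathsf{else}$ branches, such that concolic evaluation of the instrumentation of $p$ under $\sigma^\star$ reaches $\mathtt{ERROR}$. What remains is to exhibit a finite run $(\sigma_1,\phi_1),\dots,(\sigma_m,\phi_m)$ of the concolic loop that begins at the default store and arrives at a store whose evaluation triggers this same error, meeting conclusions (1)--(5) of \autoref{correct:complete}.

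First I would take $\sigma_1$ to be the default store that sends every first-order concolic variable to an arbitrary number and every higher-order concolic variable to the default constant function, which immediately discharges conclusions (1) and (2). The core of the argument is a well-founded descent toward $\sigma^\star$. I would equip stores with a measure $\mu(\sigma)$ dominated by the number of branches occurring in the canonical functions of $\sigma^\star$ that are still absent from $\sigma$; numeric slots are not counted directly, since they are fixed by the SMT solver once the branch shape is correct. The invariant carried through the induction is that each $\sigma_i$ is a \emph{proper approximation} of $\sigma^\star$: its canonical functions are obtained from those of $\sigma^\star$ by deleting a suffix of branches, and the branch bodies still to be added all lie in the range of the body generator used by $\mathit{mutate}$ (Figure~7's $\mathsf{mkbodies}$), namely constants, in-scope variables, and the $\mathsf{let}$-call-then-conditional form. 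The central inductive step is then: whenever $\sigma_i$ does not yet trigger the error, running the concolic machine yields a list $\phi_i$ that diverges from the $\sigma^\star$-evaluation at a first, well-defined point, and this divergence enables exactly one family of $\mathit{mutate}$ rules to fire and strictly decrease $\mu$.

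The divergence analysis proceeds by cases on the three rule families. If the first divergence is a first-order test inside $p$ itself, the rules of Figure~8 negate that constraint and consult the solver; satisfiability is guaranteed because the numeric slots of $\sigma^\star$ supply an explicit model, so the solver returns values pushing $\sigma_i$ toward $\sigma^\star$. If instead $\sigma^\star$ inspects its argument where $\sigma_i$ ignores it, the branch-insertion rules of Figure~9 add the corresponding $\mathsf{proc?}$ or numeric-equality branch, reconstructing the missing branch of $\sigma^\star$ and, when its body introduces a fresh first-order variable, mapping that variable to a number as properness demands. When the target branch sits beneath a test that current evaluation does not reach, I would first apply the targeted branch constraint modification of Figure~10 to steer evaluation down the correct branch before inserting. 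Because $\mu$ strictly decreases and is bounded below, the run is finite; when $\mu$ reaches zero the store agrees with $\sigma^\star$ on every branch and on every first-order value relevant to the error path, so evaluation reaches $\mathtt{ERROR}$, establishing (5). Conclusion (3) holds by construction, as each $\phi_i$ is simply the list the machine logs on $\sigma_i$, and conclusion (4)---that the constraints $\mathit{mutate}$ predicts are equivalent to a prefix of the next actually-logged list---is precisely the concolic property already established for $\mathit{mutate}$.

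The main obstacle I anticipate is the divergence-and-scheduling argument in the inductive step rather than any single rule application. Making ``first point of divergence'' precise requires a tight correspondence between the evaluation of an approximant $\sigma_i$ and that of $\sigma^\star$: up to the divergence point the two executions must log equivalent path constraints and exchange equivalent traced values between $p$ and its inputs, so that the logged suffix unambiguously identifies the single branch or numeric slot to repair. Showing that inserting one branch leaves the already-matched prefix undisturbed, and that interleaving SMT-driven numeric evolution with branch-shape evolution never strands the search, is the delicate part; I expect it to rest on a simulation invariant relating $\sigma_i$-runs to $\sigma^\star$-runs, in the same spirit as the intermediate-language simulation built for \autoref{correct:repncomp}.
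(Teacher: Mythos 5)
Your overall strategy is the paper's: start from the default store, repeatedly apply mutate so that successive stores approximate the given proper counterexample $(\sigma^\star,\pi^\star)$, and discharge conclusion (4) with the Concolic property (\autoref{search:predict}). (One small misattribution: \autoref{correct:repncomp} is not needed here --- the hypothesis of the lemma already hands you the proper counterexample; Representation Completeness is the other half of the proof of \autoref{correct:complete}.) The genuine problem is your progress measure. You define $\mu(\sigma)$ as the number of branches of $\sigma^\star$'s canonical functions still missing from $\sigma$, and your inductive step claims every divergence ``enables exactly one family of mutate rules to fire and strictly decrease $\mu$.'' That claim fails for the Figure~8 family: when the first divergence is a first-order test in the user program, mutate negates that constraint and the SMT solver only changes \emph{numbers} in the store; no branch is added and $\mu$ stays constant. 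Symmetrically, at the other end, $\mu=0$ does not give you conclusion (5): all branches of $\sigma^\star$ can be present while numeric slots still disagree, so further (non-$\mu$-decreasing) numeric steps are needed before the error fires. Your parenthetical that numeric slots ``are fixed by the SMT solver once the branch shape is correct'' is not how the concolic loop behaves --- numeric evolution and branch-shape evolution interleave (the paper's own Section~4 example alternates branch insertion, an SMT query, targeted branch-constraint modification, and more SMT queries) --- so your well-founded descent neither terminates as argued nor ends in the error state.

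The fix is the measure the paper actually uses: progress is counted on the \emph{evaluation paths}, not on the syntax of the store. The paper's key property is that, for any store $\sigma_a$, concolic evaluation either follows the same control-flow path as the $\sigma^\star$-evaluation or first takes a different branch at some point, and one application of mutate yields a store $\sigma_b$ whose evaluation agrees with the $\sigma^\star$-evaluation on \emph{one more branch}. This single quantity --- the length of the common prefix of branches --- strictly increases under all three rule families (numeric negation, branch insertion, and targeted branch-constraint modification alike), is bounded above by the length of the finite target path, and when it reaches that bound the evaluation has followed the target path to its end and hence triggers \texttt{ERROR}, which is exactly conclusion (5). Your syntactic $\mu$ could at best serve as a secondary component of a lexicographic measure; as the primary measure it leaves the termination argument with a real hole. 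The simulation invariant you anticipate for making ``first point of divergence'' precise is indeed needed and is the same one the paper relies on; it is your accounting of progress, not the simulation, that has to change.
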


\begin{proof}[Proof Sketch]
Let \raisebox{-3.1874999999999982bp}{\makebox[23.20546875bp][l]{\includegraphics[trim=2.4000000000000004 2.4000000000000004 2.4000000000000004 2.4000000000000004]{pict_195.pdf}}} denote a proper counterexample of \raisebox{-3.1874999999999982bp}{\makebox[4.69375bp][l]{\includegraphics[trim=2.4000000000000004 2.4000000000000004 2.4000000000000004 2.4000000000000004]{pict_93.pdf}}}.
First, we show a property of the concolic loop. For any store
\raisebox{-2.3617187499999996bp}{\makebox[7.957812499999999bp][l]{\includegraphics[trim=2.4000000000000004 2.4000000000000004 2.4000000000000004 2.4000000000000004]{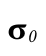}}}, the concolic evaluation of \raisebox{-3.1874999999999982bp}{\makebox[4.69375bp][l]{\includegraphics[trim=2.4000000000000004 2.4000000000000004 2.4000000000000004 2.4000000000000004]{pict_93.pdf}}} with inputs  \raisebox{-2.3617187499999996bp}{\makebox[7.957812499999999bp][l]{\includegraphics[trim=2.4000000000000004 2.4000000000000004 2.4000000000000004 2.4000000000000004]{pict_197.pdf}}}
 either follows the same control{-}flow path as the evaluation
of \raisebox{-3.1874999999999982bp}{\makebox[4.69375bp][l]{\includegraphics[trim=2.4000000000000004 2.4000000000000004 2.4000000000000004 2.4000000000000004]{pict_93.pdf}}} with inputs \raisebox{-3.1874999999999982bp}{\makebox[5.471875000000001bp][l]{\includegraphics[trim=2.4000000000000004 2.4000000000000004 2.4000000000000004 2.4000000000000004]{pict_95.pdf}}} or takes a different branch at some point.
By applying \raisebox{-3.1874999999999982bp}{\makebox[23.3828125bp][l]{\includegraphics[trim=2.4000000000000004 2.4000000000000004 2.4000000000000004 2.4000000000000004]{pict_100.pdf}}} to \raisebox{-2.3617187499999996bp}{\makebox[7.957812499999999bp][l]{\includegraphics[trim=2.4000000000000004 2.4000000000000004 2.4000000000000004 2.4000000000000004]{pict_197.pdf}}} and the list of path constraints,
we can obtain \raisebox{-2.3617187499999996bp}{\makebox[7.957812499999999bp][l]{\includegraphics[trim=2.4000000000000004 2.4000000000000004 2.4000000000000004 2.4000000000000004]{pict_188.pdf}}} such that
the evaluation of \raisebox{-3.1874999999999982bp}{\makebox[4.69375bp][l]{\includegraphics[trim=2.4000000000000004 2.4000000000000004 2.4000000000000004 2.4000000000000004]{pict_93.pdf}}} with inputs  \raisebox{-2.3617187499999996bp}{\makebox[7.957812499999999bp][l]{\includegraphics[trim=2.4000000000000004 2.4000000000000004 2.4000000000000004 2.4000000000000004]{pict_188.pdf}}} follows a control{-}flow path
that has one more branch in common with the evaluation of \raisebox{-3.1874999999999982bp}{\makebox[4.69375bp][l]{\includegraphics[trim=2.4000000000000004 2.4000000000000004 2.4000000000000004 2.4000000000000004]{pict_93.pdf}}} with inputs  \raisebox{-3.1874999999999982bp}{\makebox[5.471875000000001bp][l]{\includegraphics[trim=2.4000000000000004 2.4000000000000004 2.4000000000000004 2.4000000000000004]{pict_95.pdf}}}.
With this property in hand, we start with a store containing
default canonical functions (the ones with
the simplest shape) and we use the property to construct
a sequence of stores that gradually approximate \raisebox{-3.1874999999999982bp}{\makebox[5.471875000000001bp][l]{\includegraphics[trim=2.4000000000000004 2.4000000000000004 2.4000000000000004 2.4000000000000004]{pict_95.pdf}}}
until one of them triggers the \raisebox{-3.1874999999999982bp}{\makebox[24.0bp][l]{\includegraphics[trim=2.4000000000000004 2.4000000000000004 2.4000000000000004 2.4000000000000004]{pict_4.pdf}}} in \raisebox{-3.1874999999999982bp}{\makebox[4.69375bp][l]{\includegraphics[trim=2.4000000000000004 2.4000000000000004 2.4000000000000004 2.4000000000000004]{pict_93.pdf}}}.
\end{proof}

The last fact we establish for our concolic tester
is necessary for the proof of \autoref{correct:searchcomp}
but also has value on its own. It entails that, at each
iteration of its loop, the concolic tester aims to explore a specific control{-}flow path of the
user program and indeed produces new inputs that achieve this goal. We call this property the
\emph{concolic property}. Formally,  \autoref{search:predict} shows that
evaluating the user program with the store constructed by
\raisebox{-3.1874999999999982bp}{\makebox[23.3828125bp][l]{\includegraphics[trim=2.4000000000000004 2.4000000000000004 2.4000000000000004 2.4000000000000004]{pict_100.pdf}}} follows the control{-}flow path that \raisebox{-3.1874999999999982bp}{\makebox[23.3828125bp][l]{\includegraphics[trim=2.4000000000000004 2.4000000000000004 2.4000000000000004 2.4000000000000004]{pict_100.pdf}}} predicts with the
list of path constraints it returns along the store.

\begin{theorem}[Concolic]\label{search:predict}
Let the non{-}terminal \raisebox{-3.1874999999999982bp}{\makebox[9.67578125bp][l]{\includegraphics[trim=2.4000000000000004 2.4000000000000004 2.4000000000000004 2.4000000000000004]{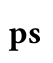}}} denote the subset of \raisebox{-3.1874999999999982bp}{\makebox[5.682812500000001bp][l]{\includegraphics[trim=2.4000000000000004 2.4000000000000004 2.4000000000000004 2.4000000000000004]{pict_99.pdf}}}
that corresponds to a first{-}order constraint or
a test constraint together with a block of branch constraints.
 For any \raisebox{-3.1874999999999982bp}{\makebox[4.69375bp][l]{\includegraphics[trim=2.4000000000000004 2.4000000000000004 2.4000000000000004 2.4000000000000004]{pict_93.pdf}}} and \raisebox{-2.3617187499999996bp}{\makebox[7.957812499999999bp][l]{\includegraphics[trim=2.4000000000000004 2.4000000000000004 2.4000000000000004 2.4000000000000004]{pict_188.pdf}}}, if

\noindent \begin{enumerate}\atItemizeStart

\item \raisebox{-3.3617187499999996bp}{\makebox[152.950625bp][l]{\includegraphics[trim=2.4000000000000004 2.4000000000000004 2.4000000000000004 2.4000000000000004]{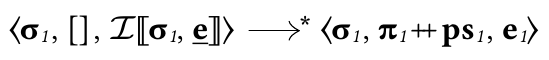}}}

\item \raisebox{-2.3617187499999996bp}{\makebox[8.16875bp][l]{\includegraphics[trim=2.4000000000000004 2.4000000000000004 2.4000000000000004 2.4000000000000004]{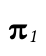}}} has no branch constraints corresponding to
\raisebox{-3.1874999999999982bp}{\makebox[19.200000000000003bp][l]{\includegraphics[trim=2.4000000000000004 2.4000000000000004 2.4000000000000004 2.4000000000000004]{pict_50.pdf}}} in the canonical functions in \raisebox{-2.3617187499999996bp}{\makebox[7.957812499999999bp][l]{\includegraphics[trim=2.4000000000000004 2.4000000000000004 2.4000000000000004 2.4000000000000004]{pict_188.pdf}}}.

\item \raisebox{-2.9390624999999986bp}{\makebox[137.15906249999998bp][l]{\includegraphics[trim=2.4000000000000004 2.4000000000000004 2.4000000000000004 2.4000000000000004]{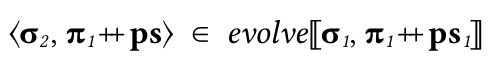}}}\end{enumerate}

\noindent then
\raisebox{-3.3617187499999996bp}{\makebox[152.950625bp][l]{\includegraphics[trim=2.4000000000000004 2.4000000000000004 2.4000000000000004 2.4000000000000004]{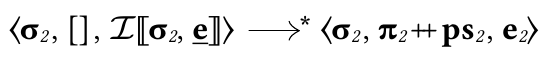}}}
such that \raisebox{-2.3617187499999996bp}{\makebox[27.89203125000001bp][l]{\includegraphics[trim=2.4000000000000004 2.4000000000000004 2.4000000000000004 2.4000000000000004]{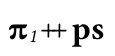}}} is equivalent to \raisebox{-2.3617187499999996bp}{\makebox[30.377968750000008bp][l]{\includegraphics[trim=2.4000000000000004 2.4000000000000004 2.4000000000000004 2.4000000000000004]{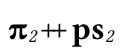}}}.
\end{theorem}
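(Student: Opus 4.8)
The plan is to prove the concolic property by a case analysis on which rule of the input-evolution metafunction (Figures~8--10) is applied in hypothesis~(3), reducing every case to a single core lemma that I will call \emph{path fidelity}: if a store $\sigma''$ is consistent with a list of path constraints $L$ in the sense enforced by the SMT encoding of Figure~7, then concolic evaluation of the instrumented user program with $\sigma''$ produces a list of path constraints equivalent to $L$. Since equivalence compares only first-order and branch constraints, the two lists may freely differ in their test constraints, and this is exactly the slack I will exploit: the evolution rules append or reshuffle test constraints while pinning down the first-order and branch decisions that actually steer the control flow.

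First I would establish path fidelity by induction on the length of the concolic reduction sequence (the compatible closure of Figure~6), threading an invariant that relates the current prefix of the machine's path-constraint register to the corresponding prefix of $L$. The only reduction steps that touch the register are the log-and-sequence rule and, through it, the conditional rules; for each I must show the logged constraint agrees with the one demanded by $L$. For a first-order constraint this follows because the SMT solver asserts the recorded expression trace to evaluate to the recorded boolean (clause~(i) of the encoding), so the concrete value recomputed from the trace under $\sigma''$ forces the same branch of the user-program conditional and hence emits the same constraint. For a branch constraint inside a canonical function the argument is identical, using clause~(ii), which asserts each branch test's success or its negation; the disjointness assertions of clause~(iii) guarantee that at most one branch of a given conditional is enabled, so $\sigma''$ determines the taken path without ambiguity.

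With path fidelity in hand, each evolution rule becomes routine. For the negation rules of Figure~8, the prediction $\phi'$ is the chosen prefix with its final first-order constraint flipped, and the solver returns a store satisfying exactly this modified list, so path fidelity yields an equivalent execution. For the branch-insertion rules of Figure~9, $\phi'$ appends a test constraint and a block ending in a successful branch constraint for the freshly inserted clause; here I additionally invoke the property that the auxiliary body-generating metafunction maps every fresh concolic variable it introduces to a number in $\sigma''$, so the new clause genuinely holds first-order data and its trace is well formed. For the targeted modification rule of Figure~10, $\phi'$ truncates a branch block and appends the branch constraint that would have been recorded had evaluation taken the selected branch, and path fidelity again certifies that the solver-produced store realizes this path. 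Hypothesis~(2), that the incoming constraints carry no branch constraints for the default branches of canonical functions, is what lets me treat each canonical conditional as currently terminating in its else-branch, which is precisely the precondition the Figure~9 rules assume; this same lemma feeds the inductive step of \autoref{correct:searchcomp}.

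The main obstacle I anticipate is the bookkeeping that connects the \emph{syntactic} block structure of the constraint list --- a test constraint followed by a consecutive run of branch constraints, as delimited by the non-terminal introduced in the statement --- to the \emph{dynamic} nesting of canonical-function calls. Each execution of a canonical conditional must emit exactly one such block, in the right order, and the instrumentation's injected log-and-sequence expressions must line these up with the blocks the evolution metafunction manipulated. Strengthening the inductive invariant enough to carry this alignment across nested and repeated calls to the same canonical function, while still permitting test constraints to drift under the equivalence, is where the proof's weight lies; the remaining per-rule verifications are direct once the invariant is stated correctly.
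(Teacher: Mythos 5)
Your proposal is correct in substance and belongs to the same family as the paper's proof: both are simulation arguments showing that the solver-produced store drives concolic evaluation down the path that \raisebox{-3.1874999999999982bp}{\makebox[23.3828125bp][l]{\includegraphics[trim=2.4000000000000004 2.4000000000000004 2.4000000000000004 2.4000000000000004]{pict_100.pdf}}} predicts. The difference is in the decomposition. The paper simulates the \emph{new} run against the \emph{old} run: it constructs, by lockstep simulation, intermediate states and a constraint list $\varphi_2$ equivalent to the predicted prefix $\varphi$, and then confines the only genuinely new reasoning to a single final step --- the new store makes the test of the predicted block succeed, so the machine emits one more block equivalent to it. You instead simulate the new run against the predicted \emph{list} itself, folding that final block into one invariant-based induction (your ``path fidelity'' lemma) and organizing the whole proof by case analysis on the evolution rules. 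What the paper's run-vs-run framing buys is that the alignment between constraint blocks and dynamic executions of canonical conditionals comes essentially for free: both runs instrument the same user program with stores that agree except at the evolved spot, so corresponding reduction steps log corresponding constraints. Your run-vs-list framing must encode that alignment into the invariant by hand, which is exactly the bookkeeping you identify as the proof's weight; your per-rule case analysis, and your use of hypothesis (2) and of the freshness property of the branch-body metafunction, otherwise match what the full development must do.

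One caution: path fidelity as you state it --- SMT-consistency of a store with a list $L$ implies evaluation produces a list equivalent to $L$ --- is false without further hypotheses. Consistency constrains only the numeric assignments; it says nothing about whether the conditionals mentioned in $L$ are ever reached, in what order, or whether $L$ mentions all conditionals the run encounters (the empty list is consistent with every store, and a list borrowed from a run of a different program can be consistent by accident). The lemma becomes true only when $L$ is required to be the output of \raisebox{-3.1874999999999982bp}{\makebox[23.3828125bp][l]{\includegraphics[trim=2.4000000000000004 2.4000000000000004 2.4000000000000004 2.4000000000000004]{pict_100.pdf}}} applied to the constraints of an actual prior run of the same program with the predecessor store --- that is, when the syntactic provenance of $L$ is part of the hypothesis. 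At that point the induction is in effect a simulation against that prior run, and you have re-derived the paper's argument in different clothing. So the plan goes through, but the core lemma cannot stand as a purely semantic fact about the encoding of figure~7; it must carry the structural relationship between $L$, the shapes of the canonical functions in the store, and the instrumented program.
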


\begin{proof}[Proof Sketch]
We show by simulation that there exists \raisebox{-2.3617187499999996bp}{\makebox[7.1796875bp][l]{\includegraphics[trim=2.4000000000000004 2.4000000000000004 2.4000000000000004 2.4000000000000004]{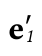}}}, \raisebox{-2.3617187499999996bp}{\makebox[7.1796875bp][l]{\includegraphics[trim=2.4000000000000004 2.4000000000000004 2.4000000000000004 2.4000000000000004]{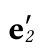}}} and \raisebox{-2.3617187499999996bp}{\makebox[8.16875bp][l]{\includegraphics[trim=2.4000000000000004 2.4000000000000004 2.4000000000000004 2.4000000000000004]{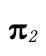}}} such that

\noindent \begin{itemize}\atItemizeStart

\item \raisebox{-2.3617187499999996bp}{\makebox[8.16875bp][l]{\includegraphics[trim=2.4000000000000004 2.4000000000000004 2.4000000000000004 2.4000000000000004]{pict_200.pdf}}} is equivalent to \raisebox{-2.3617187499999996bp}{\makebox[8.16875bp][l]{\includegraphics[trim=2.4000000000000004 2.4000000000000004 2.4000000000000004 2.4000000000000004]{pict_207.pdf}}},

\item \raisebox{-3.3617187499999996bp}{\makebox[218.34437499999999bp][l]{\includegraphics[trim=2.4000000000000004 2.4000000000000004 2.4000000000000004 2.4000000000000004]{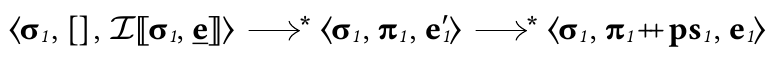}}} and

\item \raisebox{-3.3617187499999996bp}{\makebox[130.74140624999998bp][l]{\includegraphics[trim=2.4000000000000004 2.4000000000000004 2.4000000000000004 2.4000000000000004]{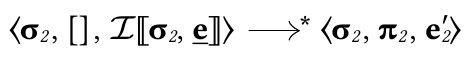}}}\end{itemize}

\noindent Since \raisebox{-2.3617187499999996bp}{\makebox[7.957812499999999bp][l]{\includegraphics[trim=2.4000000000000004 2.4000000000000004 2.4000000000000004 2.4000000000000004]{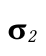}}} makes the test of the branch from \raisebox{-3.1874999999999982bp}{\makebox[9.67578125bp][l]{\includegraphics[trim=2.4000000000000004 2.4000000000000004 2.4000000000000004 2.4000000000000004]{pict_198.pdf}}} succeed
and \raisebox{-2.3617187499999996bp}{\makebox[7.1796875bp][l]{\includegraphics[trim=2.4000000000000004 2.4000000000000004 2.4000000000000004 2.4000000000000004]{pict_205.pdf}}} and \raisebox{-2.3617187499999996bp}{\makebox[7.1796875bp][l]{\includegraphics[trim=2.4000000000000004 2.4000000000000004 2.4000000000000004 2.4000000000000004]{pict_206.pdf}}} are built using simulation, we establish that
\raisebox{-3.3617187499999996bp}{\makebox[128.65453125bp][l]{\includegraphics[trim=2.4000000000000004 2.4000000000000004 2.4000000000000004 2.4000000000000004]{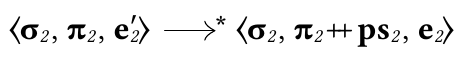}}}
where \raisebox{-2.3617187499999996bp}{\makebox[12.161718749999999bp][l]{\includegraphics[trim=2.4000000000000004 2.4000000000000004 2.4000000000000004 2.4000000000000004]{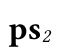}}} is equivalent to \raisebox{-3.1874999999999982bp}{\makebox[9.67578125bp][l]{\includegraphics[trim=2.4000000000000004 2.4000000000000004 2.4000000000000004 2.4000000000000004]{pict_198.pdf}}}.
\end{proof}

\sectionNewpage

\Ssection{Prototype Implementation}{Prototype Implementation}\label{t:x28part_x22secx3aprototypex22x29}

To evaluate whether our higher{-}order concolic tester
can find bugs in practice, we
built a prototype implementation
and used it with higher{-}order programs of our own devising and programs from the literature. In
\SecRef{\SectionNumberLink{t:x28part_x22secx3aprototypex2darchitecturex22x29}{7.1}}{Implementation Architecture} we describe the
prototype implementation and in \SecRef{\SectionNumberLink{t:x28part_x22secx3abenchmarkx22x29}{7.2}}{Benchmark Programs} we
report on the programs we tried and how well the implementation can
find the bugs they contain.

\Ssubsection{Implementation Architecture}{Implementation Architecture}\label{t:x28part_x22secx3aprototypex2darchitecturex22x29}

At the heart of our prototype is a language built on top of
Rosette\Autobibref{~(\hyperref[t:x28autobib_x22Emina_Torlak_and_Rastislav_BodikGrowing_Solverx2daided_Languages_with_RosetteIn_Procx2e_International_Symposium_on_New_Ideasx2c_New_Paradigmsx2c_and_Reflections_on_Programming_and_Softwarex2c_Onwardx21x2c_ppx2e_135x2dx2d1522013x22x29]{\AutobibLink{Torlak and Bodik}} \hyperref[t:x28autobib_x22Emina_Torlak_and_Rastislav_BodikGrowing_Solverx2daided_Languages_with_RosetteIn_Procx2e_International_Symposium_on_New_Ideasx2c_New_Paradigmsx2c_and_Reflections_on_Programming_and_Softwarex2c_Onwardx21x2c_ppx2e_135x2dx2d1522013x22x29]{\AutobibLink{2013}}, \hyperref[t:x28autobib_x22Emina_Torlak_and_Rastislav_BodikA_Lightweight_Symbolic_Virtual_Machine_for_Solverx2dAided_Host_LanguagesIn_Procx2e_ACM_Conference_on_Programming_Language_Design_and_Implementationx2c_ppx2e_530x2dx2d5412014x22x29]{\AutobibLink{2014}})}, a framework for constructing
domain{-}specific languages that employ SAT solvers. We start with most of
the features from Rosette itself, but use Racket{'}s domain{-}specific
language support\Autobibref{~(\hyperref[t:x28autobib_x22Matthias_Felleisenx2c_Robert_Bx2e_Findlerx2c_Matthew_Flattx2c_Shriram_Krishnamurthix2c_Eli_Barzilayx2c_Jay_McCarthyx2c_and_Sam_Tobinx2dHochstadtA_Programmable_Programming_LanguageCommunications_of_the_ACM_61x2c_ppx2e_3x3a62x2dx2d3x3a712018x22x29]{\AutobibLink{Felleisen et al\Sendabbrev{.}}} \hyperref[t:x28autobib_x22Matthias_Felleisenx2c_Robert_Bx2e_Findlerx2c_Matthew_Flattx2c_Shriram_Krishnamurthix2c_Eli_Barzilayx2c_Jay_McCarthyx2c_and_Sam_Tobinx2dHochstadtA_Programmable_Programming_LanguageCommunications_of_the_ACM_61x2c_ppx2e_3x3a62x2dx2d3x3a712018x22x29]{\AutobibLink{2018}}; \hyperref[t:x28autobib_x22Vincent_Stx2dAmourx2c_Daniel_Felteyx2c_Spencer_Px2e_Florencex2c_Shux2dHung_Youx2c_and_Robert_Bx2e_FindlerHerbarium_Racketensisx3a_A_Stroll_Through_the_WoodsProceedings_of_the_ACM_on_Programming_Languages_x28ICFPx29_1x2c_ppx2e_1x3a1x2dx2d1x3a152017x22x29]{\AutobibLink{St{-}Amour et al\Sendabbrev{.}}} \hyperref[t:x28autobib_x22Vincent_Stx2dAmourx2c_Daniel_Felteyx2c_Spencer_Px2e_Florencex2c_Shux2dHung_Youx2c_and_Robert_Bx2e_FindlerHerbarium_Racketensisx3a_A_Stroll_Through_the_WoodsProceedings_of_the_ACM_on_Programming_Languages_x28ICFPx29_1x2c_ppx2e_1x3a1x2dx2d1x3a152017x22x29]{\AutobibLink{2017}})} to adjust Rosette
to concolic evaluation. More specifically, we replace Rosette{'}s
conditional expressions (that employ symbolic execution) with ones that
compute specific values at conditional tests and in order to avoid exploring both
branches of any given conditional. These replacement conditional
expressions also record path constraints.  We use this language for the
concolic evaluation of the programs we aim to test.

To properly test these programs we also need to be able to
construct higher{-}order inputs (as described in
\ChapRef{\SectionNumberLink{t:x28part_x22secx3aapproximatex22x29}{3}}{Canonical Functions Are All We Need} and
\ChapRef{\SectionNumberLink{t:x28part_x22secx3ahowx2dgeneratex22x29}{4}}{Directed Evolution of Canonical Functions}). To do so, we have designed a data
structure that closely follows the grammar for
canonical forms. We interpret that data structure, producing
higher{-}order functions that we supply as inputs to the user program,
and we compute new instances of the data structure based on
the results of calling the SMT solver, leveraging Rosette.
Thus, in essence the canonical forms in the prototype have their own small
language. This language is slightly more
general than the description of canonical forms in \ChapRef{\SectionNumberLink{t:x28part_x22secx3amodelx22x29}{5}}{Formalizing Higher{-}Order Concolic Testing}. Specifically, in addition to unary
functions and integers, it also supports lists and booleans.

The programs we set to test, however, use some fairly
sophisticated features of Racket that the concolic language
in our prototype does not support. To bridge the gap, we
added a number of libraries. First, we implemented a
rudimentary contract system and a rudimentary complex number
library in our concolic adaptation of Rosette. We also
implemented a library that provides conversion wrappers to
adjust values flowing in and out of the user program.
It supports several forms of conversions: it can adjust
curried functions to appear as n{-}ary functions; it can
adjust a list of functions to appear as an object,
and it also supports a {``}lump{''} conversion
where integers are translated back and forth to a specific
set of interesting constants (typically symbols and
strings). This encoding of inputs also enables us to keep the language of the
prototype{'}s canonical forms small (thereby making the
evolution of new inputs simpler) while still being able
to find bugs in programs that use some of Racket{'}s more
sophisticated features.

As the formal model does not specify an explicit search strategy,
 the prototype comes with a very naive search strategy.
Specifically, we simply follow a breadth{-}first approach.
After each iteration of the concolic loop, we add all possible next
evolutions of the input (following \raisebox{-3.1874999999999982bp}{\makebox[47.208593750000006bp][l]{\includegraphics[trim=2.4000000000000004 2.4000000000000004 2.4000000000000004 2.4000000000000004]{pict_156.pdf}}}, from
figure~\hyperref[t:x28counter_x28x22figurex22_x22figx3amutate1x22x29x29]{\FigureRef{8}{t:x28counter_x28x22figurex22_x22figx3amutate1x22x29x29}}, figure~\hyperref[t:x28counter_x28x22figurex22_x22figx3amutate2x22x29x29]{\FigureRef{9}{t:x28counter_x28x22figurex22_x22figx3amutate2x22x29x29}}, and
figure~\hyperref[t:x28counter_x28x22figurex22_x22figx3amutate3x22x29x29]{\FigureRef{10}{t:x28counter_x28x22figurex22_x22figx3amutate3x22x29x29}}) in a queue to continue the
search.

All together, our prototype is a bit more than 7,000 lines
of code (with a bit less than 7,000 lines of code in the
prototype{'}s test suites).

\Ssubsection{Benchmark Programs}{Benchmark Programs}\label{t:x28part_x22secx3abenchmarkx22x29}

Our benchmark programs come from three sources. The first,
and primary source, is \Autobibref{\hyperref[t:x28autobib_x22Phxfac_Nguyx1ec5nx2c_Sam_Tobinx2dHochstadtx2c_and_David_Van_HornHigher_order_symbolic_execution_for_contract_verification_and_refutationx2eJournal_of_Functional_Programmingx2827x29x2c_ppx2e_e3x3a1x2dx2de3x3a542017x22x29]{\AutobibLink{Nguy\~{\^{e}}n et al\Sendabbrev{.}}}~(\hyperref[t:x28autobib_x22Phxfac_Nguyx1ec5nx2c_Sam_Tobinx2dHochstadtx2c_and_David_Van_HornHigher_order_symbolic_execution_for_contract_verification_and_refutationx2eJournal_of_Functional_Programmingx2827x29x2c_ppx2e_e3x3a1x2dx2de3x3a542017x22x29]{\AutobibLink{2017}})}{'}s work,
specifically from the \Scribtexttt{jfp} branch of
\href{https://github.com/philnguyen/soft-contract}{\Snolinkurl{https://github.com/philnguyen/soft-contract}}. These programs
ultimately come from other papers, as cited in
figure~\hyperref[t:x28counter_x28x22figurex22_x22figx3abenchmarkx2dresultsx22x29x29]{\FigureRef{11}{t:x28counter_x28x22figurex22_x22figx3abenchmarkx2dresultsx22x29x29}}. The second source is
CutEr\Autobibref{~(\hyperref[t:x28autobib_x22Aggelos_Giantsiosx2c_Nikolaos_Papaspyroux2c_and_Konstantinos_SagonasConcolic_Testing_for_Functional_LanguagesScience_of_Computer_Programmingx2c_ppx2e_109x2dx2d1342017x22x29]{\AutobibLink{Giantsios et al\Sendabbrev{.}}} \hyperref[t:x28autobib_x22Aggelos_Giantsiosx2c_Nikolaos_Papaspyroux2c_and_Konstantinos_SagonasConcolic_Testing_for_Functional_LanguagesScience_of_Computer_Programmingx2c_ppx2e_109x2dx2d1342017x22x29]{\AutobibLink{2017}})}, the tool for concolic testing functional
programs in Erlang. We collected
all of the test cases in CutEr{'}s test suite that seem to use
higher{-}order functions and translated them to use our
prototype. The last source are small examples that we
invented as part of this work; they are in the supplementary
material.

\begin{Figure}\begin{Centerfigure}\begin{FigureInside}\begin{SCentered}\begin{bigtabular}{@{\bigtableleftpad}l@{}l@{}r@{}r@{}r@{}r@{}l@{}l@{}l@{}l@{}l@{}l@{}l@{}}
\hbox{Name} &
\hbox{\mbox{\hphantom{\Scribtexttt{xx}}}} &
\hbox{Programs} &
\hbox{\mbox{\hphantom{\Scribtexttt{xx}}}} &
\hbox{Failures} &
\hbox{\mbox{\hphantom{\Scribtexttt{xx}}}} &
\multicolumn{7}{l}{\hbox{Source}} \\
\hline \hbox{games} &
\hbox{\mbox{\hphantom{\Scribtexttt{xx}}}} &
\hbox{3} &
\hbox{\mbox{\hphantom{\Scribtexttt{xx}}}} &
\hbox{3} &
\hbox{\mbox{\hphantom{\Scribtexttt{xx}}}} &
\multicolumn{7}{l}{\begin{minipage}[t]{0.5384615384615384\linewidth}
\Autobibref{\hyperref[t:x28autobib_x22Phxfac_Nguyx1ec5nx2c_Sam_Tobinx2dHochstadtx2c_and_David_Van_HornHigher_order_symbolic_execution_for_contract_verification_and_refutationx2eJournal_of_Functional_Programmingx2827x29x2c_ppx2e_e3x3a1x2dx2de3x3a542017x22x29]{\AutobibLink{Nguy\~{\^{e}}n et al\Sendabbrev{.}}}~(\hyperref[t:x28autobib_x22Phxfac_Nguyx1ec5nx2c_Sam_Tobinx2dHochstadtx2c_and_David_Van_HornHigher_order_symbolic_execution_for_contract_verification_and_refutationx2eJournal_of_Functional_Programmingx2827x29x2c_ppx2e_e3x3a1x2dx2de3x3a542017x22x29]{\AutobibLink{2017}})}, \Autobibref{\hyperref[t:x28autobib_x22Phxfac_Nguyx1ec5nx2c_Sam_Tobinx2dHochstadtx2c_and_David_Van_HornRelatively_complete_counterexamples_for_higherx2dorder_programsx2eIn_Procx2e_ACM_Conference_on_Programming_Language_Design_and_Implementationx2c_ppx2e_446x2dx2d4562015x22x29]{\AutobibLink{Nguy\~{\^{e}}n et al\Sendabbrev{.}}}~(\hyperref[t:x28autobib_x22Phxfac_Nguyx1ec5nx2c_Sam_Tobinx2dHochstadtx2c_and_David_Van_HornRelatively_complete_counterexamples_for_higherx2dorder_programsx2eIn_Procx2e_ACM_Conference_on_Programming_Language_Design_and_Implementationx2c_ppx2e_446x2dx2d4562015x22x29]{\AutobibLink{2015}})} \end{minipage}
} \\
\hbox{hors} &
\hbox{\mbox{\hphantom{\Scribtexttt{xx}}}} &
\hbox{23} &
\hbox{\mbox{\hphantom{\Scribtexttt{xx}}}} &
\hbox{1} &
\hbox{\mbox{\hphantom{\Scribtexttt{xx}}}} &
\multicolumn{7}{l}{\begin{minipage}[t]{0.5384615384615384\linewidth}
\Autobibref{\hyperref[t:x28autobib_x22Naoki_Kobayashix2c_Ryosuke_Satox2c_and_Hiroshi_UnnoPredicate_Abstraction_and_CEGAR_for_Higherx2dOrder_Model_CheckingIn_Procx2e_ACM_Conference_on_Programming_Language_Design_and_Implementationx2c_ppx2e_222x2dx2d2332011x22x29]{\AutobibLink{Kobayashi et al\Sendabbrev{.}}}~(\hyperref[t:x28autobib_x22Naoki_Kobayashix2c_Ryosuke_Satox2c_and_Hiroshi_UnnoPredicate_Abstraction_and_CEGAR_for_Higherx2dOrder_Model_CheckingIn_Procx2e_ACM_Conference_on_Programming_Language_Design_and_Implementationx2c_ppx2e_222x2dx2d2332011x22x29]{\AutobibLink{2011}})} \end{minipage}
} \\
\hbox{mochi{-}new} &
\hbox{\mbox{\hphantom{\Scribtexttt{xx}}}} &
\hbox{11} &
\hbox{\mbox{\hphantom{\Scribtexttt{xx}}}} &
\hbox{0} &
\hbox{\mbox{\hphantom{\Scribtexttt{xx}}}} &
\multicolumn{7}{l}{\begin{minipage}[t]{0.5384615384615384\linewidth}
\Autobibref{\hyperref[t:x28autobib_x22Naoki_Kobayashix2c_Ryosuke_Satox2c_and_Hiroshi_UnnoPredicate_Abstraction_and_CEGAR_for_Higherx2dOrder_Model_CheckingIn_Procx2e_ACM_Conference_on_Programming_Language_Design_and_Implementationx2c_ppx2e_222x2dx2d2332011x22x29]{\AutobibLink{Kobayashi et al\Sendabbrev{.}}}~(\hyperref[t:x28autobib_x22Naoki_Kobayashix2c_Ryosuke_Satox2c_and_Hiroshi_UnnoPredicate_Abstraction_and_CEGAR_for_Higherx2dOrder_Model_CheckingIn_Procx2e_ACM_Conference_on_Programming_Language_Design_and_Implementationx2c_ppx2e_222x2dx2d2332011x22x29]{\AutobibLink{2011}})} \end{minipage}
} \\
\hbox{octy} &
\hbox{\mbox{\hphantom{\Scribtexttt{xx}}}} &
\hbox{13} &
\hbox{\mbox{\hphantom{\Scribtexttt{xx}}}} &
\hbox{0} &
\hbox{\mbox{\hphantom{\Scribtexttt{xx}}}} &
\multicolumn{7}{l}{\begin{minipage}[t]{0.5384615384615384\linewidth}
\Autobibref{\hyperref[t:x28autobib_x22Sam_Tobinx2dHochstadt_and_Matthias_FelleisenLogical_Types_for_Untyped_LanguagesIn_Procx2e_ACM_International_Conference_on_Functional_Programmingx2c_ppx2e_117x2dx2d1282010x22x29]{\AutobibLink{Tobin{-}Hochstadt and Felleisen}}~(\hyperref[t:x28autobib_x22Sam_Tobinx2dHochstadt_and_Matthias_FelleisenLogical_Types_for_Untyped_LanguagesIn_Procx2e_ACM_International_Conference_on_Functional_Programmingx2c_ppx2e_117x2dx2d1282010x22x29]{\AutobibLink{2010}})} \end{minipage}
} \\
\hbox{others} &
\hbox{\mbox{\hphantom{\Scribtexttt{xx}}}} &
\hbox{26} &
\hbox{\mbox{\hphantom{\Scribtexttt{xx}}}} &
\hbox{4} &
\hbox{\mbox{\hphantom{\Scribtexttt{xx}}}} &
\multicolumn{7}{l}{\begin{minipage}[t]{0.5384615384615384\linewidth}
\Autobibref{\hyperref[t:x28autobib_x22Phxfac_Nguyx1ec5nx2c_Sam_Tobinx2dHochstadtx2c_and_David_Van_HornHigher_order_symbolic_execution_for_contract_verification_and_refutationx2eJournal_of_Functional_Programmingx2827x29x2c_ppx2e_e3x3a1x2dx2de3x3a542017x22x29]{\AutobibLink{Nguy\~{\^{e}}n et al\Sendabbrev{.}}}~(\hyperref[t:x28autobib_x22Phxfac_Nguyx1ec5nx2c_Sam_Tobinx2dHochstadtx2c_and_David_Van_HornHigher_order_symbolic_execution_for_contract_verification_and_refutationx2eJournal_of_Functional_Programmingx2827x29x2c_ppx2e_e3x3a1x2dx2de3x3a542017x22x29]{\AutobibLink{2017}})}, \Autobibref{\hyperref[t:x28autobib_x22Phxfac_Nguyx1ec5nx2c_Sam_Tobinx2dHochstadtx2c_and_David_Van_HornRelatively_complete_counterexamples_for_higherx2dorder_programsx2eIn_Procx2e_ACM_Conference_on_Programming_Language_Design_and_Implementationx2c_ppx2e_446x2dx2d4562015x22x29]{\AutobibLink{Nguy\~{\^{e}}n et al\Sendabbrev{.}}}~(\hyperref[t:x28autobib_x22Phxfac_Nguyx1ec5nx2c_Sam_Tobinx2dHochstadtx2c_and_David_Van_HornRelatively_complete_counterexamples_for_higherx2dorder_programsx2eIn_Procx2e_ACM_Conference_on_Programming_Language_Design_and_Implementationx2c_ppx2e_446x2dx2d4562015x22x29]{\AutobibLink{2015}})}, \Autobibref{\hyperref[t:x28autobib_x22Sam_Tobinx2dHochstadt_and_David_Van_HornHigherx2dOrder_Symbolic_Execution_via_ContractsIn_Procx2e_ACM_Conference_on_Objectx2dOriented_Programmingx2c_Systemsx2c_Languages_and_Applicationsx2c_ppx2e_537x2dx2d5542012x22x29]{\AutobibLink{Tobin{-}Hochstadt and Horn}}~(\hyperref[t:x28autobib_x22Sam_Tobinx2dHochstadt_and_David_Van_HornHigherx2dOrder_Symbolic_Execution_via_ContractsIn_Procx2e_ACM_Conference_on_Objectx2dOriented_Programmingx2c_Systemsx2c_Languages_and_Applicationsx2c_ppx2e_537x2dx2d5542012x22x29]{\AutobibLink{2012}})} \end{minipage}
} \\
\hbox{softy} &
\hbox{\mbox{\hphantom{\Scribtexttt{xx}}}} &
\hbox{12} &
\hbox{\mbox{\hphantom{\Scribtexttt{xx}}}} &
\hbox{1} &
\hbox{\mbox{\hphantom{\Scribtexttt{xx}}}} &
\multicolumn{7}{l}{\begin{minipage}[t]{0.5384615384615384\linewidth}
\Autobibref{\hyperref[t:x28autobib_x22Robert_Cartwright_and_Mike_FaganSoft_TypingIn_Procx2e_ACM_Conference_on_Programming_Language_Design_and_Implementationx2c_ppx2e_278x2dx2d2921991x22x29]{\AutobibLink{Cartwright and Fagan}}~(\hyperref[t:x28autobib_x22Robert_Cartwright_and_Mike_FaganSoft_TypingIn_Procx2e_ACM_Conference_on_Programming_Language_Design_and_Implementationx2c_ppx2e_278x2dx2d2921991x22x29]{\AutobibLink{1991}})}, \Autobibref{\hyperref[t:x28autobib_x22Andrew_Kx2e_Wright_and_Robert_CartwrightA_Practical_Soft_Type_System_for_SchemeACM_Transactions_on_Programming_Languages_and_Systems_19x281x29x2c_ppx2e_87x2dx2d1521997x22x29]{\AutobibLink{Wright and Cartwright}}~(\hyperref[t:x28autobib_x22Andrew_Kx2e_Wright_and_Robert_CartwrightA_Practical_Soft_Type_System_for_SchemeACM_Transactions_on_Programming_Languages_and_Systems_19x281x29x2c_ppx2e_87x2dx2d1521997x22x29]{\AutobibLink{1997}})} \end{minipage}
} \\
\hbox{terauchi} &
\hbox{\mbox{\hphantom{\Scribtexttt{xx}}}} &
\hbox{7} &
\hbox{\mbox{\hphantom{\Scribtexttt{xx}}}} &
\hbox{0} &
\hbox{\mbox{\hphantom{\Scribtexttt{xx}}}} &
\multicolumn{7}{l}{\begin{minipage}[t]{0.5384615384615384\linewidth}
\Autobibref{\hyperref[t:x28autobib_x22Tachio_TerauchiDependent_Types_from_CounterexamplesIn_Procx2e_ACM_Symposium_on_Principles_of_Programming_Languagesx2c_ppx2e_119x2dx2d1302010x22x29]{\AutobibLink{Terauchi}}~(\hyperref[t:x28autobib_x22Tachio_TerauchiDependent_Types_from_CounterexamplesIn_Procx2e_ACM_Symposium_on_Principles_of_Programming_Languagesx2c_ppx2e_119x2dx2d1302010x22x29]{\AutobibLink{2010}})} \end{minipage}
} \\
\hbox{cuter} &
\hbox{\mbox{\hphantom{\Scribtexttt{xx}}}} &
\hbox{20} &
\hbox{\mbox{\hphantom{\Scribtexttt{xx}}}} &
\hbox{0} &
\hbox{\mbox{\hphantom{\Scribtexttt{xx}}}} &
\multicolumn{7}{l}{\begin{minipage}[t]{0.5384615384615384\linewidth}
\Autobibref{\hyperref[t:x28autobib_x22Aggelos_Giantsiosx2c_Nikolaos_Papaspyroux2c_and_Konstantinos_SagonasConcolic_Testing_for_Functional_LanguagesScience_of_Computer_Programmingx2c_ppx2e_109x2dx2d1342017x22x29]{\AutobibLink{Giantsios et al\Sendabbrev{.}}}~(\hyperref[t:x28autobib_x22Aggelos_Giantsiosx2c_Nikolaos_Papaspyroux2c_and_Konstantinos_SagonasConcolic_Testing_for_Functional_LanguagesScience_of_Computer_Programmingx2c_ppx2e_109x2dx2d1342017x22x29]{\AutobibLink{2017}})} \end{minipage}
} \\
\hbox{c{-}hop} &
\hbox{\mbox{\hphantom{\Scribtexttt{xx}}}} &
\hbox{7} &
\hbox{\mbox{\hphantom{\Scribtexttt{xx}}}} &
\hbox{0} &
\hbox{\mbox{\hphantom{\Scribtexttt{xx}}}} &
\multicolumn{7}{l}{\begin{minipage}[t]{0.5384615384615384\linewidth}
Interesting examples we discovered \end{minipage}
} \\
\hbox{\textbf{total}} &
\hbox{\mbox{\hphantom{\Scribtexttt{xx}}}} &
\hbox{\textbf{122}} &
\hbox{\mbox{\hphantom{\Scribtexttt{xx}}}} &
\hbox{\textbf{9}} &
\hbox{\mbox{\hphantom{\Scribtexttt{xx}}}} &
\multicolumn{7}{l}{\hbox{}}\end{bigtabular}\end{SCentered}\end{FigureInside}\end{Centerfigure}

\Centertext{\Legend{\FigureTarget{\label{t:x28counter_x28x22figurex22_x22figx3abenchmarkx2dresultsx22x29x29}\textsf{Fig.}~\textsf{11}. }{t:x28counter_x28x22figurex22_x22figx3abenchmarkx2dresultsx22x29x29}\textsf{Benchmark Results}}}\end{Figure}

Out of 122 benchmarks, our prototype fails to discover
bugs in 9 of the programs. These programs can be collected into groups that
correspond to specific limitations of our prototype and that explain why
our tool fails to test them successfully. First, our
search strategy is naive, so one benchmarks  time
out after an hour. Second, our prototype does not handle
Racket{'}s \RktSym{struct} declaration so for
five benchmarks the prototype fails to find a bug. Third, our prototype does not
generate pairs of inputs when the contract is \RktSym{any/c}
which results in one more
failure. Finally,
 two benchmarks use fairly complex
syntactic features of Racket that our prototype cannot
accommodate.

\sectionNewpage

\Ssection{Related Work}{Related Work}\label{t:x28part_x22secx3arelatedx22x29}

\paragraph{Concolic Testing.}
CutEr\Autobibref{~(\hyperref[t:x28autobib_x22Aggelos_Giantsiosx2c_Nikolaos_Papaspyroux2c_and_Konstantinos_SagonasConcolic_Testing_for_Functional_LanguagesIn_Procx2e_ACM_International_Conference_on_Principles_and_Practice_of_Declarative_Programmingx2c_ppx2e_137x2dx2d1482015x22x29]{\AutobibLink{Giantsios et al\Sendabbrev{.}}} \hyperref[t:x28autobib_x22Aggelos_Giantsiosx2c_Nikolaos_Papaspyroux2c_and_Konstantinos_SagonasConcolic_Testing_for_Functional_LanguagesIn_Procx2e_ACM_International_Conference_on_Principles_and_Practice_of_Declarative_Programmingx2c_ppx2e_137x2dx2d1482015x22x29]{\AutobibLink{2015}}, \hyperref[t:x28autobib_x22Aggelos_Giantsiosx2c_Nikolaos_Papaspyroux2c_and_Konstantinos_SagonasConcolic_Testing_for_Functional_LanguagesScience_of_Computer_Programmingx2c_ppx2e_109x2dx2d1342017x22x29]{\AutobibLink{2017}})} is a concolic testing tool
for Erlang\Autobibref{~(\hyperref[t:x28autobib_x22Joe_Armstrongx2c_Robert_Virdingx2c_Claes_Wikstrxf6mx2c_and_Mike_WilliamsProgramming_Erlangx3a_Software_for_a_Concurrent_WorldPrentice_Hall2007x22x29]{\AutobibLink{Armstrong et al\Sendabbrev{.}}} \hyperref[t:x28autobib_x22Joe_Armstrongx2c_Robert_Virdingx2c_Claes_Wikstrxf6mx2c_and_Mike_WilliamsProgramming_Erlangx3a_Software_for_a_Concurrent_WorldPrentice_Hall2007x22x29]{\AutobibLink{2007}})}. It supports the generation
of functions, but it is not complete in our sense. More specifically
it does not generate inputs that contain calls in their
bodies.\NoteBox{\NoteContent{Personal communication with Kostis Sagonas.}}
\Autobibref{\hyperref[t:x28autobib_x22Adrixe1n_Palacios_and_Germxe1n_VidalConcolic_Execution_in_Functional_Programming_by_Program_InstrumentationIn_Procx2e_International_Symposium_on_Logicx2dBased_Program_Synthesis_and_TRansformationx2c_ppx2e_277x2dx2d2922015x22x29]{\AutobibLink{Palacios and Vidal}}~(\hyperref[t:x28autobib_x22Adrixe1n_Palacios_and_Germxe1n_VidalConcolic_Execution_in_Functional_Programming_by_Program_InstrumentationIn_Procx2e_International_Symposium_on_Logicx2dBased_Program_Synthesis_and_TRansformationx2c_ppx2e_277x2dx2d2922015x22x29]{\AutobibLink{2015}})} proposes an
instrumentation approach for concolic testers of
functional languages but do not address the generation
of higher{-}order inputs.

\Autobibref{\hyperref[t:x28autobib_x22Lian_Lix2c_Yi_Lux2c_and_Jingling_XueDynamic_Symbolic_Execution_for_PolymorphismIn_Procx2e_International_Conference_on_Compiler_Constructionx2c_ppx2e_120x2dx2d1302017x22x29]{\AutobibLink{Li et al\Sendabbrev{.}}}~(\hyperref[t:x28autobib_x22Lian_Lix2c_Yi_Lux2c_and_Jingling_XueDynamic_Symbolic_Execution_for_PolymorphismIn_Procx2e_International_Conference_on_Compiler_Constructionx2c_ppx2e_120x2dx2d1302017x22x29]{\AutobibLink{2017}})} extends the design of path constraints with
symbolic subtype expressions in order to handle control{-}flow structure
involving polymorphism in object{-}oriented languages.
However, this design only generates inputs using existing
classes and does not synthesize new class definitions.

Path explosion remains a central challenge
for concolic testing techniques\Autobibref{~(\hyperref[t:x28autobib_x22Roberto_Baldonix2c_Emilio_Coppax2c_Daniele_Cono_Dx27Eliax2c_Camil_Demetrescux2c_and_Irene_FinocchiA_Survey_of_Symbolic_Execution_TechniquesACM_Computx2e_Survx2e_51x283x292018x22x29]{\AutobibLink{Baldoni et al\Sendabbrev{.}}} \hyperref[t:x28autobib_x22Roberto_Baldonix2c_Emilio_Coppax2c_Daniele_Cono_Dx27Eliax2c_Camil_Demetrescux2c_and_Irene_FinocchiA_Survey_of_Symbolic_Execution_TechniquesACM_Computx2e_Survx2e_51x283x292018x22x29]{\AutobibLink{2018}}; \hyperref[t:x28autobib_x22Cristian_Cadar_and_Koushik_SenSymbolic_Execution_for_Software_Testingx3a_Three_Decades_LaterCommunications_of_the_ACMx2c_ppx2e_82x2dx2d902013x22x29]{\AutobibLink{Cadar and Sen}} \hyperref[t:x28autobib_x22Cristian_Cadar_and_Koushik_SenSymbolic_Execution_for_Software_Testingx3a_Three_Decades_LaterCommunications_of_the_ACMx2c_ppx2e_82x2dx2d902013x22x29]{\AutobibLink{2013}})},
and it is a challenge
that partially motivates our work.
\Autobibref{\hyperref[t:x28autobib_x22Patrice_GodefroidCompositional_Dynamic_Test_GenerationIn_Procx2e_ACM_Symposium_on_Principles_of_Programming_Languagesx2c_ppx2e_47x2dx2d542007x22x29]{\AutobibLink{Godefroid}}~(\hyperref[t:x28autobib_x22Patrice_GodefroidCompositional_Dynamic_Test_GenerationIn_Procx2e_ACM_Symposium_on_Principles_of_Programming_Languagesx2c_ppx2e_47x2dx2d542007x22x29]{\AutobibLink{2007}})} approaches the problem by computing function summaries
on{-}the{-}fly to tame the combinatorial explosion of the search space of
control{-}flow paths.
Similarly, \Autobibref{\hyperref[t:x28autobib_x22Saswat_Anandx2c_Patrice_Godefroidx2c_and_Nikolai_TillmannDemandx2ddriven_Compositional_Symbolic_ExecutionIn_Procx2e_International_Conference_on_Tools_and_Algorithms_for_the_Construction_and_Analysis_of_Systemsx2c_ppx2e_367x2dx2d38312008x22x29]{\AutobibLink{Anand et al\Sendabbrev{.}}}~(\hyperref[t:x28autobib_x22Saswat_Anandx2c_Patrice_Godefroidx2c_and_Nikolai_TillmannDemandx2ddriven_Compositional_Symbolic_ExecutionIn_Procx2e_International_Conference_on_Tools_and_Algorithms_for_the_Construction_and_Analysis_of_Systemsx2c_ppx2e_367x2dx2d38312008x22x29]{\AutobibLink{2008}})} performs
symbolic execution compositionally using function summaries.
In both cases, due to the first{-}order nature of the programming
languages they are working with,
the summaries are first{-}order and
do not include possible higher{-}order interactions
between functions.

FOCAL\Autobibref{~(\hyperref[t:x28autobib_x22Yunho_Kimx2c_Shin_Hongx2c_and_Moonzo_KimTargetx2dDriven_Compositional_Concolic_Testing_with_Function_Summary_Refinement_for_Effective_Bug_DetectionIn_Procx2e_International_Symposium_on_on_the_Foundations_of_Software_Engineeringx2c_ppx2e_16x2dx2d262019x22x29]{\AutobibLink{Kim et al\Sendabbrev{.}}} \hyperref[t:x28autobib_x22Yunho_Kimx2c_Shin_Hongx2c_and_Moonzo_KimTargetx2dDriven_Compositional_Concolic_Testing_with_Function_Summary_Refinement_for_Effective_Bug_DetectionIn_Procx2e_International_Symposium_on_on_the_Foundations_of_Software_Engineeringx2c_ppx2e_16x2dx2d262019x22x29]{\AutobibLink{2019}})} addresses the path
explosion problem by breaking programs down into
small units to reduce the search space.
FOCAL tests units individually and tries to construct
a system{-}level test by refining path constraints using
function summaries.

\paragraph{Symbolic Execution.}

\Autobibref{\hyperref[t:x28autobib_x22Phxfac_Nguyx1ec5nx2c_Sam_Tobinx2dHochstadtx2c_and_David_Van_HornHigher_order_symbolic_execution_for_contract_verification_and_refutationx2eJournal_of_Functional_Programmingx2827x29x2c_ppx2e_e3x3a1x2dx2de3x3a542017x22x29]{\AutobibLink{Nguy\~{\^{e}}n et al\Sendabbrev{.}}}~(\hyperref[t:x28autobib_x22Phxfac_Nguyx1ec5nx2c_Sam_Tobinx2dHochstadtx2c_and_David_Van_HornHigher_order_symbolic_execution_for_contract_verification_and_refutationx2eJournal_of_Functional_Programmingx2827x29x2c_ppx2e_e3x3a1x2dx2de3x3a542017x22x29]{\AutobibLink{2017}})} and \Autobibref{\hyperref[t:x28autobib_x22Sam_Tobinx2dHochstadt_and_David_Van_HornHigherx2dOrder_Symbolic_Execution_via_ContractsIn_Procx2e_ACM_Conference_on_Objectx2dOriented_Programmingx2c_Systemsx2c_Languages_and_Applicationsx2c_ppx2e_537x2dx2d5542012x22x29]{\AutobibLink{Tobin{-}Hochstadt and Horn}}~(\hyperref[t:x28autobib_x22Sam_Tobinx2dHochstadt_and_David_Van_HornHigherx2dOrder_Symbolic_Execution_via_ContractsIn_Procx2e_ACM_Conference_on_Objectx2dOriented_Programmingx2c_Systemsx2c_Languages_and_Applicationsx2c_ppx2e_537x2dx2d5542012x22x29]{\AutobibLink{2012}})}
introduce higher{-}order symbolic execution and propose
the idea of canonical forms for generating counterexamples.
Their work on symbolic execution inspired our approach to
higher{-}order concolic testing.
Comparing to their work, our concolic tester
incorporates the notion of path constraints to support
incremental and systematic search over the control{-}flow
graph of higher{-}order programs.

\paragraph{Random Testing.}

QuickCheck\Autobibref{~(\hyperref[t:x28autobib_x22Koen_Claessen_and_John_HughesQuickCheckx3a_A_Lightweight_Tool_for_Random_Testing_of_Haskell_ProgramsIn_Procx2e_ACM_International_Conference_on_Functional_Programmingx2c_ppx2e_268x2dx2d2792000x22x29]{\AutobibLink{Claessen and Hughes}} \hyperref[t:x28autobib_x22Koen_Claessen_and_John_HughesQuickCheckx3a_A_Lightweight_Tool_for_Random_Testing_of_Haskell_ProgramsIn_Procx2e_ACM_International_Conference_on_Functional_Programmingx2c_ppx2e_268x2dx2d2792000x22x29]{\AutobibLink{2000}})} supports random testing
of higher{-}order functions by using user{-}provided maps
from the input type to integers and from integers to the
output type. \Autobibref{\hyperref[t:x28autobib_x22Pieter_Koopman_and_Rinus_PlasmeijerAutomatic_Testing_of_Higher_Order_FunctionsIn_Procx2e_Asian_Symposium_on_Programming_Languages_and_Systemsx2c_ppx2e_148x2dx2d1642006x22x29]{\AutobibLink{Koopman and Plasmeijer}}~(\hyperref[t:x28autobib_x22Pieter_Koopman_and_Rinus_PlasmeijerAutomatic_Testing_of_Higher_Order_FunctionsIn_Procx2e_Asian_Symposium_on_Programming_Languages_and_Systemsx2c_ppx2e_148x2dx2d1642006x22x29]{\AutobibLink{2006}})} improves upon QuickCheck
by using a predefined datatype representing the
syntax of higher{-}order functions to generate inputs with
richer behavior.

LambdaTester\Autobibref{~(\hyperref[t:x28autobib_x22Marija_Selakovicx2c_Michael_Pradelx2c_Rezwana_Karimx2c_and_Frank_TipTest_Generation_for_Higherx2dorder_Functions_in_Dynamic_LanguagesProceedings_of_the_ACM_on_Programming_Languages_x28OOPSLAx29_2x2c_ppx2e_161x3a1x2dx2d161x3a272018x22x29]{\AutobibLink{Selakovic et al\Sendabbrev{.}}} \hyperref[t:x28autobib_x22Marija_Selakovicx2c_Michael_Pradelx2c_Rezwana_Karimx2c_and_Frank_TipTest_Generation_for_Higherx2dorder_Functions_in_Dynamic_LanguagesProceedings_of_the_ACM_on_Programming_Languages_x28OOPSLAx29_2x2c_ppx2e_161x3a1x2dx2d161x3a272018x22x29]{\AutobibLink{2018}})} focuses on testing and
generating higher{-}order functions that mutate
an object state in order to affect control{-}flow paths
that depend on the object{'}s state.

\Autobibref{\hyperref[t:x28autobib_x22Casey_Kleinx2c_Matthew_Flattx2c_and_Robert_Bruce_FindlerRandom_Testing_for_Higherx2dorderx2c_Stateful_ProgramsIn_Procx2e_ACM_Conference_on_Objectx2dOriented_Programmingx2c_Systemsx2c_Languages_and_Applicationsx2c_ppx2e_555x2dx2d5662010x22x29]{\AutobibLink{Klein et al\Sendabbrev{.}}}~(\hyperref[t:x28autobib_x22Casey_Kleinx2c_Matthew_Flattx2c_and_Robert_Bruce_FindlerRandom_Testing_for_Higherx2dorderx2c_Stateful_ProgramsIn_Procx2e_ACM_Conference_on_Objectx2dOriented_Programmingx2c_Systemsx2c_Languages_and_Applicationsx2c_ppx2e_555x2dx2d5662010x22x29]{\AutobibLink{2010}})} uses the idea of calling
higher{-}order inputs in order to discover bugs in them,
combining the output of a function with the input of another.
Their techniques target a stateful setting and
are designed to work with opaque types.

\paragraph{Program Synthesis.}

The study of program synthesis for functional languages
also faces the challenge of generating higher{-}order programs.
Myth\Autobibref{~(\hyperref[t:x28autobib_x22Peterx2dMichael_Osera_and_Steve_ZdancewicTypex2dandx2dExamplex2dDirected_Program_Synthesis2015x22x29]{\AutobibLink{Osera and Zdancewic}} \hyperref[t:x28autobib_x22Peterx2dMichael_Osera_and_Steve_ZdancewicTypex2dandx2dExamplex2dDirected_Program_Synthesis2015x22x29]{\AutobibLink{2015}})} synthesizes higher{-}order functions
that generalize a given set of
input{-}output examples over inductive datatypes. Concolic testing
for higher{-}order functions can be viewed as synthesizing higher{-}order inputs
from an evolving set of examples dynamically
collected from the user programs, while having
the goal of exposing all control{-}flow paths.

\sectionNewpage

\Ssectionstarx{References}{References}\label{t:x28part_x22docx2dbibliographyx22x29}

\begin{AutoBibliography}\begin{SingleColumn}\label{t:x28autobib_x22Saswat_Anandx2c_Patrice_Godefroidx2c_and_Nikolai_TillmannDemandx2ddriven_Compositional_Symbolic_ExecutionIn_Procx2e_International_Conference_on_Tools_and_Algorithms_for_the_Construction_and_Analysis_of_Systemsx2c_ppx2e_367x2dx2d38312008x22x29}\Autobibentry{Saswat Anand, Patrice Godefroid, and Nikolai Tillmann. Demand{-}driven Compositional Symbolic Execution. In \textit{Proc. International Conference on Tools and Algorithms for the Construction and Analysis of Systems}, pp. 367{--}3831, 2008.}

\label{t:x28autobib_x22Saswat_Anandx2c_Mayur_Naikx2c_Mary_Jean_Harroldx2c_and_Hongseok_YangAutomated_Concolic_Testing_of_Smartphone_AppsIn_Procx2e_International_Symposium_on_on_the_Foundations_of_Software_Engineeringx2c_ppx2e_59x3a1x2dx2d59x3a112012x22x29}\Autobibentry{Saswat Anand, Mayur Naik, Mary Jean Harrold, and Hongseok Yang. Automated Concolic Testing of Smartphone Apps. In \textit{Proc. International Symposium on on the Foundations of Software Engineering}, pp. 59:1{--}59:11, 2012.}

\label{t:x28autobib_x22Joe_Armstrongx2c_Robert_Virdingx2c_Claes_Wikstrxf6mx2c_and_Mike_WilliamsProgramming_Erlangx3a_Software_for_a_Concurrent_WorldPrentice_Hall2007x22x29}\Autobibentry{Joe Armstrong, Robert Virding, Claes Wikstr\"{o}m, and Mike Williams. \textit{Programming Erlang: Software for a Concurrent World}. Prentice Hall, 2007.}

\label{t:x28autobib_x22Roberto_Baldonix2c_Emilio_Coppax2c_Daniele_Cono_Dx27Eliax2c_Camil_Demetrescux2c_and_Irene_FinocchiA_Survey_of_Symbolic_Execution_TechniquesACM_Computx2e_Survx2e_51x283x292018x22x29}\Autobibentry{Roberto Baldoni, Emilio Coppa, Daniele Cono D{\textquotesingle}Elia, Camil Demetrescu, and Irene Finocchi. A Survey of Symbolic Execution Techniques. \textit{ACM Comput. Surv.} 51(3), 2018.}

\label{t:x28autobib_x22Jacob_Burnim_and_Koushik_SenHeuristics_for_Scalable_Dynamic_Test_GenerationIn_Procx2e_ACMx2fIEEE_International_Conference_on_Automated_Software_Engineeringx2c_ppx2e_443x2dx2d4462008x22x29}\Autobibentry{Jacob Burnim and Koushik Sen. Heuristics for Scalable Dynamic Test Generation. In \textit{Proc. ACM/IEEE International Conference on Automated Software Engineering}, pp. 443{--}446, 2008.}

\label{t:x28autobib_x22Cristian_Cadarx2c_Daniel_Dunbarx2c_and_Dawson_EnglerKLEEx3a_Unassisted_and_Automatic_Generation_of_Highx2dcoverage_Tests_for_Complex_Systems_ProgramsIn_Procx2e_USENIX_Symposium_on_Operating_Systems_Design_and_Implementationx2c_ppx2e_209x2dx2d2242008x22x29}\Autobibentry{Cristian Cadar, Daniel Dunbar, and Dawson Engler. KLEE: Unassisted and Automatic Generation of High{-}coverage Tests for Complex Systems Programs. In \textit{Proc. USENIX Symposium on Operating Systems Design and Implementation}, pp. 209{--}224, 2008.}

\label{t:x28autobib_x22Cristian_Cadar_and_Dawson_EnglerExecution_Generated_Test_Casesx3a_How_to_Make_Systems_Code_Crash_ItselfIn_Procx2e_International_SPINConference_on_Model_Cheching_Softwarex2c_ppx2e_2x2dx2d232005x22x29}\Autobibentry{Cristian Cadar and Dawson Engler. Execution Generated Test Cases: How to Make Systems Code Crash Itself. In \textit{Proc. International SPINConference on Model Cheching Software}, pp. 2{--}23, 2005.}

\label{t:x28autobib_x22Cristian_Cadar_and_Koushik_SenSymbolic_Execution_for_Software_Testingx3a_Three_Decades_LaterCommunications_of_the_ACMx2c_ppx2e_82x2dx2d902013x22x29}\Autobibentry{Cristian Cadar and Koushik Sen. Symbolic Execution for Software Testing: Three Decades Later. \textit{Communications of the ACM}, pp. 82{--}90, 2013.}

\label{t:x28autobib_x22Robert_Cartwright_and_Mike_FaganSoft_TypingIn_Procx2e_ACM_Conference_on_Programming_Language_Design_and_Implementationx2c_ppx2e_278x2dx2d2921991x22x29}\Autobibentry{Robert Cartwright and Mike Fagan. Soft Typing. In \textit{Proc. ACM Conference on Programming Language Design and Implementation}, pp. 278{--}292, 1991.}

\label{t:x28autobib_x22Koen_Claessen_and_John_HughesQuickCheckx3a_A_Lightweight_Tool_for_Random_Testing_of_Haskell_ProgramsIn_Procx2e_ACM_International_Conference_on_Functional_Programmingx2c_ppx2e_268x2dx2d2792000x22x29}\Autobibentry{Koen Claessen and John Hughes. QuickCheck: A Lightweight Tool for Random Testing of Haskell Programs. In \textit{Proc. ACM International Conference on Functional Programming}, pp. 268{--}279, 2000.}

\label{t:x28autobib_x22Marko_Dimjax161evix107x2c_Dimitra_Giannakopouloux2c_Falk_Howarx2c_Falk_Howarx2c_Falk_Howarx2c_and_Falk_HowarThe_Dartx2c_the_Psycox2c_and_the_Doopx3a_Concolic_Execution_in_JavaACM_SIGSOFT_Software_Engineering_Notes_40x281x29x2c_ppx2e_1x2dx2d52015x22x29}\Autobibentry{Marko Dimja\v{s}evi\'{c}, Dimitra Giannakopoulou, Falk Howar, Falk Howar, Falk Howar, and Falk Howar. The Dart, the Psyco, and the Doop: Concolic Execution in Java. \textit{ACM SIGSOFT Software Engineering Notes} 40(1), pp. 1{--}5, 2015.}

\label{t:x28autobib_x22Michael_Emmix2c_Rupak_Majumdarx2c_and_Koushik_SenDynamic_Test_Input_Generation_for_Database_ApplicationsIn_Procx2e_International_Symposium_on_Software_Testing_and_Analysisx2c_ppx2e_151x2dx2d1622007x22x29}\Autobibentry{Michael Emmi, Rupak Majumdar, and Koushik Sen. Dynamic Test Input Generation for Database Applications. In \textit{Proc. International Symposium on Software Testing and Analysis}, pp. 151{--}162, 2007.}

\label{t:x28autobib_x22Azadeh_Farzanx2c_Andreas_Holzerx2c_Niloofar_Razavix2c_and_Helmut_VeithCon2Colic_TestingIn_Procx2e_International_Symposium_on_on_the_Foundations_of_Software_Engineeringx2c_ppx2e_37x2dx2d472013x22x29}\Autobibentry{Azadeh Farzan, Andreas Holzer, Niloofar Razavi, and Helmut Veith. Con2Colic Testing. In \textit{Proc. International Symposium on on the Foundations of Software Engineering}, pp. 37{--}47, 2013.}

\label{t:x28autobib_x22Matthias_Felleisenx2c_Robert_Bx2e_Findlerx2c_Matthew_Flattx2c_Shriram_Krishnamurthix2c_Eli_Barzilayx2c_Jay_McCarthyx2c_and_Sam_Tobinx2dHochstadtA_Programmable_Programming_LanguageCommunications_of_the_ACM_61x2c_ppx2e_3x3a62x2dx2d3x3a712018x22x29}\Autobibentry{Matthias Felleisen, Robert B. Findler, Matthew Flatt, Shriram Krishnamurthi, Eli Barzilay, Jay McCarthy, and Sam Tobin{-}Hochstadt. A Programmable Programming Language. \textit{Communications of the ACM} 61, pp. 3:62{--}3:71, 2018.}

\label{t:x28autobib_x22Aggelos_Giantsiosx2c_Nikolaos_Papaspyroux2c_and_Konstantinos_SagonasConcolic_Testing_for_Functional_LanguagesIn_Procx2e_ACM_International_Conference_on_Principles_and_Practice_of_Declarative_Programmingx2c_ppx2e_137x2dx2d1482015x22x29}\Autobibentry{Aggelos Giantsios, Nikolaos Papaspyrou, and Konstantinos Sagonas. Concolic Testing for Functional Languages. In \textit{Proc. ACM International Conference on Principles and Practice of Declarative Programming}, pp. 137{--}148, 2015.}

\label{t:x28autobib_x22Aggelos_Giantsiosx2c_Nikolaos_Papaspyroux2c_and_Konstantinos_SagonasConcolic_Testing_for_Functional_LanguagesScience_of_Computer_Programmingx2c_ppx2e_109x2dx2d1342017x22x29}\Autobibentry{Aggelos Giantsios, Nikolaos Papaspyrou, and Konstantinos Sagonas. Concolic Testing for Functional Languages. \textit{Science of Computer Programming}, pp. 109{--}134, 2017.}

\label{t:x28autobib_x22Patrice_GodefroidCompositional_Dynamic_Test_GenerationIn_Procx2e_ACM_Symposium_on_Principles_of_Programming_Languagesx2c_ppx2e_47x2dx2d542007x22x29}\Autobibentry{Patrice Godefroid. Compositional Dynamic Test Generation. In \textit{Proc. ACM Symposium on Principles of Programming Languages}, pp. 47{--}54, 2007.}

\label{t:x28autobib_x22Patrice_Godefroidx2c_Nils_Klarlundx2c_and_Koushik_SenDARTx3a_Directed_Automated_Random_TestingIn_Procx2e_ACM_Conference_on_Programming_Language_Design_and_Implementationx2c_ppx2e_213x2dx2d2232005x22x29}\Autobibentry{Patrice Godefroid, Nils Klarlund, and Koushik Sen. DART: Directed Automated Random Testing. In \textit{Proc. ACM Conference on Programming Language Design and Implementation}, pp. 213{--}223, 2005.}

\label{t:x28autobib_x22Patrice_Godefroidx2c_Michael_Yx2e_Levinx2c_and_David_MolnarAutomated_Whitebox_Fuzz_TestingIn_Procx2e_Network_and_Distributed_System_Security_Symposium2008x22x29}\Autobibentry{Patrice Godefroid, Michael Y. Levin, and David Molnar. Automated Whitebox Fuzz Testing. In \textit{Proc. Network and Distributed System Security Symposium}, 2008.}

\label{t:x28autobib_x22Patrice_Godefroidx2c_Michael_Yx2e_Levinx2c_and_David_MolnarSAGEx3a_Whitebox_Fuzzing_for_Security_TestingACM_Queue_10x281x29x2c_ppx2e_20x3a20x2dx2d20x3a272012x22x29}\Autobibentry{Patrice Godefroid, Michael Y. Levin, and David Molnar. SAGE: Whitebox Fuzzing for Security Testing. \textit{ACM Queue} 10(1), pp. 20:20{--}20:27, 2012.}

\label{t:x28autobib_x22Li_Guodongx2c_Esben_Andreasenx2c_and_Indradeep_GhoshSymJSx3a_Automatic_Symbolic_Testing_of_JavaScript_Web_ApplicationsIn_Procx2e_International_Symposium_on_on_the_Foundations_of_Software_Engineeringx2c_ppx2e_449x2dx2d4592014x22x29}\Autobibentry{Li Guodong, Esben Andreasen, and Indradeep Ghosh. SymJS: Automatic Symbolic Testing of JavaScript Web Applications. In \textit{Proc. International Symposium on on the Foundations of Software Engineering}, pp. 449{--}459, 2014.}

\label{t:x28autobib_x22Yunho_Kimx2c_Shin_Hongx2c_and_Moonzo_KimTargetx2dDriven_Compositional_Concolic_Testing_with_Function_Summary_Refinement_for_Effective_Bug_DetectionIn_Procx2e_International_Symposium_on_on_the_Foundations_of_Software_Engineeringx2c_ppx2e_16x2dx2d262019x22x29}\Autobibentry{Yunho Kim, Shin Hong, and Moonzo Kim. Target{-}Driven Compositional Concolic Testing with Function Summary Refinement for Effective Bug Detection. In \textit{Proc. International Symposium on on the Foundations of Software Engineering}, pp. 16{--}26, 2019.}

\label{t:x28autobib_x22Yunho_Kim_and_Moonzoo_KimSCOREx3a_A_Scalable_Concolic_Testing_Tool_for_Reliable_Embedded_SoftwareIn_Procx2e_International_Symposium_on_on_the_Foundations_of_Software_Engineeringx2c_ppx2e_420x2dx2d4232011x22x29}\Autobibentry{Yunho Kim and Moonzoo Kim. SCORE: A Scalable Concolic Testing Tool for Reliable Embedded Software. In \textit{Proc. International Symposium on on the Foundations of Software Engineering}, pp. 420{--}423, 2011.}

\label{t:x28autobib_x22Casey_Kleinx2c_Matthew_Flattx2c_and_Robert_Bruce_FindlerRandom_Testing_for_Higherx2dorderx2c_Stateful_ProgramsIn_Procx2e_ACM_Conference_on_Objectx2dOriented_Programmingx2c_Systemsx2c_Languages_and_Applicationsx2c_ppx2e_555x2dx2d5662010x22x29}\Autobibentry{Casey Klein, Matthew Flatt, and Robert Bruce Findler. Random Testing for Higher{-}order, Stateful Programs. In \textit{Proc. ACM Conference on Object{-}Oriented Programming, Systems, Languages and Applications}, pp. 555{--}566, 2010.}

\label{t:x28autobib_x22Naoki_Kobayashix2c_Ryosuke_Satox2c_and_Hiroshi_UnnoPredicate_Abstraction_and_CEGAR_for_Higherx2dOrder_Model_CheckingIn_Procx2e_ACM_Conference_on_Programming_Language_Design_and_Implementationx2c_ppx2e_222x2dx2d2332011x22x29}\Autobibentry{Naoki Kobayashi, Ryosuke Sato, and Hiroshi Unno. Predicate Abstraction and CEGAR for Higher{-}Order Model Checking. In \textit{Proc. ACM Conference on Programming Language Design and Implementation}, pp. 222{--}233, 2011.}

\label{t:x28autobib_x22Pieter_Koopman_and_Rinus_PlasmeijerAutomatic_Testing_of_Higher_Order_FunctionsIn_Procx2e_Asian_Symposium_on_Programming_Languages_and_Systemsx2c_ppx2e_148x2dx2d1642006x22x29}\Autobibentry{Pieter Koopman and Rinus Plasmeijer. Automatic Testing of Higher Order Functions. In \textit{Proc. Asian Symposium on Programming Languages and Systems}, pp. 148{--}164, 2006.}

\label{t:x28autobib_x22Guodong_Lix2c_Indradeep_Ghoshx2c_and_Sreeranga_Px2e_RajanKLOVERx3a_A_Symbolic_Execution_and_Automatic_Test_Generation_Tool_for_Cx2bx2b_ProgramsIn_Procx2e_International_Conference_on_Computer_Aided_Verificationx2c_ppx2e_609x2dx2d6152011x22x29}\Autobibentry{Guodong Li, Indradeep Ghosh, and Sreeranga P. Rajan. KLOVER: A Symbolic Execution and Automatic Test Generation Tool for C++ Programs. In \textit{Proc. International Conference on Computer Aided Verification}, pp. 609{--}615, 2011.}

\label{t:x28autobib_x22Guodong_Lix2c_Peng_Lix2c_Geof_Sawayax2c_Ganesh_Gopalakrishnanx2c_Indradeep_Ghoshx2c_and_Sreeranga_Px2e_RajanGKLEEx3a_Concolic_Verification_and_Test_Generation_for_GPUsIn_Procx2e_Symposium_on_Principles_and_Practice_of_Parallel_Programmingx2c_ppx2e_215x2dx2d2242012x22x29}\Autobibentry{Guodong Li, Peng Li, Geof Sawaya, Ganesh Gopalakrishnan, Indradeep Ghosh, and Sreeranga P. Rajan. GKLEE: Concolic Verification and Test Generation for GPUs. In \textit{Proc. Symposium on Principles and Practice of Parallel Programming}, pp. 215{--}224, 2012.}

\label{t:x28autobib_x22Lian_Lix2c_Yi_Lux2c_and_Jingling_XueDynamic_Symbolic_Execution_for_PolymorphismIn_Procx2e_International_Conference_on_Compiler_Constructionx2c_ppx2e_120x2dx2d1302017x22x29}\Autobibentry{Lian Li, Yi Lu, and Jingling Xue. Dynamic Symbolic Execution for Polymorphism. In \textit{Proc. International Conference on Compiler Construction}, pp. 120{--}130, 2017.}

\label{t:x28autobib_x22Phxfac_Nguyx1ec5nx2c_Sam_Tobinx2dHochstadtx2c_and_David_Van_HornSoft_Contract_VerificationIn_Procx2e_ACM_International_Conference_on_Functional_Programmingx2c_ppx2e_139x2dx2d1522014x22x29}\Autobibentry{Ph\'{u}c Nguy\~{\^{e}}n, Sam Tobin{-}Hochstadt, and David Van Horn. Soft Contract Verification. In \textit{Proc. ACM International Conference on Functional Programming}, pp. 139{--}152, 2014.}

\label{t:x28autobib_x22Phxfac_Nguyx1ec5nx2c_Sam_Tobinx2dHochstadtx2c_and_David_Van_HornRelatively_complete_counterexamples_for_higherx2dorder_programsx2eIn_Procx2e_ACM_Conference_on_Programming_Language_Design_and_Implementationx2c_ppx2e_446x2dx2d4562015x22x29}\Autobibentry{Ph\'{u}c Nguy\~{\^{e}}n, Sam Tobin{-}Hochstadt, and David Van Horn. Relatively complete counterexamples for higher{-}order programs. In \textit{Proc. ACM Conference on Programming Language Design and Implementation}, pp. 446{--}456, 2015.}

\label{t:x28autobib_x22Phxfac_Nguyx1ec5nx2c_Sam_Tobinx2dHochstadtx2c_and_David_Van_HornHigher_order_symbolic_execution_for_contract_verification_and_refutationx2eJournal_of_Functional_Programmingx2827x29x2c_ppx2e_e3x3a1x2dx2de3x3a542017x22x29}\Autobibentry{Ph\'{u}c Nguy\~{\^{e}}n, Sam Tobin{-}Hochstadt, and David Van Horn. Higher order symbolic execution for contract verification and refutation. \textit{Journal of Functional Programming}(27), pp. e3:1{--}e3:54, 2017.}

\label{t:x28autobib_x22Peterx2dMichael_Osera_and_Steve_ZdancewicTypex2dandx2dExamplex2dDirected_Program_Synthesis2015x22x29}\Autobibentry{Peter{-}Michael Osera and Steve Zdancewic. Type{-}and{-}Example{-}Directed Program Synthesis. 2015.}

\label{t:x28autobib_x22Adrixe1n_Palacios_and_Germxe1n_VidalConcolic_Execution_in_Functional_Programming_by_Program_InstrumentationIn_Procx2e_International_Symposium_on_Logicx2dBased_Program_Synthesis_and_TRansformationx2c_ppx2e_277x2dx2d2922015x22x29}\Autobibentry{Adri\'{a}n Palacios and Germ\'{a}n Vidal. Concolic Execution in Functional Programming by Program Instrumentation. In \textit{Proc. International Symposium on Logic{-}Based Program Synthesis and TRansformation}, pp. 277{--}292, 2015.}

\label{t:x28autobib_x22Niloofar_Razavix2c_Franjo_Ivanx10dix107x2c_Vineet_Kahlonx2c_and_Aarti_GuptaConcurrent_Test_Generation_Using_Concolic_Multix2dtrace_AnalysisIn_Procx2e_Asian_Symposium_on_Programming_Languages_and_Systemsx2c_ppx2e_239x2dx2d2552012x22x29}\Autobibentry{Niloofar Razavi, Franjo Ivan\v{c}i\'{c}, Vineet Kahlon, and Aarti Gupta. Concurrent Test Generation Using Concolic Multi{-}trace Analysis. In \textit{Proc. Asian Symposium on Programming Languages and Systems}, pp. 239{--}255, 2012.}

\label{t:x28autobib_x22Marija_Selakovicx2c_Michael_Pradelx2c_Rezwana_Karimx2c_and_Frank_TipTest_Generation_for_Higherx2dorder_Functions_in_Dynamic_LanguagesProceedings_of_the_ACM_on_Programming_Languages_x28OOPSLAx29_2x2c_ppx2e_161x3a1x2dx2d161x3a272018x22x29}\Autobibentry{Marija Selakovic, Michael Pradel, Rezwana Karim, and Frank Tip. Test Generation for Higher{-}order Functions in Dynamic Languages. \textit{Proceedings of the ACM on Programming Languages (OOPSLA)} 2, pp. 161:1{--}161:27, 2018.}

\label{t:x28autobib_x22Koushik_Sen_and_Gul_AghaCUTE_and_jCUTEx3a_Concolic_Unit_Testing_and_Explicit_Path_Modelx2dchecking_ToolsIn_Procx2e_International_Conference_on_Computer_Aided_Verificationx2c_ppx2e_419x2dx2d4232006x22x29}\Autobibentry{Koushik Sen and Gul Agha. CUTE and jCUTE: Concolic Unit Testing and Explicit Path Model{-}checking Tools. In \textit{Proc. International Conference on Computer Aided Verification}, pp. 419{--}423, 2006.}

\label{t:x28autobib_x22Koushik_Senx2c_Swaroop_Kalasapurx2c_Brutch_Tasneemx2c_and_Simon_GibbsJalangix3a_A_Selective_Recordx2dreplay_and_Dynamic_Analysis_Framework_for_JavaScriptIn_Procx2e_International_Symposium_on_on_the_Foundations_of_Software_Engineeringx2c_ppx2e_488x2dx2d4982013x22x29}\Autobibentry{Koushik Sen, Swaroop Kalasapur, Brutch Tasneem, and Simon Gibbs. Jalangi: A Selective Record{-}replay and Dynamic Analysis Framework for JavaScript. In \textit{Proc. International Symposium on on the Foundations of Software Engineering}, pp. 488{--}498, 2013.}

\label{t:x28autobib_x22Koushik_Senx2c_Darko_Marinovx2c_and_Gul_AghaCUTEx3a_A_Concolic_Unit_Testing_Engine_for_CIn_Procx2e_International_Symposium_on_on_the_Foundations_of_Software_Engineeringx2c_ppx2e_263x2dx2d2722005x22x29}\Autobibentry{Koushik Sen, Darko Marinov, and Gul Agha. CUTE: A Concolic Unit Testing Engine for C. In \textit{Proc. International Symposium on on the Foundations of Software Engineering}, pp. 263{--}272, 2005.}

\label{t:x28autobib_x22Vincent_Stx2dAmourx2c_Daniel_Felteyx2c_Spencer_Px2e_Florencex2c_Shux2dHung_Youx2c_and_Robert_Bx2e_FindlerHerbarium_Racketensisx3a_A_Stroll_Through_the_WoodsProceedings_of_the_ACM_on_Programming_Languages_x28ICFPx29_1x2c_ppx2e_1x3a1x2dx2d1x3a152017x22x29}\Autobibentry{Vincent St{-}Amour, Daniel Feltey, Spencer P. Florence, Shu{-}Hung You, and Robert B. Findler. Herbarium Racketensis: A Stroll Through the Woods. \textit{Proceedings of the ACM on Programming Languages (ICFP)} 1, pp. 1:1{--}1:15, 2017.}

\label{t:x28autobib_x22Youcheng_Sunx2c_Min_Wux2c_Wenjie_Ruanx2c_Xiaowei_Huangx2c_Marta_Kwiatkowskax2c_and_Daniel_KroeningConcolic_Testing_for_Deep_Neural_NetworksIn_Procx2e_ACMx2fIEEE_International_Conference_on_Automated_Software_Engineeringx2c_ppx2e_109x2dx2d1192018x22x29}\Autobibentry{Youcheng Sun, Min Wu, Wenjie Ruan, Xiaowei Huang, Marta Kwiatkowska, and Daniel Kroening. Concolic Testing for Deep Neural Networks. In \textit{Proc. ACM/IEEE International Conference on Automated Software Engineering}, pp. 109{--}119, 2018.}

\label{t:x28autobib_x22Tachio_TerauchiDependent_Types_from_CounterexamplesIn_Procx2e_ACM_Symposium_on_Principles_of_Programming_Languagesx2c_ppx2e_119x2dx2d1302010x22x29}\Autobibentry{Tachio Terauchi. Dependent Types from Counterexamples. In \textit{Proc. ACM Symposium on Principles of Programming Languages}, pp. 119{--}130, 2010.}

\label{t:x28autobib_x22Nikolai_Tillmann_and_Jonathan_de_HalleuxPexx3a_White_Box_Test_Generation_for_x2eNETIn_Procx2e_International_Conference_on_Tests_and_Proofsx2c_ppx2e_134x2dx2d1532008x22x29}\Autobibentry{Nikolai Tillmann and Jonathan de Halleux. Pex: White Box Test Generation for .NET. In \textit{Proc. International Conference on Tests and Proofs}, pp. 134{--}153, 2008.}

\label{t:x28autobib_x22Sam_Tobinx2dHochstadt_and_Matthias_FelleisenLogical_Types_for_Untyped_LanguagesIn_Procx2e_ACM_International_Conference_on_Functional_Programmingx2c_ppx2e_117x2dx2d1282010x22x29}\Autobibentry{Sam Tobin{-}Hochstadt and Matthias Felleisen. Logical Types for Untyped Languages. In \textit{Proc. ACM International Conference on Functional Programming}, pp. 117{--}128, 2010.}

\label{t:x28autobib_x22Sam_Tobinx2dHochstadt_and_David_Van_HornHigherx2dOrder_Symbolic_Execution_via_ContractsIn_Procx2e_ACM_Conference_on_Objectx2dOriented_Programmingx2c_Systemsx2c_Languages_and_Applicationsx2c_ppx2e_537x2dx2d5542012x22x29}\Autobibentry{Sam Tobin{-}Hochstadt and David Van Horn. Higher{-}Order Symbolic Execution via Contracts. In \textit{Proc. ACM Conference on Object{-}Oriented Programming, Systems, Languages and Applications}, pp. 537{--}554, 2012.}

\label{t:x28autobib_x22Emina_Torlak_and_Rastislav_BodikGrowing_Solverx2daided_Languages_with_RosetteIn_Procx2e_International_Symposium_on_New_Ideasx2c_New_Paradigmsx2c_and_Reflections_on_Programming_and_Softwarex2c_Onwardx21x2c_ppx2e_135x2dx2d1522013x22x29}\Autobibentry{Emina Torlak and Rastislav Bodik. Growing Solver{-}aided Languages with Rosette. In \textit{Proc. International Symposium on New Ideas, New Paradigms, and Reflections on Programming and Software, Onward!}, pp. 135{--}152, 2013.}

\label{t:x28autobib_x22Emina_Torlak_and_Rastislav_BodikA_Lightweight_Symbolic_Virtual_Machine_for_Solverx2dAided_Host_LanguagesIn_Procx2e_ACM_Conference_on_Programming_Language_Design_and_Implementationx2c_ppx2e_530x2dx2d5412014x22x29}\Autobibentry{Emina Torlak and Rastislav Bodik. A Lightweight Symbolic Virtual Machine for Solver{-}Aided Host Languages. In \textit{Proc. ACM Conference on Programming Language Design and Implementation}, pp. 530{--}541, 2014.}

\label{t:x28autobib_x22Andrew_Kx2e_Wright_and_Robert_CartwrightA_Practical_Soft_Type_System_for_SchemeACM_Transactions_on_Programming_Languages_and_Systems_19x281x29x2c_ppx2e_87x2dx2d1521997x22x29}\Autobibentry{Andrew K. Wright and Robert Cartwright. A Practical Soft Type System for Scheme. \textit{ACM Transactions on Programming Languages and Systems} 19(1), pp. 87{--}152, 1997.}\end{SingleColumn}\end{AutoBibliography}

\postDoc
\end{document}